\newcommand{\wt}{\mathrm{wt}}
\theoremstyle{plain}
\theoremstyle{plain}
\newtheorem{theorem}{Theorem}[section]
\theoremstyle{plain}
\theoremstyle{plain}
\newtheorem{theorem-seq}{Theorem}
 \theoremstyle{definition}
 \newtheorem{definition}[theorem]{Definition}
 \theoremstyle{definition}
 \theoremstyle{definition}
 \newtheorem{remark}[theorem]{Remark}
 \theoremstyle{definition}
 \newtheorem{rem}[theorem]{Remark}
 \theoremstyle{plain}
 \newtheorem{lemma}[theorem]{Lemma}  
 \theoremstyle{plain}
 \newtheorem{lem}[theorem]{Lemma}  
 \theoremstyle{plain}
 \newtheorem{fact}[theorem]{Fact}
 \theoremstyle{plain}
 \newtheorem{corollary}[theorem]{Corollary}  
 \theoremstyle{plain}
 \theoremstyle{definition}
 \newtheorem*{acknowledgement*}{Acknowledgement}
 \theoremstyle{plain}
 \theoremstyle{plain}
\theoremstyle{plain}
\def\notes{1}
 \newcommand{\nnote}[1]{\ifnum\notes=1{{\sf\color{blue} [Noga comment: #1]}}\fi}
\newcommand{\cC}{C} 
\newcommand{\F}{{\mathbb F}}
\newcommand{\inset}[1]{\left\{#1\right\}}
\newcommand{\inparen}[1]{\left(#1\right)}
\newcommand{\suchthat}{\,:\,}
\newcommand{\polylog}{\mathrm{polylog}}
\newcommand{\poly}{\mathrm{poly}}
\newcommand{\dist}{\ensuremath{\operatorname{dist}}}
\newcommand{\eps}{\varepsilon}
\renewcommand{\epsilon}{\varepsilon}
\newcommand{\subp}[2]{\ensuremath{#1}_{(#2)}}
\newcommand{\Tpre}{\ensuremath{T_{\text{pre}}}}
\newcommand{\Ttpre}{\ensuremath{\widetilde{T}_{\text{pre}}}}
\newcommand{\defined}{\ensuremath{:=}}
\algnewcommand{\LineComment}[1]{\State \(\triangleright\) #1}
    \global\long\def\inn{in}
  \global\long\def\out{out}
\title{Local List Recovery of High-rate Tensor Codes \& Applications}
\author{Brett Hemenway%
\thanks{\texttt{fbrett@cis.upenn.edu}. Department of Computer Science, University of Pennsylvania}
\and Noga Ron-Zewi%
\thanks{\texttt{nogazewi@cs.bgu.ac.il}. Department of Computer Science, Ben-Gurion University }
\and Mary Wootters%
\thanks{\texttt{marykw@stanford.edu}. Departments of Computer Science and Electrical Engineering, Stanford University}
}
\begin{document}

\maketitle

\begin{abstract}
In this work, we give the first construction of {\em high-rate}  locally list-recoverable codes.  List-recovery has been an extremely useful building block in coding theory, and our motivation is to use these codes as such a building block.
 In particular, our construction gives the first {\em capacity-achieving} locally list-decodable codes (over constant-sized alphabet); the first {\em capacity achieving} globally list-decodable codes with nearly linear time list decoding algorithm (once more, over constant-sized alphabet); and a randomized construction of binary codes on the Gilbert-Varshamov bound that can be uniquely decoded in near-linear-time, with higher rate than was previously known.

Our techniques are actually quite simple, and are inspired by an approach of Gopalan, Guruswami, and Raghavendra (Siam Journal on Computing, 2011) for list-decoding tensor codes.  We show that tensor powers of (globally) list-recoverable codes are `approximately' locally list-recoverable, and that the `approximately' modifier may be removed by pre-encoding the message with a suitable locally decodable code.  Instantiating this with known constructions
of high-rate globally list-recoverable codes
and high-rate locally decodable codes finishes the construction.
\end{abstract}

\section{Introduction}\label{sec:intro}

\em List-recovery \em refers to the problem of decoding error correcting codes from ``soft" information.
More precisely, given a \em code \em $\cC: \Sigma^k \to \Sigma^n$, which maps length-$k$ \em messages \em to length-$n$ \em codewords, \em
 an $(\alpha, \ell, L)$-list-recovery algorithm for $C$ is provided with a sequence
of lists $S_1,\ldots, S_n \subset \Sigma$ of size at most $\ell$ each, and is tasked with efficiently returning all messages $x \in \Sigma^k$ so that $C(x)_i \notin S_i$ for at most $\alpha$ fraction of the coordinates $i$; the guarantee is that there are no more than $L$ such messages.  The goal is to design codes $\cC$ which simultaneously admit such algorithms, and which also have other desirable properties, like high \em rate \em (that is, the ratio $k/n$, which captures how much information can be sent using the code) or small \em alphabet size \em $|\Sigma|$.
List-recovery is a generalization of \em list-decoding, \em which is the situation when the lists $S_i$ have size one: we refer to $(\alpha, 1, L)$-list-recovery as $(\alpha, L)$-list-decoding.

	List recoverable codes were first studied in the context of list-decoding and soft-decoding.  The celebrated Guruswami-Sudan list-decoding algorithm~\cite{GS99} is in fact a list-recovery algorithm, as are several more recent list-decoding algorithms~\cite{GR08_folded_RS,GW11,Kop15,GX13}.
	Initially, list recoverable codes were used as stepping stones towards constructions of list decodable and uniquely decodable codes \cite{GI01,GI02,GI03,GI04}.
	Since then, list recoverable codes have found additional applications in the areas of compressed sensing, combinatorial group testing, and hashing \cite{INR10,NPR12,GNPRS13,HIOS}.

\em Locality \em is another frequent desideratum in coding theory.  Loosely, an algorithm is ``local" if information about a single coordinate $x_i$ of a message $x$ of $\cC$ can be determined locally from only a few coordinates of a corrupted version of $C(x)$.  Locality, and in particular local list-decoding, has been implicit in theoretical computer science for decades:
for example, local list-decoding algorithms are at the heart of algorithms in cryptography~\cite{GL89}, learning theory~\cite{KM93}, and hardness amplification and derandomization~\cite{STV01}.

The actual definition of a locally list-recoverable (or locally list-decodable) code requires some subtlety: we want to return a list of answers, and we want the algorithm to be local (having to do with a single message coordinate), but returning a list of possible symbols in $\Sigma$ for a single message coordinate is pretty useless if that was our input to begin with (at least if the code is systematic in which case the message coordinates are a subset of the codeword coordinates).  Instead, we require that a local list-recovery algorithm returns a list $A_1,\ldots, A_L$ of randomized local algorithms.  Each of these algorithms takes an index $i \in [k]$ as input, and has oracle access to the lists $S_1,\ldots, S_n$.  The algorithm then makes at most $Q$ queries to this oracle (that is, it sees at most $Q$ different lists $S_i$), and must return a guess for $x_i$, where $x$ is a message whose encoding $C(x)$ agrees with many of the lists.  The guarantee is that for all such $x$---that is, for all $x$ whose encoding $C(x)$ agrees with many of the lists---there exists (with high probability) some $A_j$ so that for all $i$, 
 $A_j(i) = x_i$ with probability at least $2/3$.  The parameter $Q$ is called the \em query complexity \em of the local list-recovery algorithm.

One reason to study local list-recoverability is that list-recovery is a very useful building block throughout coding theory.  In particular, the problem of constructing \textbf{high rate locally list-recoverable codes} (of rate arbitrarily close to $1$, and in particular non-decreasing as a function of $\ell$) has been on the radar for a while, because such codes would have implications in local list-decoding, global list-decoding, and classical unique decoding.

In this work, we give the first constructions of high-rate locally list-recoverable codes.  As promised, these lead to several applications throughout coding theory.  Moreover, our construction is actually quite simple.  Our approach is inspired by the list-decoding algorithm of~\cite{GGR} for tensor codes, and our main observation is that this algorithm---with a few tweaks---can be made local.

\subsection{Results}
We highlight our main results below---we will elaborate more on these results and their context within related literature next in Section~\ref{sec:related}.
\begin{description}
\item[High-rate local list-recovery.]
Our main technical contribution is the first constructions of high-rate locally list-recoverable codes: Theorems~\ref{thm:nonexplicit-high-rate-local-list-rec} and \ref{thm:explicit-high-rate-local-list-rec} give the formal statements.  Theorem~\ref{thm:nonexplicit-high-rate-local-list-rec} can guarantee high-rate  list recovery with query complexity $n^{1/t}$ (for constant $t$, say $n^{0.001}$), constant alphabet size and \em constant \em output list size, although without an explicit construction or an efficient list recovery algorithm.  Theorem~\ref{thm:explicit-high-rate-local-list-rec} on the other hand gives an explicit and efficient version, at the cost of a slightly super-constant output list size (which depends on $\log^*n$).

For those familiar with the area, it may be somewhat surprising that this was not known before: indeed, as discussed below in Section~\ref{sec:related}, we know of locally list recoverable codes (of low rate), and we also know of high-rate (globally) list-recoverable codes. One might think that our result is lurking implicitly in those earlier works.
However, it turns out that it is not so simple: as discussed below, existing techniques for locally or globally list-recoverable codes do not seem to work for this problem.  Indeed, some  of those prior works~\cite{HW15, KMRS15-STOC, GKORS17} (which involve the current authors) began with the goal of obtaining high-rate locally list-recoverable codes and ended up somewhere else.

This raises the question: why might one seek high-rate locally list-recoverable error correcting codes in the first place?  The motivation is deeper than a desire to add adjectives in front of ``error correcting codes."  As we will see below, via a number of reductions that already exist in the literature, such codes directly lead to improvements for several other fundamental problems in coding theory, including fast or local algorithms for list and unique decoding.

\item[Capacity-achieving locally list-decodable codes.]
The first such reduction 
is an application of an expander-based technique of Alon, Edmunds, and Luby~\cite{AEL95}, which allows us to turn the high-rate locally list-recoverable codes into \em capacity achieving \em locally list-decodable (or more generally, locally list recoverable) codes.  Our main results are stated as Theorems~\ref{thm:nonexplicit-cap-local-list-rec} and \ref{thm:explicit-cap-local-list-rec}.

As before, Theorem~\ref{thm:nonexplicit-cap-local-list-rec} gives capacity achieving locally list decodable codes
with query complexity $n^{0.001}$ (say), constant alphabet size and constant  output list size, although without an explicit construction or efficient list decoding algorithms.
Theorem~\ref{thm:explicit-cap-local-list-rec} on the other hand gives explicit and efficiently list decodable codes, and a trade-off between query complexity, alphabet size, and output list size.  Specifically, these codes obtain query complexity $Q = n^{1/t}$ with an output list size and an alphabet size that grow doubly exponentially with $t$ (and output list size depends additionally on $
\log^*n$).
In particular, if we choose $t$ to be constant, we obtain query complexity $n^{1/t}$, with constant alphabet size and nearly-constant output list size.  We may also choose to take $t$ to be very slowly growing, and this yields query complexity $n^{o(1)}$, with output list and alphabet size $n^{o(1)}$ as well. Prior to this work, no construction of capacity achieving locally list decodable codes with query-complexity $o(n)$ was known.

\item[Near-linear time capacity-achieving list-decodable codes.]
Given an efficiently list decodable capacity achieving locally list-decodable code (as given by Theorem~\ref{thm:explicit-cap-local-list-rec} mentioned above), it is straightforward to construct fast algorithms for global list-decoding the same code.  Indeed, we just repeat the local decoding algorithm (which can be done in time $n^{O(1/t)}$) a few times, for all $n$ coordinates, and take majority vote at each coordinate.  Thus, our previous result implies
explicit, capacity-achieving, list-decodable codes (or more generally, list recoverable codes) that can be (globally) list-decoded (or list-recovered) in time $n^{1 + O(1/t)}$.

This result is reported in Theorem~\ref{thm:near-linear-cap-list-rec}.  As with the previous point, this result actually allows for a trade-off: we obtain either decoding time $N^{1.001}$ (say) with constant alphabet size and near-constant output list size, or
 decoding time $n^{1 + o(1)}$ at the cost of increasing the alphabet and output list size to $n^{o(1)}$. Previous capacity achieving list-decoding algorithms required at least quadratic time for recovery.

\item[Near-linear time unique decoding up to the Gilbert Varshamov bound.]
Via a technique of Thommesen~\cite{Thomm83} and Guruswami and Indyk~\cite{GI04}, our near-linear time capacity-achieving list-recoverable codes give a randomized construction of low-rate (up to $0.02$) binary codes approaching the \em Gilbert-Varshamov \em (GV) bound, which admit near-linear time ($n^{1 + o(1)}$) algorithms for unique decoding up to half their distance.  The formal statement is in Theorem~\ref{thm:near-linear-GV}.  Previous constructions which could achieve this either required at least quadratic decoding time, or else did not work for rates larger than $10^{-4}$.
\end{description}

Our approach (discussed more below) is modular; given as an ingredient any (globally) high-rate list-recoverable code
(with a polynomial time recovery algorithm), it yields high-rate (efficiently) locally list-recoverable code.  To achieve the results advertised above, we instantiate this with either a random (non-efficient) linear code or with
 the (efficient) Algebraic Geometry (AG) subcodes of~\cite{GK14}.  Any improvements in these ingredient codes (for example, in the output list size of AG codes, which is near-constant but not quite) would translate immediately into improvements in our constructions.

\paragraph{Organization.}  In Section~\ref{sec:related} below, we discuss related work and put our results in context, followed by an overview of our techniques in Section~\ref{sec:techniques}.  We set up notation and definitions in Section~\ref{sec:prelims}.  Our main technical contribution is a construction of high-rate locally list-recoverable codes, and this is handled in Section~\ref{sec:locallistrecovery}.  In Section~\ref{sec:caplldc}, we discuss the implication to capacity-achieving local list decoding.  Finally, Section~\ref{sec:lintime} discusses our applications to near-linear-time decoding algorithms: Section~\ref{subsec:near-linear-cap-list-dec} presents our results for global capacity-achieving list-decoding, and Section~\ref{subsec:near-linear-GV} presents our results for unique decoding up to the Gilbert-Varshamov bound.

\section{Related work}\label{sec:related}
As mentioned above,  list decoding and recovery, local decoding, and local list decoding and recovery, have a long and rich history in theoretical computer science.  We mention here  the results that are most directly related to ours mentioned above.

\paragraph{High-rate local list recovery.}  Our main technical contribution is the construction of high-rate locally list-recoverable codes.
There are two lines of work that are most related to this: the first is on local list recovery, and the second on high-rate (globally) list-recoverable codes.

Local list-decoding (which is a special case of local list recovery) first arose outside of coding theory, motivated by applications in complexity theory.  For example, the Goldreich-Levin theorem in cryptography and the Kushilevitz-Mansour algorithm in learning theory are a local list-decoding algorithm for Hadamard codes.  Later, Sudan, Trevisan and Vadhan~\cite{STV01}, motivated by applications in pseudorandomness, gave an algorithm for locally list-decoding Reed-Muller codes.
Neither Hadamard codes nor the Reed-Muller codes of~\cite{STV01} are high-rate.   However, similar ideas can be used to locally list-decode \em lifted codes \em \cite{GK16}, and \em multiplicity codes \em \cite{Kop15}, which can be seen as high-rate variants of Reed-Muller codes. These algorithms work up to the so-called \em Johnson bound. \em

Briefly, the Johnson bound says that a code of distance $\delta$ is $(\alpha, L)$-list-decodable, for reasonable $L$, when $\alpha \leq 1 - \sqrt{1 - \delta}$.  This allows for high rate list decodable codes when $\delta$ is small, but there exist codes which are more list-decodable: the \em list-decoding capacity theorem \em implies that there are codes of distance $\delta$ which are $(\alpha, L)$-list-decodable for $\alpha$ approaching the distance $\delta$.  The ``capacity-achieving" list-decodable codes that we have been referring to are those which meet this latter result, which turns out to be optimal.

Like many list-decoding algorithms, the algorithms of \cite{STV01, Kop15, GK16} can be used for list-recovery as well (indeed, this type of approach was recently used in \cite{GKORS17} to obtain a local list-recovery algorithm for Reed-Muller codes.) However, as mentioned above they only work up to the Johnson bound for list-decoding, and this holds for list-recovery as well.  However, for list-recovery, the difference between the Johnson bound and capacity is much more stark.  Quantitatively,  for $(\alpha, \ell, L)$-list-recovery, the Johnson bound requires $\alpha \leq 1 - \sqrt{\ell(1 - \delta)}$, which is meaningless unless $\delta$ is very large; this requires the rate of the code to be small, less than $1/\ell$.   In particular, these approaches do not give high-rate codes for list-recovery, and the Johnson bound appears to be a fundamental bottleneck.

The second line of work relevant to high-rate local list-recovery is that on high-rate \em global \em list-recovery.   Here, there are two main approaches.  The first is a line of work on capacity achieving list-decodable codes (also discussed more below).  In many cases, the capacity achieving list-decoding algorithms for these codes are also high-rate list-recovery algorithms~\cite{GR08_folded_RS,GW11,Kop15,GX13}.  These algorithms are very global: they are all based on finding some interpolating polynomial, and finding this polynomial requires querying almost all of the coordinates.  Thus, it is not at all obvious how to tweak these sorts of algorithms to achieve locally list-recoverable codes.  The other line of work on high-rate global list-recovery is that of~\cite{HW15}, which studies high-rate list-recoverable expander codes.  While that algorithm is not explicitly local, it's not as clearly global as those previously mentioned (indeed, expander codes are known to have some locality properties~\cite{HOW13}).  However, that work could only handle list-recovery with no errors---that is, it returns codewords that agree with {\em all} of the lists $S_i$, rather than a large fraction of them---and adapting it to handle errors seems like a challenging task.

Thus, even with a great deal of work on locally list recoverable codes, and on high-rate globally list-recoverable codes, it was somehow not clear how to follow those lines of work to obtain high-rate locally list-recoverable codes.   Our work, which does give high-rate locally list-recoverable codes, follows a different approach, based on the techniques of~\cite{GGR} for list-decoding tensor codes.  In fact, given their ideas and a few other ingredients, our solution is actually quite simple!  We discuss our approach in more detail in Section~\ref{sec:techniques}.

\paragraph{Capacity achieving locally list decodable codes.}
As mentioned above, one reason to seek high-rate codes is because of a transformation of Alon, Edmunds, and Luby~\cite{AEL95}, recently highlighted in~\cite{KMRS15-STOC}, which can, morally speaking, turn any high-rate code with a given property into a capacity achieving code with the same property.\footnote{We note however that this transformation does not apply to the property of list decoding, but just list recovery, and therefore we cannot use existing constructions of high-rate locally list decodable codes \cite{Kop15, GK16} as a starting point for this transformation.}
This allows us to obtain \emph{capacity achieving} locally list-decodable (or more generally, locally list recoverable) codes.
This trick has been used frequently over the years~\cite{GI01,GI02,GI03,GI04,HW15,KMRS15-STOC,GKORS17}, and in particular \cite{GKORS17} used it for local list recovery.  We borrow this result from them, and this immediately gives our  capacity achieving locally list-decodable  codes.  Once we have these, they straightforwardly extend to near-linear time capacity-achieving (globally) list-decodable (or more generally, locally list recoverable) codes, simply by repeatedly running the local algorithm on each coordinate.

\paragraph{Capacity-achieving list-decodable codes.}
We defined list-decodability above as a special case of list-recovery, but it is in fact much older.  List-decodability has been studied since the work of Elias and Wozencraft \cite{E57,W58} in the late 1950s, and the combinatorial limits are well understood.
The \em list-decoding capacity theorem, \em mentioned earlier, states that there exist codes of rate approaching $1 - H_q(\alpha)$ which are $(\alpha, L)$-list-decodable for small list size $L$, where $H_q(\alpha)$ is the $q$-ary entropy function (when $q$ is large we have $1-H_q(\alpha)\approx 1-\alpha$).  Moreover, any code of rate larger than that must have exponentially large list size.

The existence direction of the list-decoding capacity theorem follows from a random coding argument, and it wasn't until the Folded Reed-Solomon Codes of Guruswami and Rudra~\cite{GR08_folded_RS} that we had explicit constructions of codes which achieved list-decoding capacity.  Since then, there have been many more constructions~\cite{Gur10, GW11, Kop15, DL12, GX12, GX13, GK14}, aimed at reducing the alphabet size, reducing the list size, and improving the speed of the recovery algorithm.  We show the state-of-the-art in Table~\ref{table:listdec} below, along with our results (Theorem~\ref{thm:near-linear-cap-list-rec}).

\def\arraystretch{1.5}
\begin{table}[h]
\footnotesize
\begin{tabular}{|p{3cm}|p{2cm}|c|c|c|c|}
\hline
Code & Reference & Construction& Alphabet size & List size & Decoding time \\
\hline
Folded RS codes, derivative codes & \cite{GR08_folded_RS,GW11,Kop15} & Explicit & $\poly(n)$ & $\poly(n)$ & $n^{O(1/\eps)}$ \\
\hline
Folded RS subcodes & \cite{DL12} & Explicit & $\poly(n)$ & $O(1)$ & $n^{2} $ \\
\hline
(Folded) AG subcodes & \cite{GX12,GX13} & Monte Carlo & $O(1)$ & $O(1)$ & $n^{c} $ \\
\hline
AG subcodes&\cite{GK14} & Explicit & $O(1)$ & $\exp(\exp( (\log^*n)^2))$ & $n^{c} $ \\
\hline
Tensor codes&Theorem~\ref{thm:near-linear-cap-list-rec} & Explicit & $O(1)$ & $\exp(\exp(\exp(\log^*n)))$ & $n^{1.001}$ \\
\hline
\end{tabular}
\caption{Constructions of list-decodable codes that enable $(\alpha,L)$ list decoding up to rate $\rho = 1 - H_q(\alpha) - \eps$, for constant $\eps$.  We have suppressed the dependence on $\eps$, except where it appears in the exponent on n in the decoding time.
Above, $c$ is an unspecified constant.  In the analysis of these works, it is required to take $c \geq 3$.  It may be that these approaches could be adapted (with faster linear-algebraic methods) to use a smaller constant $c$, but it is not apparent; in particular we cannot see how to take $c < 2$.}
\label{table:listdec}
\end{table}

\paragraph{Unique decoding up to the Gilbert-Varshamov bound.}
The \em Gilbert-Varshamov \em (GV) bound \cite{G52,V57}  is a classical achievability result in coding theory.  It states that there exist binary codes of relative distance $\delta \in (0,1)$ and rate $\rho$ approaching $1- H_2(\delta),$ where $H_2$ is the binary entropy function.  The proof is probabilistic: for example, it is not hard to see that a random linear code will do the trick.  However, finding \em explicit \em constructions of codes approaching the GV bound remains one of the most famous open problems in coding theory.  While we cannot find explicit constructions, we may hope for randomized constructions with efficient algorithms, and indeed this was achieved in the low-rate regime through a few beautiful ideas by Thommesen~\cite{Thomm83} and follow-up work by Guruswami and Indyk \cite{GI04}.

Thommesen gave an efficient randomized construction of \em concatenated codes \em approaching the GV bound.  Starting with a Reed-Solomon code over large alphabet, the construction is to concatenate each symbol with an independent random linear code.  Later, \cite{GI04} showed that these codes could in fact be efficiently decoded up to half their distance, in polynomial time, up to rates about $10^{-4}$.  Their idea was to use the list recovery properties of Reed-Solomon codes.  The algorithm is then to list decode the small inner codes by brute force, and to run the efficient list-recovery algorithm for Reed-Solomon on the output lists of the inner codes: the combinatorial result of Thommesen ensures that the output list will contain a message that corresponds to the transmitted codeword.

In their work, \cite{GI04} used the Guruswami-Sudan list recovery algorithm~\cite{GS99}.  After decades of work~\cite{A02, BB10,CH11,BHNW13,CJNSV15}, this algorithm can now be implemented to run in near-linear time, and so we already can achieve near-linear time unique decoding near the GV bound, up to rates about $10^{-4}$.
The reason for the bound on the rate is that the Guruswami-Sudan algorithm only works up to the aforementioned Johnson bound, which means it cannot tolerate as much error as capacity-achieving list-recoverable codes.  It was noted by Rudra \cite{R07a} that replacing the Reed-Solomon codes with a capacity achieving list recoverable code (such as folded Reed-Solomon codes) can improve this rate limit up to about $0.02$.   However, those capacity achieiving list recovery algorithms were slower (as in Table~\ref{table:listdec}), and this increases the running time back to at best quadratic.

The recent work \cite{GKORS17} also applied these techniques to give \em locally decodable codes \em approaching the Gilbert-Varshamov bound.  These have query complexity  $n^\beta$, and so in particular can be easily adapted to give a global decoding algorithm with running time $O(n^{1 + \beta})$.  However, the rate up to which the construction works approaches zero exponentially quickly in $1/\beta$.

Using exactly the same approach as these previous works, we may plug in our capacity achieving near-linear-time list-recooverable codes to obtain binary codes approaching the GV bound, which are uniquely decodable up to half their distance in time $n^{1 + o(1)}$, and which work with rate matching Rudra's, $\rho = 0.02$.

\begin{remark}
It is natural to ask whether our result can, like \cite{GKORS17}, give locally decodable codes on the GV bound of higher rate.   In fact, we do not know how to do this.  The main catch is that our locality guarantees are for local decoding rather than local correction.  That is, we can only recover the message symbols and not the codeword symbols, and consequently we do not know how to choose the message from the output list that corresponds to the unique closest codeword.  It is an interesting open question whether one can use our techniques to extend the results of \cite{GKORS17} to higher rates.
\end{remark}

\paragraph{List-decodability and local properties of tensor codes.}
As we elaborate on more below in Section~\ref{sec:techniques}, our codes are constructed by taking tensor products of existing constructions of globally list-recoverable codes.
Our approach is inspired by that of \cite{GGR}, who study the list-decodability of tensor codes, although they do not address locality.  It should be noted that the local \em testing \em properties of tensor codes have been extensively studied~\cite{BS-tensor,V05_tensor_product,CR05,DSW06,GM12,BV09,BV15,V11_tensor_robustness,M09_LTCs,V13}.
To the best of our knowledge, ours is the first work to study the local (list) \em decodability \em of tensor codes, rather than local testability.

\section{Overview of techniques}\label{sec:techniques}
Our main technical contribution is the construction of high-rate locally list-recoverable codes.  While these are  powerful objects, and result in new sub-linear and near-linear time algorithms for fundamental coding theoretic tasks, our techniques are actually quite simple (at least if we take certain previous works as a black box).  We outline our approach below.

Our main ingredient is \em tensor codes, \em and the analysis given by Gopalan, Guruswami, and Raghevendra in~\cite{GGR}.
Given a linear code $C: \F^k \to \F^n$, consider the \em tensor code \em $C \otimes C: \F^{k \times k} \to \F^{n \times n}$; we will define the tensor product formally in Definition~\ref{defn:tensor}, but for now, we will treat the codewords of  $C \otimes C$ as $n\times n$ matrices with the constraints that the rows and columns are all codewords of the original code $C$.

In \cite{GGR}, it is shown that the tensor code $C \otimes C$ is roughly as list-decodable as $C$ is.
That work was primarily focused on combinatorial results, but their techniques are algorithmic, and it is these algorithmic insights that we leverage here.
The algorithm is very simple: briefly, we imagine fixing some small combinatorial rectangle $S \times T \subseteq [n] \times [n]$ of ``advice."   Think of this advice as choosing the symbols of the codeword indexed by those positions.  By alternately list decoding rows and columns, it can be shown that this advice uniquely determines a codeword $c$ of $C \otimes C$.  Finally, iterating over all possible pieces of advice yields the final list.

Inspired by their approach,  our Main Technical Lemma~\ref{approx-local-list-rec} says that if $C$ is list-recoverable, then not only $C \otimes C$ is also list-recoverable, but in fact it is (approximately) locally list-recoverable.
To understand the intuition, let us describe the algorithm just for $C \otimes C$, although our actual codes will require a higher tensor power $C^{\otimes t}$.
Suppose that $C$ is list-recoverable with output list size $L$.
First, imagine fixing some advice $J:=(j_1,\ldots, j_m)\in [L]^m$ for some (small integer) parameter $m$.   This advice will determine an algorithm $\widetilde{A}_J$ which attempts to locally decode some message that corresponds to a close-by codeword $c$ of $C \otimes C$, and the list we finally return will be the list of all algorithms $\widetilde{A}_J$ obtained by iterating over all possible advice.

Now, we describe the randomized algorithm $\widetilde{A}_J$, on input $(i,i') \in [n] \times [n]$.\footnote{The algorithm $\widetilde{A}_J$ we describe decodes codeword bits instead of messages bits, but since the codes we use are systematic this algorithm can also decode message bits.}  Recall, $\widetilde{A}_J$ is allowed to query the input lists at every coordinate, and must produce a guess for the codeword value indexed by $(i,i')$.    First, $\widetilde{A}_J$ chooses $m$ random rows of $[n] \times [n]$.  These each correspond to codewords in $C$, and $\widetilde{A}_J$ runs $C$'s list-recovery algorithm on them to obtain lists $\mathcal{L}_1, \ldots,\mathcal{L}_m$ of size at most $L$ each.  Notice that this requires querying $ mn$ coordinates, which is roughly the square root of the length of the code (which is $n^2$).  Then, $\widetilde{A}_J$ will use the advice $j_1,\ldots,j_m$ to choose codewords from each of these lists, and we remember the $i'$'th symbol of each of these codewords.  Finally, $\widetilde{A}_J$ again runs $C$'s list-recovery algorithm on the $i'$'th column, to obtain another list $\mathcal{L}$.  Notice that our advice now has the same form as it does in \cite{GGR}: we have chosen a few symbols of a codeword of $C$.  Now $\widetilde{A}_J$ chooses the codeword in $\mathcal{L}$ that agrees the most with this advice.  The $i'$'th symbol of this codeword is $\widetilde{A}_J$'s guess for the $(i,i')$ symbol of the tensor codeword.

The above idea gives a code of length $n$ which is locally list-recoverable with query complexity on the order of $\sqrt{n}$.  This algorithm for $C \otimes C$ extends straightforwardly to $C^{\otimes t}$, with query complexity $n^{1/t}$.  The trade-off is that the output list-size also grows with $t$.  Thus, as we continue to take tensor powers, the locality improves, while the output list-size degrades; this allows for the trade-off between locality and output list-size mentioned in the introduction.

One issue with this approach is that this algorithm may in fact fail on a constant fraction of coordinates $(i,i')$ (e.g., when a whole column is corrupted).  To get around this, we first encode our message with a high-rate \em locally decodable code, \em before encoding it with the tensor code.  For this, we use the codes of~\cite{KMRS15-STOC}, which have rate that is arbitrarily close to $1$, and which are locally decodable with $\exp(\sqrt{\log n})$ queries.  This way, instead of directly querying the tensor code (which may give the wrong answer a constant fraction of the time), we instead use the outer locally decodable code to query the tensor code: this still does not use too many queries, but now it is robust to a few errors.

The final question is what to use as an inner code.  Because we are after high-rate codes we require $C$ to be high-rate (globally) list recoverable. Moreover, since the tensor operation inflates the output list size by quite a lot, we require $C$ to have small (constant or very slowly growing) output list size. Finally, we need $C$ to be linear to get a handle on the rate of the tensor product. One possible choice is random linear codes, and these give a non-explicit and non-efficient construction with constant output list size. Another possible choice is
 the Algebraic Geometry subcodes of~\cite{GX13, GK14} which give explicit and efficient construction but with slowly growing output list size.  However, we cannot quite use these latter codes as a black box, for two reasons.  First, the analysis in~\cite{GX13} only establishes list-\emph{decodability}, rather than list-recoverability.  Fortunately, list-recoverability follows from exactly the same argument as list-decodability.  Second, these codes are linear over a subfield, but are not themselves linear, while our arguments require linearity over the whole alphabet.  
 Fortunately, we can achieve the appropriate linearity by concatenating the AG subcode with a small list-recoverable linear code, which exists by a probabilistic argument.  We handle these modifications to the approach of~\cite{GX13,GK14} in Appendix~\ref{app:ag}.

To summarize, our high-rate locally list-recoverable code is given by these ingredients: to encode a message $x$, we first encode it with the \cite{KMRS15-STOC} locally decodable code.  Then we encode this with a $t$-fold tensor product of a random linear code or a modified AG subcode and we are done.
We go through the details of the argument sketched above in Section \ref{sec:locallistrecovery}; but first, we introduce some notation and formally define the notions that we will require.

\section{Definitions and preliminaries}\label{sec:prelims}

For a prime power $q$ we denote by $\F_{q}$ the finite field of $q$ elements.
For any finite alphabet~$\Sigma$ and for any pair of strings $x,y\in\Sigma^{n}$, the \textsf{relative
distance} between $x$ and $y$ is the fraction of coordinates $i \in[n]$ on which $x$ and $y$
differ, and is denoted by $\dist(x,y):= \left|\left\{ i\in\left[n\right]:x_{i}\ne y_{i}\right\} \right|/n$. For a positive integer $\ell$ we denote
by ${\Sigma \choose \ell}$ the set containing all subsets of $\Sigma$ of size $\ell$, and for any pair of strings $x \in \Sigma^n$
and $S \in  {\Sigma \choose \ell}^n$ we denote by $\dist(x,S)$ the fraction of coordinates $i \in [n]$ for which $x_i \notin S_i$, that is,
$\dist(x,S):= \left|\left\{ i\in\left[n\right]:x_{i}\notin S_i \right\} \right|/n$. Throughout the paper, we use $\exp(n)$ to denote $2^{\Theta(n)}$. Whenever we use $\log$, it is to the base $2$.

\subsection{Error-correcting codes}

Let $\Sigma$ be an alphabet and $k, n$ be positive integers (the \textsf{message length} and the
\textsf{block length}, respectively). A \textsf{code} is an injective map $C: \Sigma^k \to \Sigma^n$. The elements in the domain of $C$ are called \textsf{messages} and the elements in the image of $C$ are called \textsf{codewords}.   If $\F$ is a finite field and $\Sigma$ is a vector space over $\F$,
we say that $C$ is \textsf{$\F$-linear}
if it is a linear transformation over $\F$ between the $\F$-vector spaces $\Sigma^k$ and $\Sigma^n$.
If $\Sigma=\F$ and $C$ is $\F$-linear, we simply say that $C$ is \textsf{linear}. The \textsf{generating matrix} of a linear code $C: \F^k \to \F^n$ is the matrix $G \in \F^{n \times k}$ such that $C(x) = G\cdot x$ for any $x\in \F^k$.
We say that a code $C: \Sigma^k \to \Sigma^n$ is \textsf{systematic} if any message is the prefix of its image, that is, for
any $x \in \Sigma^k$ there exists $y \in \Sigma^{n-k}$ such that $C(x)=(x,y)$.

The \textsf{rate} of a code $C: \Sigma^k \to \Sigma^n$ is the ratio $\rho := \frac{k} {n}$.
The \textsf{relative distance} $\dist(C)$ of $C$ is the minimum $\delta >0$ such that for
every pair of distinct messages $x,y\in \Sigma^k$ it holds that
$\dist(C(x),C(y))\ge\delta$.
For a code $C: \Sigma^k \to \Sigma^n$ of relative
distance $\delta$, a given parameter $\alpha<\delta/2$, and a string
$w\in\Sigma^n$, the \textsf{problem of decoding from $\alpha$~fraction
of errors} is the task of finding the unique message $x\in \Sigma^k$ (if any) which
satisfies $\dist(C(x),w)\leq\alpha$.

The best known general trade-off between rate and distance of codes is the \textsf{Gilbert-Varshamov bound}, attained by random (linear) codes. For $x \in [0,1]$ let
$$
H_q(x)=x\log_q(q-1)+x\log_q(1/ x) +(1-x)\log_q(1/(1-x))
$$
 denote the $q$-ary entropy function.
\begin{theorem}[Gilbert-Varshamov (GV) bound,  \cite{G52,V57}]\label{thm:GV}
\label{Gilbert-Varshamov} For any prime power $q$, $0\leq \delta <1-\frac 1 q$, $0\leq \rho < 1-H_q(\delta)$, and sufficiently large $n$, a random linear code
$C: \F_q^{\rho n} \to \F_q^{n}$ of rate $\rho$ has relative distance at least $\delta$ with probability at least $1 - \exp(-n)$.
\end{theorem}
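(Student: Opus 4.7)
The plan is to run the standard union bound argument. Sample $G \in \F_q^{n \times \rho n}$ uniformly at random and let $C(x) = Gx$; because $C$ is linear, its relative distance equals the minimum relative weight of a nonzero codeword, so it suffices to bound the probability that some nonzero message $x \in \F_q^{\rho n} \setminus \{0\}$ produces a codeword $Gx$ of Hamming weight less than $\delta n$.

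First I would observe the key distributional fact: for every fixed nonzero $x$, the vector $Gx$ is uniformly distributed over $\F_q^n$. This follows from the fact that $Gx = \sum_j x_j G_{\cdot, j}$ where the columns of $G$ are independent uniform vectors in $\F_q^n$; if $x_{j^\star} \ne 0$, then for any fixing of the other columns, the map $G_{\cdot, j^\star} \mapsto G x$ is a bijection of $\F_q^n$, so the sum is uniform.

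Next I would invoke the standard Hamming-ball volume estimate: the number of vectors in $\F_q^n$ of relative weight less than $\delta$ is at most $q^{H_q(\delta)\,n}$ (for $\delta < 1 - 1/q$ and $n$ large enough). Hence for each fixed nonzero $x$,
\[
\PR{\dist(Gx, 0) < \delta} \;\le\; \frac{q^{H_q(\delta)\,n}}{q^n} \;=\; q^{-(1 - H_q(\delta))\,n}.
\]
Taking a union bound over the $q^{\rho n} - 1 < q^{\rho n}$ nonzero messages gives
\[
\PR{\dist(C) < \delta} \;\le\; q^{\rho n} \cdot q^{-(1 - H_q(\delta))\,n} \;=\; q^{-(1 - H_q(\delta) - \rho)\,n}.
\]
By the hypothesis $\rho < 1 - H_q(\delta)$, the exponent $\eta := 1 - H_q(\delta) - \rho$ is a positive constant, so the failure probability is at most $q^{-\eta n} = \exp(-\Theta(n))$, as required.

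There is no real obstacle here; the only minor point to be careful about is the entropy-volume bound, which requires the condition $\delta < 1 - 1/q$ so that $H_q(\delta)$ is monotone and the ball-volume estimate $|B_q(0, \delta n)| \le q^{H_q(\delta)\,n}$ holds for all sufficiently large $n$. This standard estimate can be proved by bounding a single term of the binomial-type sum defining the ball volume and absorbing polynomial factors in $n$ into the exponent, which is harmless since we only need an $\exp(-\Omega(n))$ failure probability.
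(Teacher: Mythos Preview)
Your argument is correct and is the standard proof of this fact. The paper does not actually prove this statement; it is stated as a classical result with citations to Gilbert and Varshamov, so there is nothing to compare against.
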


\subsection{List decodable and list recoverable codes}

List decoding is a paradigm that allows one to correct more than $\delta/2$ fraction of errors by returning a small list of messages that correspond to close-by codewords.
More formally, for $\alpha \in [0,1]$ and an integer $L$ we say that a code $C: \Sigma^k \to \Sigma^n$ is \textsf{$(\alpha,L)$-list decodable} if for any $w \in \Sigma^n$ there are at most $L$ different messages $ x\in \Sigma^k$ which satisfy that $\dist(C(x),w)\leq \alpha$.

For list decoding concatenated codes it is useful to consider the notion of \textsf{list recovery} where one is given as input a small list of candidate symbols for each of the codeword coordinates, and is required to output a list of messages such that the corresponding codewords are consistent with the input lists.
More concretely, for $\alpha \in [0,1]$ and integers $\ell, L$ we say that a code $C: \Sigma^k \to \Sigma^n$ is \textsf{$(\alpha,\ell,L)$-list recoverable} if for any $S \in {\Sigma \choose \ell}^{n}$ there are at most $L$ different messages $x \in \Sigma^k$ which satisfy that $\dist(C(x),S)\leq \alpha$.

It is well-known that $1 - H_q(\alpha)$ is the \textsf{list decoding capacity}, that is, any $q$-ary code of rate above $1 - H_q(\alpha)$ cannot be list decoded from $\alpha$ fraction of errors with list size polynomial in the block length, and on the other hand, a random $q$-ary (linear) code of rate below $1 - H_q(\alpha)$ can be list decoded from $\alpha$ fraction of errors with small  list size.
\begin{theorem}[\cite{venkat-thesis}, Theorem 5.3]\label{thm:cap-list-dec}
For any prime power $q$, $0\leq \alpha < 1-\frac 1 q$,  $$0\leq \rho < 1 -H_q(\alpha) - \frac{1} {\log_q(L+1)},$$ and sufficiently large $n$, a random linear code
$C: \F_q^{\rho n} \to \F_q^{n}$  of rate $\rho$ is $(\alpha,L)$-list decodable with probability at least $1 - \exp(-n)$.
\end{theorem}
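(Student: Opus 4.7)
The plan is a standard probabilistic/union-bound argument. Write $C(x) = Gx$ where $G \in \F_q^{n \times k}$ is a uniformly random generator matrix and $k = \rho n$, and set $V := |B(0, \alpha n)|$, so that the classical volume bound gives $V \le q^{H_q(\alpha) n}$. The code $C$ fails to be $(\alpha, L)$-list decodable precisely when there exist a center $w \in \F_q^n$ and an $(L+1)$-subset of distinct messages $T \subseteq \F_q^k$ such that $C(x) \in B(w, \alpha n)$ for every $x \in T$; the strategy is to bound this failure probability by summing over all such pairs $(w, T)$.

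The key observation is that for any fixed $w$ and any fixed $T$ of rank $r := \dim \spn(T)$,
\[
\PR{\forall x \in T,\ C(x) \in B(w, \alpha n)} \;\le\; (V/q^n)^r.
\]
To see this, pick a maximal linearly independent subset $\{y_1, \ldots, y_r\} \subseteq T$. Because $G$ is uniform and the $y_j$'s are linearly independent, the codewords $C(y_1), \ldots, C(y_r)$ are i.i.d.\ uniform in $\F_q^n$; since the event for $T$ implies the weaker event that each $C(y_j)$ lies in $B(w, \alpha n)$, the claimed bound follows. Moreover, as $T$ has $L+1$ distinct elements its span contains at least $L+1$ points, forcing $r \ge r_0 := \lceil \log_q(L+1) \rceil$.

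To complete the union bound, I would stratify by rank. Using the crude bounds that $\F_q^k$ has at most $q^{rk}$ subspaces of dimension $r$, and that any $r$-dimensional subspace contains at most $q^{r(L+1)}$ subsets of size $L+1$, the number of rank-$r$ $(L+1)$-subsets is at most $q^{r(k + L + 1)}$. The union bound then yields
\[
\PR{C \text{ fails}} \;\le\; q^n \sum_{r=r_0}^{L+1} q^{r(k + L + 1)} (V/q^n)^r \;\le\; \sum_{r=r_0}^{L+1} q^{n + r\bigl((\rho + H_q(\alpha) - 1)n + L + 1\bigr)}.
\]
The rate hypothesis $\rho < 1 - H_q(\alpha) - 1/\log_q(L+1) \le 1 - H_q(\alpha) - 1/r_0$ implies $\rho + H_q(\alpha) - 1 < -1/r_0 - \eps$ for some $\eps > 0$; substituting this and using $r \ge r_0$ shows each summand is $\exp(-\Omega(\eps n))$, and since there are only $L + 1$ summands the total failure probability is $\exp(-\Omega(n))$.

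The main place where care is needed is the stratification by rank: a naive Markov argument on the number of codewords in a ball would lose the factor $L$ entirely, while working with ordered $(L+1)$-tuples would overcount by a factor that washes out the gain. Handling unordered $(L+1)$-subsets and noting that their minimum possible rank is exactly $\lceil \log_q(L+1) \rceil$ is what produces the characteristic $1/\log_q(L+1)$ slack in the rate hypothesis; the rest of the argument—the volume bound, the independence of codewords for linearly independent messages, and the subspace counting—is routine.
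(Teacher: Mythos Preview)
The paper does not give its own proof of this statement; it is quoted without proof from Guruswami's thesis. Your argument is correct and is essentially the standard one: the union bound over centers $w\in\F_q^n$ and $(L+1)$-subsets $T\subseteq\F_q^k$, stratified by $r=\dim\spn(T)$, together with the observation that $r\ge\lceil\log_q(L+1)\rceil$ and that $Gy_1,\ldots,Gy_r$ are i.i.d.\ uniform for linearly independent $y_j$, yields the claimed exponent. The chain of inequalities you use at the end is in the right direction (since $r_0=\lceil\log_q(L+1)\rceil\ge\log_q(L+1)$ gives $-1/\log_q(L+1)\le -1/r_0$), and with $r\ge r_0$ the exponent $n+r((\rho+H_q(\alpha)-1)n+L+1)\le -\eps r_0 n+(L+1)^2$ is indeed $-\Omega(n)$ for large $n$.
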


The following is a generalization of the above theorem to the setting of list recovery.
\begin{theorem}[\cite{venkat-thesis}, Lemma 9.6]
For any prime power $q$, integers $1\leq \ell \leq q$ and $L>\ell$, $0 \leq \alpha \leq 1$,
$$
0\leq  \rho <  \frac{1} {\log q}  \cdot \bigg[(1-\alpha)\cdot \log(q/\ell)- H_2(\alpha)- H_2(\ell/q)   \cdot \frac {q} {\log_q(L+1)}\bigg],
$$
and sufficiently large $n$, a random linear code
$C: \F_q^{\rho n} \to \F_q^{n}$ of rate $\rho$ is $(\alpha,\ell,L)$-list recoverable with probability at least $1 - \exp(-n)$.
\end{theorem}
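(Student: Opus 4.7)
The plan is to execute the standard first-moment argument for random linear codes. The code $C$ fails to be $(\alpha,\ell,L)$-list recoverable precisely when there exist some $S \in \binom{\F_q}{\ell}^n$ and $L+1$ distinct messages $x_0,\ldots,x_L \in \F_q^{\rho n}$ with $\dist(C(x_i),S)\leq \alpha$ for every $i$. I would bound the probability of this event by reducing to linearly independent subsets of messages and then taking a union bound, mirroring the proof of Theorem~\ref{thm:cap-list-dec}.

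First I would estimate the ``ball'' size $B := |\{w\in \F_q^n : \dist(w,S)\leq \alpha\}|$ around a fixed $S$. Counting by the number of coordinates $j\leq \alpha n$ on which $w_i\notin S_i$, and using $\binom{n}{\alpha n}\leq 2^{H_2(\alpha)n}$ together with $q-\ell\leq q$ and the monotonicity of $\ell^{n-j}q^j$ in $j$, one obtains
\[
B \;\leq\; (\alpha n+1)\binom{n}{\alpha n}\ell^{(1-\alpha)n}q^{\alpha n} \;\leq\; 2^{n[H_2(\alpha)+(1-\alpha)\log\ell+\alpha\log q]+o(n)},
\]
which rewrites as $B/q^n \leq 2^{n[H_2(\alpha)-(1-\alpha)\log(q/\ell)]+o(n)}$.

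Second comes the structural observation: among any $L+1$ distinct messages in $\F_q^{\rho n}$ I can extract $m := \lceil \log_q(L+1)\rceil$ that are $\F_q$-linearly independent, since their joint span has at least $L+1$ elements and hence dimension at least $\log_q(L+1)$. For a uniformly random generator matrix $G$, the codewords $C(x_1),\ldots,C(x_m)$ of any fixed linearly independent $m$-tuple are mutually independent and uniform on $\F_q^n$, so for each fixed pair $(S,(x_1,\ldots,x_m))$,
\[
\Pr\big[\dist(C(x_i),S)\leq \alpha \text{ for all } i\big] \;\leq\; (B/q^n)^m.
\]
A union bound over $\binom{q}{\ell}^n \leq 2^{nqH_2(\ell/q)}$ choices of $S$ and at most $q^{\rho nm}$ ordered linearly independent $m$-tuples gives an expected number of ``bad'' configurations at most
\[
2^{nqH_2(\ell/q)} \cdot q^{\rho n m} \cdot (B/q^n)^m.
\]

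Third, I would take base-$2$ logarithms of this expression, plug in the bound on $B/q^n$, and require the total to be at most $-n$. Dividing through by $nm\log q$ and isolating $\rho$, while using $\log(L+1) = \log q\cdot \log_q(L+1)$ to convert between bases, produces exactly the three-term bound in the theorem statement. Markov's inequality then yields failure probability at most $\exp(-n)$, as required. The only delicate step is this arithmetic: keeping track of which quantities scale with $m$ (the $qH_2(\ell/q)$ union-bound cost gets divided by $m$, producing the $q/\log_q(L+1)$ factor) versus which scale with $n$ (the ball exponent and $\rho\log q$). There is no real combinatorial obstacle beyond this bookkeeping.
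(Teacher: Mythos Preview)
The paper does not supply its own proof of this theorem; it is quoted verbatim as Lemma 9.6 from \cite{venkat-thesis} and used only as a black box to derive Corollary~\ref{cor:list-rec-random}. So there is no in-paper proof to compare your proposal against.

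For what it is worth, your outline is the standard first-moment argument and is correct. The three ingredients --- the ball estimate $B/q^n \leq 2^{n[H_2(\alpha)-(1-\alpha)\log(q/\ell)]+o(n)}$, the extraction of $m=\lceil\log_q(L+1)\rceil$ linearly independent messages from any $(L+1)$-tuple (so that their images under a uniformly random generator matrix are jointly uniform), and the union bound over $\binom{q}{\ell}^n\leq 2^{nqH_2(\ell/q)}$ list-tuples and $q^{\rho n m}$ message-tuples --- combine exactly as you say. Dividing the exponent by $nm\log q$ and using $m\geq \log_q(L+1)$ to replace $1/m$ by $1/\log_q(L+1)$ yields the displayed rate bound. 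This is indeed the argument in Guruswami's thesis.
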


Over large alphabet the above theorem yields the following.
\begin{corollary}\label{cor:list-rec-random}
There is a constant $c$ so that the following holds.
Choose $\rho \in [0,1]$, $ \eps > 0$, and a positive integer $\ell$.  Suppose that $q$ is a prime power which satisfies
 $$q \geq \max \{(1-\rho-\epsilon)^{-c(1-\rho-\epsilon) /\epsilon}, (\rho+\epsilon)^{-c(\rho+\epsilon)/\epsilon},  \ell^{c/\eps}\}.$$Then for sufficiently large $n$, a random linear code $C: \F_q^{\rho n} \to \F_q^{n}$  of rate $\rho$ is $(1-\rho -\epsilon,\ell, q^{c \ell/\epsilon})$-list recoverable with probability at least $1 - \exp(-n)$.
\end{corollary}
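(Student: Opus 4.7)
The plan is to apply the preceding theorem with the substitutions $\alpha = 1-\rho-\eps$ and $L = q^{c\ell/\eps}$, and then verify that the three lower bounds on $q$ in the hypothesis are exactly what is needed to push the rate condition of that theorem strictly above $\rho$. Once the rate condition holds, the $1-\exp(-n)$ probability bound is inherited directly; the constant $c$ will be chosen as a sufficiently large absolute constant at the end.

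Rewriting the rate condition
\begin{equation*}
\rho \;<\; \frac{1}{\log q}\Bigl[(1-\alpha)\log(q/\ell) - H_2(\alpha) - H_2(\ell/q)\cdot\frac{q}{\log_q(L+1)}\Bigr]
\end{equation*}
in base-$q$ logarithms, the main term $(1-\alpha)\log_q(q/\ell)$ becomes $(\rho+\eps)(1-\log_q\ell)$. After expanding $H_2(\alpha)/\log q$ into the two summands $(\rho+\eps)\log_q\tfrac{1}{\rho+\eps}$ and $(1-\rho-\eps)\log_q\tfrac{1}{1-\rho-\eps}$, it suffices to bound each of the four ``loss'' contributions --- namely $(\rho+\eps)\log_q\ell$, $(\rho+\eps)\log_q\tfrac{1}{\rho+\eps}$, $(1-\rho-\eps)\log_q\tfrac{1}{1-\rho-\eps}$, and $\tfrac{H_2(\ell/q)}{\log q}\cdot\tfrac{q}{\log_q(L+1)}$ --- by a small multiple of $\eps/c$. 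The first three are exactly tailored to the three hypotheses: $q\geq \ell^{c/\eps}$ handles the first, and $q \geq x^{-cx/\eps}$ with $x = \rho+\eps$ or $x = 1-\rho-\eps$ makes $x\log_q(1/x) \leq \eps/c$ in the second and third cases. For the fourth, the elementary bound $H_2(x) \leq 2x\log(1/x)$ for small $x$ (applicable because $\ell/q$ is tiny by the first hypothesis) gives $H_2(\ell/q)/\log q \leq 2\ell/q$, while $\log_q(L+1) \geq c\ell/\eps$ by definition of $L$, so this last loss is at most $2\eps/c$.

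Summing, the right-hand side of the rate condition is at least $(\rho+\eps) - O(\eps/c)$, which exceeds $\rho$ once $c$ is a sufficiently large absolute constant, and the preceding theorem then yields the corollary. I do not expect any real obstacle here; the argument is essentially bookkeeping with entropy estimates. The only subtle point is that $x\log(1/x)$ blows up as $x \to 0$, which is precisely why the hypotheses on $q$ are stated in the unusual form $x^{-cx/\eps}$: the factor of $x$ in the exponent exactly cancels the divergence, and in fact renders the hypothesis trivially true in the corner regimes where $\rho+\eps$ or $1-\rho-\eps$ approaches zero.
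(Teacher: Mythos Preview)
Your proposal is correct and follows essentially the same approach as the paper: substitute $\alpha=1-\rho-\eps$ and $L=q^{c\ell/\eps}$ into the preceding theorem, expand the rate expression into a main term $\rho+\eps$ minus four loss terms, and use the three hypotheses on $q$ (plus the elementary entropy bound for the $H_2(\ell/q)$ term) to show each loss is $O(\eps/c)$. The paper's own proof is a terser version of exactly this computation.
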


\begin{proof}
Follows by observing that in the above setting of parameters,
\begin{eqnarray*}
&& \frac{1} {\log q} \cdot  \bigg[(\rho+\epsilon)\cdot \log(q/\ell)- H_2(1-\rho-\epsilon)- H_2(\ell/q)   \cdot \frac {q} {c\ell/\epsilon}\bigg] \\
& \geq &\rho+\epsilon-\frac {\log \ell} {\log q}  - \frac {(1-\rho-\epsilon) \log(1/(1-\rho-\epsilon))} {\log q}- \frac {(\rho+\epsilon) \log(1/(\rho+\epsilon))} {\log q} -O(\epsilon / c)\\
&  \geq & \rho +\epsilon - O(\epsilon / c),
 \end{eqnarray*}
 so the corollary holds for sufficiently large constant $c$.
\end{proof}

\subsection{Locally decodable codes}

Intuitively, a code $C$ is said to be \textsf{locally decodable}
 if, given a codeword $C(x)$ that has been corrupted by some errors,
it is possible to decode any coordinate of the corresponding message $x$ by reading only a
small part of the corrupted version of $C(x)$. Formally, it is defined
as follows.
\begin{definition}[Locally decodable code (LDC)]
We say that a code $C: \Sigma^k \to \Sigma^{n}$ is $(Q,\alpha)$-\textsf{locally decodable}
if there
exists a randomized algorithm $A$ that satisfies the following requirements:
\begin{itemize}
\item \textbf{Input:} $A$ takes as input a coordinate $i\in\left[k\right]$,
and also gets oracle access to a string $ w\in\Sigma^{n}$ that is
$\alpha$-close to some codeword $C(x)$.
\item \textbf{Query complexity:} $A$ makes at most $Q$ queries to the
oracle $w$.
\item \textbf{Output:} $A$ outputs $x_{i}$ with probability at least $\frac{2}{3}$.
\end{itemize}
\end{definition}
\begin{remark}
By definition it holds that $\alpha<\dist(C)/2$. The above success
probability of~$\frac{2}{3}$ can be amplified using sequential repetition,
at the cost of increasing the query complexity. Specifically, amplifying
the success probability to $1-e^{-t}$ requires increasing the query
complexity by a multiplicative factor of $O(t)$.
\end{remark}

\paragraph{Locally list decodable and list recoverable codes.}

The following definition generalizes the notion of locally decodable codes to the setting of list decoding / recovery.
In this setting the algorithm $A$ is required to find all messages that correspond to nearby codewords in an implicit sense.

\begin{definition}[Locally list recoverable code]\label{defn:local-list-recover}
We say that a code $C: \Sigma^k \to \Sigma^{n}$ is $(Q,\alpha,\ell,L)$-\textsf{locally list recoverable}
if there exists a randomized algorithm $A$ that satisfies the following requirements:
\begin{itemize}
\item \textbf{Preprocessing:} $A$ outputs $L$ randomized algorithms $A_1,\ldots,A_L$.
\item \textbf{Input:} Each $A_j$ takes as input a coordinate $i\in\left[k\right]$,
and also gets oracle access to a string $S \in {\Sigma \choose \ell}^{n}$.
\item \textbf{Query complexity:} Each $A_j$ makes at most $Q$ queries to the
oracle $S$.
\item \textbf{Output:} For every codeword $C(x)$ that is $\alpha$-close to $S$,  with probability at least $\frac 2 3$ over the randomness of $A$ the following event happens:
there exists some $j\in [L]$ such that for all $i \in [k]$,  $$ \Pr[A_j(i) = x_i] \geq \frac{2}{3},$$ where the probability is over the internal randomness of $A_j$.
\end{itemize}
\end{definition}
We say that $A$ has \textsf{preprocessing time $T_{\text{pre}}$} if $A$ outputs the description of the algorithms $A_1,\ldots,A_L$ in time at most $T_{\text{pre}}$, and has \textsf{running time $T$} if each $A_j$ has running time at most $T$.
Finally, we say that $C$ is \textsf{$(Q,\alpha,L)$-locally list decodable} if it is $(Q,\alpha,1,L)$-locally list recoverable.

\subsection{Tensor codes}\label{sec:tensor}

A main ingredient in our constructions is the tensor product operation, defined as follows.
\begin{definition}[Tensor codes]\label{defn:tensor}
Let $C_{1}: \F^{k_1} \to \F^{n_{1}}$, $C_{2}: \F^{k_2} \to \F^{n_2}$ be linear codes, and let $G_1\in \F^{n_1 \times k_1}, G_2 \in \F^{n_2 \times k_2}$ be the generating matrices of $C_1, C_2$ respectively. Then the \textsf{tensor code} $C_{1} \otimes C_{2}: \F^{k_1 \times k_2} \to \F^{n_{1}\times n_{2}}$ is defined as $(C_1 \otimes C_2)(M) = G_1 \cdot M \cdot G_2^T$.
\end{definition}
Note that the codewords of $C_1 \otimes C_2$ are $n_1 \times n_2$ matrices over $\F$ whose columns belong to the code $C_1$ and whose rows belong to the code $C_2$.
The following effects of the tensor product operation on the classical parameters of the code  are well known (see e.g. \cite{S01,DSW06}).
\begin{fact}
Suppose that $C_{1}: \F^{k_1} \to \F^{n_{1}}$, $C_{2}: \F^{k_2} \to \F^{n_2}$ are linear codes of rates $\rho_{1},\rho_{2}$
and relative distances $\delta_{1},\delta_{2}$ respectively. Then the tensor code $C_1 \otimes C_2$ has rate $\rho_{1}\cdot \rho_{2}$ and relative distance $\delta_{1}\cdot\delta_{2}$.
\end{fact}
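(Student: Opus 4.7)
The plan is to verify the two equalities separately, using the explicit description $(C_1 \otimes C_2)(M) = G_1 M G_2^T$ together with the row/column structure of tensor codewords.

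For the rate, I would first argue that $C_1 \otimes C_2$ is injective, so that its image has dimension $k_1 k_2$ over $\F$. Since $C_1$ and $C_2$ are codes (hence injective), the generating matrices $G_1 \in \F^{n_1 \times k_1}$ and $G_2 \in \F^{n_2 \times k_2}$ have full column rank. Suppose $G_1 M G_2^T = 0$ for some $M \in \F^{k_1 \times k_2}$; then the columns of $M G_2^T$ lie in the kernel of the map $x \mapsto G_1 x$, which is trivial, so $M G_2^T = 0$; transposing and applying the same reasoning to $G_2$ gives $M = 0$. Hence the rate is $\frac{\dim(\text{image})}{n_1 n_2} = \frac{k_1 k_2}{n_1 n_2} = \rho_1 \rho_2$.

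For the relative distance, I would first prove the lower bound $\delta(C_1 \otimes C_2) \geq \delta_1 \delta_2$ by a double-counting argument on the supports of rows and columns. Fix any nonzero codeword $X \in C_1 \otimes C_2$, viewed as an $n_1 \times n_2$ matrix. Because the columns of $X$ lie in $C_1$ and the rows lie in $C_2$, any nonzero column has at least $\delta_1 n_1$ nonzero entries and any nonzero row has at least $\delta_2 n_2$ nonzero entries. Since $X \neq 0$, there is at least one nonzero column, hence at least $\delta_1 n_1$ rows of $X$ are nonzero (one for each nonzero entry in that column); each such row contributes at least $\delta_2 n_2$ nonzero entries to $X$. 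Therefore $X$ has at least $\delta_1 \delta_2 n_1 n_2$ nonzero entries, yielding relative distance at least $\delta_1 \delta_2$.

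For the matching upper bound I would exhibit a codeword achieving this weight. Let $c_1 \in C_1$ and $c_2 \in C_2$ be codewords of minimum weight, with $\wt(c_1) = \delta_1 n_1$ and $\wt(c_2) = \delta_2 n_2$, and let $m_1, m_2$ be the corresponding messages so that $c_1 = G_1 m_1$ and $c_2 = G_2 m_2$. Setting $M = m_1 m_2^T \in \F^{k_1 \times k_2}$ gives $(C_1 \otimes C_2)(M) = G_1 m_1 m_2^T G_2^T = c_1 c_2^T$, which has exactly $\delta_1 \delta_2 n_1 n_2$ nonzero entries. Combining the two bounds yields relative distance exactly $\delta_1 \delta_2$.

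I do not anticipate a real obstacle here; the only slightly nontrivial step is injectivity of the tensor map, and once the row/column structure of $C_1 \otimes C_2$ is in hand the distance bounds are essentially bookkeeping. The main care to take is ensuring the argument uses only that $C_1, C_2$ are linear codes (so that minimum distance equals minimum weight and the row/column structure is inherited), which is given by hypothesis.
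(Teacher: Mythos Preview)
Your proposal is correct and is the standard argument. The paper itself does not prove this fact at all; it simply states it as well known and cites references, so there is no paper proof to compare against.
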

For a linear code $C$, let $C^{\otimes 1}:= C$ and $C^{\otimes t}:= C \otimes C^{\otimes (t-1)}$. Then by the above, if $C$ has rate $\rho$ and relative distance $\delta$ then $C^{\otimes t}$ has rate $\rho^t$ and relative distance $\delta^t$.

\section{High-rate locally list recoverable codes}\label{sec:locallistrecovery}

We start by showing the existence of high-rate locally list recoverable codes. For this we first show in Section \ref{subsec:approx-local-list-rec}
 below that high-rate tensor codes are {\em approximately locally list recoverable}, namely there exists a short list of local algorithms that can  recover {\em most} of the coordinates of messages that correspond to near-by codewords. We then observe in Section \ref{subsec:local-list-rec}
 that by pre-encoding the message with a locally decodable code, the former codes can be turned into {\em locally  list recoverable codes} for which the local algorithms can recover {\em all} the coordinates of messages that correspond to near-by codewords. Finally, we show in Section \ref{subsec:instant} how to instantiate the codes used in the process in order to obtain high-rate locally list recoverable codes with good performance.

\subsection{Approximate local list recovery}\label{subsec:approx-local-list-rec}

We start by showing that high-rate tensor codes are approximately locally list recoverable as per Definition \ref{defn:approx-local-list-recover} below. As noted above, the main difference between approximately locally list recoverable codes and locally list recoverable codes (Definition \ref{defn:local-list-recover}) is that in the former we only require that the local algorithms recover {\em most} of the coordinates. A simple averaging argument then shows that in the case of approximate local list recovery each of the local algorithms in the output list can be assumed to be {\em deterministic}. Finally, to describe our approximate local list recovery algorithm it will be more convenient to require that the local algorithms recover {\em codeword bits} as opposed to {\em message bits}\footnote{In our constructions we shall use  systematic codes and so recovery of codeword bits will imply also recovery of message bits.}.

\begin{definition}[Approximately locally list recoverable code]\label{defn:approx-local-list-recover}
We say that a code $C: \Sigma^k \to \Sigma^{n}$ is  $(Q,\alpha,\epsilon,\ell,L)$-\textsf{approximately locally list recoverable}
if there exists a randomized algorithm $A$ that satisfies the following requirements:
\begin{itemize}
\item \textbf{Preprocessing:} $A$ outputs $L$  deterministic algorithms $A_1,\ldots,A_L$.
\item \textbf{Input:} Each $A_j$ takes as input a coordinate $i\in\left[n\right]$,
and also gets oracle access to a string $S \in {\Sigma \choose \ell}^{n}$.
\item \textbf{Query complexity:} Each $A_j$ makes at most $Q$ queries to the
oracle $S$.
\item \textbf{Output:} For every codeword $C(x)$ that is $\alpha$-close to $S$,  with probability at least $1-\epsilon$ over the randomness of $A$ the following event happens:
there exists some $j\in [L]$ such that  $$ \Pr_{i \in [n]}[A_j(i) = C(x)_i] \geq 1-\epsilon,$$ where the probability is over the choice of uniform random $i\in [n]$.
\end{itemize}
\end{definition}
As before, we say that $A$ has \textsf{preprocessing time $\Tpre$} if $A$ outputs the description of the algorithms $A_1,\ldots,A_L$ in time at most $\Tpre$, and has \textsf{running time $T$} if each $A_j$ has running time at most $T$.

Our main technical lemma is the following.

\begin{lemma}[Main technical]\label{approx-local-list-rec} Suppose that $C: \F^k \to  \F^{n}$
is a linear code of relative distance $\delta$ that is $(\alpha,\ell,L)$-(globally)
list recoverable. Then for any $\widetilde \epsilon >0$, the tensor product code $\widetilde{C}:=C^{\otimes t}: \F^{k^t} \to  \F^{n^{t}}$
is $(\widetilde{Q},\widetilde{\alpha},\widetilde{\epsilon},\ell,\widetilde{L})$-approximately
locally list recoverable for $\widetilde{\alpha}=\alpha\cdot\widetilde{\epsilon}\cdot\delta^{O(t)}$,
$$
\widetilde{Q}=n \cdot \frac{\log^{t}L}{(\alpha\cdot\widetilde{\epsilon})^{O(t)}\cdot\delta^{O(t^{2})}},
$$
and
$$
\widetilde{L}=\exp\left(\frac{\log^{t}L}{(\alpha\cdot\widetilde{\epsilon})^{O(t)}\cdot\delta^{O(t^{2})}}\right).
$$
Moreover, the approximate local list recovery algorithm
		for $\widetilde{C}$ has preprocessing time
$$ \Ttpre = \log n \cdot  \exp\left(\frac{\log^{t}L}{(\alpha\cdot\widetilde{\epsilon})^{O(t)}\cdot\delta^{O(t^{2})}}\right)
,$$
		and if the (global) list recovery algorithm for $C$ runs in
		time $T$ then the approximate local list recovery algorithm for $\widetilde C$	runs in time
		$$
\widetilde{T}=T\cdot\frac{\log^{t}L}{(\alpha\cdot\widetilde{\epsilon})^{O(t)}\cdot\delta^{O(t^{2})}}.
$$
\end{lemma}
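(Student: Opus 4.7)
The plan is to prove Lemma~\ref{approx-local-list-rec} by induction on $t$, formalizing the two-dimensional sketch from Section~\ref{sec:techniques} into a recursion. Throughout I write $C^{\otimes t} = C \otimes C^{\otimes(t-1)}$ and view a codeword $c$ as an $n\times n^{t-1}$ array whose columns lie in $C$ and whose rows lie in $C^{\otimes(t-1)}$. The base case $t=1$ is essentially trivial: the preprocessing ignores its randomness and outputs $L$ algorithms $A_1,\dots,A_L$, where $A_j$ queries all of $S$, runs $C$'s global $(\alpha,\ell,L)$-list-recovery once, and returns the $i$-th coordinate of the $j$-th element of the recovered list. For any codeword $\alpha$-close to $S$ some $A_j$ is correct on every coordinate, satisfying Definition~\ref{defn:approx-local-list-recover} with slack to spare.

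For the inductive step I would first define the algorithm and then analyze it. The preprocessing $\widetilde{A}$ samples $m$ row indices $r_1,\dots,r_m \in [n]$ uniformly (with $m$ chosen below), and for each $s\in[m]$ runs the inductive-hypothesis preprocessing on the purported codeword of $C^{\otimes(t-1)}$ sitting in row $r_s$, producing a sub-list $B_{s,1},\dots,B_{s,\widetilde{L}_{t-1}}$ of deterministic algorithms. For each advice vector $J=(j_1,\dots,j_m)\in[\widetilde{L}_{t-1}]^m$, $\widetilde{A}$ outputs the deterministic algorithm $\widetilde{A}_J$ acting on input $(i,I)\in[n]\times[n]^{t-1}$ as follows: (i) compute guesses $g_s := B_{s,j_s}(I)$; (ii) read the column $S[\cdot,I]\in\binom{\Sigma}{\ell}^{n}$; (iii) run $C$'s global list-recovery on it to obtain a list $\mathcal{L}\subseteq C$ with $|\mathcal{L}|\le L$; (iv) output $c^*(i)$ where $c^*\in\mathcal{L}$ maximizes $|\{s : c^*(r_s)=g_s\}|$. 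The recurrences $\widetilde{Q}_t \le m\widetilde{Q}_{t-1}+n$ and $\widetilde{L}_t \le \widetilde{L}_{t-1}^{m}$ unroll to the expressions in the lemma for a suitable $m$.

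The analysis fixes a codeword $c$ that is $\widetilde{\alpha}$-close to $S$ and shows that the ``correct'' advice $J^*$---the one picking the inductively successful sub-algorithm for each sampled row---yields a good decoder. Markov gives that at most a $\widetilde{\alpha}/\alpha_{t-1}$ fraction of rows are ``bad'' for the sub-algorithm and at most a $\widetilde{\alpha}/\alpha$ fraction of columns are ``bad'' for $C$. A union bound over the $m$ sampled rows then yields that, with probability $\ge 1-\widetilde{\epsilon}$ over $\widetilde{A}$'s randomness, every sampled row is good and every inductive preprocessing succeeds, so some $j_s^*$ exists per $s$. Conditional on this good preprocessing, a uniformly random query $(i,I)$ fails only if (a) column $I$ is bad (probability $\le \widetilde{\alpha}/\alpha$), (b) more than $\delta m/4$ of the $g_s$ are wrong (probability $O(\epsilon_{t-1}/\delta)$ over $I$, by Markov applied to the inductive approximate-local guarantee), or (c) the random rows fail to separate $c[\cdot,I]$ from some competing $c'\in\mathcal{L}$---meaning fewer than $\delta m/2$ of the $r_s$ fall in the disagreement set (probability at most $L e^{-\Omega(\delta m)}$ by Chernoff and a union bound over the $\le L$ elements of $\mathcal{L}$). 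When none of (a)--(c) occurs, $c[\cdot,I]$ is the unique maximizer in step (iv), and $c^*(i) = c(i,I)$.

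The main obstacle is parameter bookkeeping: to achieve the claimed $\widetilde{\alpha}$, $\widetilde{Q}$, and $\widetilde{L}$ one must set $m = \Theta((\log L + \log(1/\widetilde{\epsilon}))/\delta)$ to kill the Chernoff term in (c), $\epsilon_{t-1} = \Theta(\widetilde{\epsilon}\delta/m)$ so that both the $m$-fold union bound over sub-preprocessings and the Markov bound in (b) survive, and $\alpha_{t-1} = \Theta(\widetilde{\alpha}m/\widetilde{\epsilon})$ so that the Markov bound on bad rows works. Unrolling these choices over $t$ levels produces the $\delta^{O(t)}$ factor in $\widetilde{\alpha}$, the $\delta^{O(t^2)}$ and $\log^{t} L$ factors in $\widetilde{Q}$, and the $\exp(\cdot)$ expression for $\widetilde{L}_t = \widetilde{L}_{t-1}^{m}$; the preprocessing and running-time bounds then follow by tracking the cost per level. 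Two subtleties deserve note: the union bound in (c) is only over $\mathcal{L}_I$ at the outermost level (not over all possible column lists across all $I$), which is what keeps $m$ polylogarithmic in $L$ rather than depending on $n$; and because each $\widetilde{A}_J$ is genuinely deterministic, the approximate-local notion of Definition~\ref{defn:approx-local-list-recover} is met exactly as stated.
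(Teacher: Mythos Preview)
Your approach matches the paper's almost exactly: the paper also proves the lemma by induction, isolating the single recursion step as a separate Lemma (Lemma~\ref{approx-local-list-rec-2-wise}) showing that if $C$ is $(\alpha,\ell,L)$-globally list recoverable and $C'$ is $(Q',\alpha',\eps',\ell,L')$-approximately locally list recoverable, then $C\otimes C'$ is approximately locally list recoverable. The algorithm is the one you describe---sample $m$ rows, run the inner preprocessing on each, and for every advice vector $J$ use the resulting row-guesses as anchors for a column list-recovery---and the analysis follows the same (a)/(b)/(c) event decomposition.

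The one place your sketch diverges, and where it falls short of the stated $\widetilde{\alpha}$, is the ``every sampled row is good'' step. You union-bound over the $m$ rows, which forces $\widetilde{\alpha}\le \alpha_{t-1}\widetilde{\eps}/m$; when this is unrolled it drags factors of $m$ (hence of $\log L$) and extra powers of $\widetilde{\eps}$ into the bound on $\widetilde{\alpha}$, which the lemma does not permit. The paper instead applies Chernoff to conclude that with high probability a $1-\delta/8$ fraction of the sampled rows are good; the few bad rows are then absorbed into the $\delta/4$ slack when picking the closest codeword in step~(iv). This yields the $m$-free recursion $\widetilde{\alpha}=\tfrac{1}{10}\min(\alpha_{t-1}\delta,\,\alpha\widetilde{\eps})$, which unrolls cleanly to $\alpha\widetilde{\eps}\delta^{O(t)}$. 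The price the paper pays is a larger $m=\Theta\bigl(\log(L/\widetilde{\eps})/(\delta\alpha_{t-1}\widetilde{\eps})^{2}\bigr)$ (three separate Chernoff bounds must fire), and this larger $m$ is exactly what produces the $(\alpha\widetilde{\eps})^{O(t)}\delta^{O(t^{2})}$ denominators in $\widetilde{Q}$ and $\widetilde{L}$. In short, you and the paper are making opposite trade-offs on the same algorithm: your smaller $m$ gives tighter $\widetilde{Q},\widetilde{L}$ but looser $\widetilde{\alpha}$, while the paper's choice matches the lemma as stated. Apart from this one knob, your outline is the paper's proof.
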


The proof of the above lemma will follow from repeated application of the following technical lemma.

\begin{lemma}\label{approx-local-list-rec-2-wise}
Suppose that $C: \F^k \to \F^{n}$ is a linear code of relative distance $\delta$ that is $(\alpha,\ell,L)$-(globally)
list recoverable, and $C': \F^{k'} \to \F^{n'}$ is a linear code that is $(Q',\alpha',\epsilon',\ell,L')$-approximately locally list
recoverable. 
Then for any  $\widetilde \epsilon \geq 100\epsilon'/\delta $, the tensor product code  $\widetilde{C}:=C \otimes C': \F^{k \times k'} \to \F^{n\times n'}$
is $(\widetilde{Q},\widetilde{\alpha},\widetilde{\epsilon},\ell,\widetilde{L})$-approximately
locally list recoverable for $\widetilde{\alpha}=\frac{1}{10}\cdot\min\left\{ \alpha'\cdot\delta,\alpha\cdot\widetilde{\epsilon}\right\} $,
		$$
		\widetilde{Q}=O\left(\frac{\log(L/\widetilde{\epsilon})}{(\delta\cdot\alpha'\cdot\widetilde{\epsilon})^{2}}\right)\cdot Q'+n,
		$$
		and
		$$
		\widetilde{L}=\exp\left(\frac{\log L'\cdot\log(L/\widetilde{\epsilon})}{(\delta\cdot\alpha'\cdot\widetilde{\epsilon})^{2}}\right).
		$$
		Moreover, if the (global) list recovery algorithm for $C$ runs in
		time $T$, and the approximate local list recovery algorithm for $C'$ has preprocessing time $\Tpre^\prime$ and
		runs in time $T'$, then the approximate local list recovery algorithm
		for $\widetilde{C}$ has preprocessing time
$$ \Ttpre = O\left(\frac{\log(L/\widetilde{\epsilon})}{(\delta\cdot\alpha'\cdot\widetilde{\epsilon})^{2}}\right) \cdot (\log n+\Tpre^\prime) + \exp\left(\frac{\log L'\cdot\log(L/\widetilde{\epsilon})}{(\delta\cdot\alpha'\cdot\widetilde{\epsilon})^{2}}\right),$$
		and	runs in time
		$$
		\widetilde{T}=O\left(\frac{\log(L/\widetilde{\epsilon})}{(\delta\cdot\alpha'\cdot\widetilde{\epsilon})^{2}}\right)\cdot T'+T.
		$$
		\end{lemma}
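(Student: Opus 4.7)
The plan is to generalize the tensor list-recovery strategy of \cite{GGR} (sketched in Section~\ref{sec:techniques}) to the asymmetric tensor $\widetilde{C}=C\otimes C'$, where each row of a codeword lies in $C'$ (approximately locally list-recoverable) and each column lies in $C$ (globally list-recoverable). Set $m=\Theta(\log(L/\widetilde{\epsilon})/(\delta\alpha'\widetilde{\epsilon})^{2})$. The preprocessing of $\widetilde{A}$ samples i.i.d.\ uniform row indices $r_1,\ldots,r_m\in[n]$ and, for each $s$, invokes $C'$'s approximate local list-recovery preprocessing (with the oracle redirected to row $r_s$ of $S$), obtaining $L'$ deterministic algorithms $A^{(s)}_1,\ldots,A^{(s)}_{L'}$. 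The output list of $\widetilde{A}$ consists of algorithms $\widetilde{A}_J$ indexed by $J=(j_1,\ldots,j_m)\in[L']^m$, so $\widetilde{L}=(L')^m$ matches the claim. On input $(i,i')$, $\widetilde{A}_J$ first computes ``advice'' values $v_s:=A^{(s)}_{j_s}(i')$ using $mQ'$ oracle queries across the $m$ rows, then invokes $C$'s global $(\alpha,\ell,L)$-list-recovery on $S$ restricted to column $i'$ using $n$ queries to obtain $\mathcal{L}_{i'}\subseteq\F^n$ of size at most $L$, picks $c^*\in\mathcal{L}_{i'}$ maximizing $|\{s:c^*_{r_s}=v_s\}|$, and returns $c^*_i$.

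\paragraph{Framework.} Fix $\widetilde{c}:=\widetilde{C}(M)$ with $\dist(\widetilde{c},S)\leq\widetilde{\alpha}$, and write $\gamma_{i'}\in\F^n$ for its $i'$-th column (a codeword of $C$). By Markov on rowwise distances, at most $\widetilde{\alpha}/\alpha'\leq\delta/10$ fraction of rows are \emph{distance-bad} (more than $\alpha'$-far from the corresponding row of $\widetilde{c}$); by Markov on columnwise distances, at most $\widetilde{\alpha}/\alpha\leq\widetilde{\epsilon}/10$ fraction of columns $i'$ are \emph{far}, so $\gamma_{i'}$ may be missing from $\mathcal{L}_{i'}$---these columns contribute at most $\widetilde{\epsilon}/10$ to the bad fraction and I discard them. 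A Chernoff bound on the sampling, combined with the $1-\epsilon'$ success probability of each $C'$-preprocessing call, shows that with probability $\geq 1-\widetilde{\epsilon}/3$ over preprocessing randomness at least $1-\delta/4$ fraction of sampled rows are \emph{algorithmically-good}: distance-good and with $C'$-preprocessing succeeding. For each such $s$, Definition~\ref{defn:approx-local-list-recover} furnishes an index $j^*_s\in[L']$ such that $A^{(s)}_{j^*_s}$ agrees with the $r_s$-th row of $\widetilde{c}$ on $\geq 1-\epsilon'$ fraction of columns; set $J^*:=(j^*_s)_s$ (arbitrary on bad rows). Since $\widetilde{A}_{J^*}$ is in the output list, it suffices to show that, conditioned on this good preprocessing, $\widetilde{A}_{J^*}$ is correct on $\geq 1-\widetilde{\epsilon}$ fraction of $(i,i')$.

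\paragraph{Winning the advice competition.} For non-far $i'$, $\widetilde{A}_{J^*}$ is correct at column $i'$ iff $c^*=\gamma_{i'}$. For any competitor $c'\in\mathcal{L}_{i'}\setminus\{\gamma_{i'}\}$, inclusion-exclusion on the sampled positions yields (sample-agreement of $v$ with $c'$) $\leq 1-$(sample-agreement of $v$ with $\gamma_{i'}$)$+$(sample-agreement of $\gamma_{i'}$ with $c'$), so $\gamma_{i'}$ defeats $c'$ provided the sample-disagreement of $v$ with $\gamma_{i'}$ is strictly less than half the sample-disagreement of $\gamma_{i'}$ with $c'$. I would establish two claims for most $i'$: (a) sample-disagreement of $v$ with $\gamma_{i'}$ is $\leq 7\delta/20$, for $\geq 1-\widetilde{\epsilon}/10$ fraction of columns $i'$; (b) every competitor $c'\in\mathcal{L}_{i'}\setminus\{\gamma_{i'}\}$ has sample-disagreement with $\gamma_{i'}$ at least $7\delta/10$, again for $\geq 1-\widetilde{\epsilon}/10$ fraction of columns. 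Claim (a) comes from double-counting the $\leq m\epsilon' n'$ total advice errors on algorithmically-good (sample, column) pairs, applying Markov over $i'$ using the hypothesis $\widetilde{\epsilon}\geq 100\epsilon'/\delta$, and adding in the $\leq m\delta/4$ algorithmically-bad samples. Claim (b) follows from Hoeffding on the $m$ i.i.d.\ samples (true disagreement $\geq\delta$ yields empirical $\geq 7\delta/10$ except with probability $e^{-\Omega(m\delta^2)}$), union-bounded over the $\leq L$ competitors per column and Markov-averaged over $i'$. Summing contributions, the total bad fraction is $\leq\widetilde{\epsilon}$, completing the proof.

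\paragraph{Main obstacle.} The delicate point is calibrating $m$ so that all three concentration/averaging steps---good samples (Chernoff), uniform advice quality (double-counting plus Markov, where the $\widetilde{\epsilon}\geq 100\epsilon'/\delta$ slack enters), and codeword distinguishability (Hoeffding plus union bound plus Markov)---succeed simultaneously with total failure $\leq\widetilde{\epsilon}$; the stated $m=\Theta(\log(L/\widetilde{\epsilon})/(\delta\alpha'\widetilde{\epsilon})^2)$ is chosen to provide enough slack across all three. Once $m$ is fixed, the query complexity $\widetilde{Q}=mQ'+n$, output list size $\widetilde{L}=(L')^m$, preprocessing time, and runtime all follow by direct accounting of the sampling, the $m$ invocations of $C'$-preprocessing, the $C$-list-recovery per query, and the enumeration over $J\in[L']^m$.
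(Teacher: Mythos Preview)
Your proposal is correct and follows essentially the same approach as the paper's proof: sample $m$ rows and run $C'$'s approximate local list-recovery preprocessing on each, index the output list by $J\in[L']^m$, and on a query $(i,i')$ compute advice values in column $i'$ from the $m$ row-algorithms, globally list-recover column $i'$ in $C$, and return the $i$-th symbol of the list element closest to the advice. The correctness argument---Markov/Chernoff to obtain many good sampled rows, double-counting plus Markov (using $\widetilde{\epsilon}\geq 100\epsilon'/\delta$) to obtain many columns with accurate advice, and Hoeffding plus a union bound over the $\leq L$ competitors to ensure the true column wins the agreement competition---matches the paper's analysis step for step, with only cosmetic differences (you sample rows i.i.d.\ with replacement while the paper samples a random subset without replacement, and your numerical thresholds such as $7\delta/20$ versus the paper's $\delta/4$ differ slightly).
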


		Before we prove the above lemma we show how it implies the Main Technical Lemma \ref{approx-local-list-rec}.

		\begin{proof}[Proof of Lemma \ref{approx-local-list-rec}]
		The proof proceeds by repeated application of Lemma \ref{approx-local-list-rec-2-wise} to the code $C$.
		For a fixed $\widetilde{\eps} > 0$, our goal is to find parameters $\subp{Q}{t}$, $\subp{\alpha}{t}$, $\subp{L}{t}$ such that
		$C^{\otimes t}$ is a $(\subp{Q}{t},\subp{\alpha}{t},\widetilde{\eps},\ell,\subp{L}{t})$-approximately locally list recoverable code.

		We begin by defining
		\[
			\subp{\eps}{i} := \inparen{ \frac{\delta}{100} }^{t-i} \widetilde{\eps} \qquad \qquad \mbox{ for $i=1,\ldots,t$ },
		\]
		where the factor $\frac{\delta}{100}$ comes from Lemma \ref{approx-local-list-rec-2-wise}.
		With this definition, we have $\widetilde{\eps} = \subp{\eps}{t} > \subp{\eps}{t-1} > \cdots > \subp{\eps}{1}$,
		and
		\[
			\widetilde{\eps} =\subp{\epsilon}{t} =\inparen{ \frac{100}{\delta} }^{t-1} \subp{\eps}{1}.
		\]

		When $t=1$, we have $C^{\otimes t} = C$, which means $\subp{Q}{1} = n$, $\subp{\alpha}{1} = \alpha$, $\subp{L}{1} = L$, and
		this holds for any $\subp{\eps}{1} > 0$, and hence for any $\widetilde{\eps} > 0$, since $C$ is actually list recoverable (not just approximately).

		Since $C^{\otimes i} = C \otimes C^{\otimes (i-1)}$, by Lemma \ref{approx-local-list-rec-2-wise}, we have the following recursive relationships

		\begin{align*}
			\subp{\alpha}{i} &= \frac{1}{10} \min( \subp{\alpha}{i-1} \delta, \alpha \subp{\eps}{i} ) \\
			\subp{m}{i} &:= \frac{ \log(L/\subp{\eps}{i}) }{\inparen{ \delta \subp{\alpha}{i-1} \subp{\eps}{i} }^2} \\
			\subp{Q}{i} &= \subp{m}{i} \subp{Q}{i-1} + n\\
			\subp{L}{i} &= \subp{L}{i-1}^{\subp{m}{i}} \\
			\subp{T}{i} &= \subp{m}{i}\cdot \subp{T}{i-1} + T \\
			\subp{\inparen{\Tpre}}{i} &= \subp{m}{i}\cdot \inparen{ \log n + \subp{\inparen{\Tpre}}{i-1} } +\subp{L}{i-1}^{\subp{m}{i}}
		\end{align*}

		Solving these recursions gives the following bounds on the parameters of interest.
		The distance parameter, $\subp{\alpha}{t}$, satisfies
		\begin{align*}
			\subp{\alpha}{t} 	&= \frac{1}{10} \min( \subp{\alpha}{t-1} \delta, \alpha \subp{\eps}{t} ) \\
								&\ge \frac{1}{10} \min( \subp{\alpha}{t-1} \delta, \alpha \subp{\eps}{1} ) \\
								&\ge \inparen{ \frac{\delta}{10}}^{t-1} \alpha \subp{\eps}{1} \\
								&= \inparen{ \frac{\delta}{10}}^{t-1} \alpha \inparen{ \frac{\delta}{100} }^{t-1} \widetilde{\eps} \\
								&= \inparen{ \frac{\delta^2}{1000}}^{t-1} \alpha \widetilde{\eps}.
		\end{align*}
		
		Notice that as $i$ increases, the parameters $\subp{\eps}{i}$ are increasing, whereas the parameters $\subp{\alpha}{i}$ are decreasing.
		Nevertheless, $\subp{\alpha}{i} \subp{\eps}{i} > \subp{\alpha}{i-1}\subp{\eps}{i-1}$.
		To see this note that $\subp{\alpha}{i} < \frac{1}{10} \min( \subp{\alpha}{i-1}, \subp{\eps}{i} ) < \eps_{(i)}$ for all $i$,
		which means
		\[
			\frac{100}{\delta} \subp{\alpha}{i} = \frac{10}{\delta} \min( \subp{\alpha}{i-1} \delta, \subp{\eps}{i} ) > \subp{\alpha}{i-1},
		\]
		where the last inequality follows since $\alpha_{i-1} < \epsilon_{i-1} < \epsilon_{i}$.
		Multiplying both sides by $\subp{\eps}{i}$, we conclude that $\subp{\alpha}{i} \subp{\eps}{i+1} > \subp{\alpha}{i-1}\subp{\eps}{i}$, as desired.
		

		Now, notice that since $\subp{\eps}{i-1} < \subp{\eps}{i}$ and $\subp{\alpha}{i} \subp{\eps}{i+1} > \subp{\alpha}{i-1} \subp{\eps}{i}$, we have
		\[
			\subp{m}{i-1} = \frac{ \log(L/\subp{\eps}{i-1}) }{\inparen{ \delta \subp{\alpha}{i-2} \subp{\eps}{i-1} }^2}
							\ge \frac{ \log(L/\subp{\eps}{i}) }{\inparen{ \delta \subp{\alpha}{i-1} \subp{\eps}{i} }^2}
							= \subp{m}{i}
		\]
		Thus
		\[
			\subp{m}{2} \ge \subp{m}{3} \ge \cdots \ge \subp{m}{t}
		\]
		Notice that
		\[
			\subp{m}{2} = \frac{ \log(L/\subp{\eps}{2}) }{\inparen{ \delta \subp{\alpha}{1} \subp{\eps}{2} }^2} = \frac{ \log( \inparen{\frac{100}{\delta}}^{t-2} L/\subp{\eps}{t}) }{\inparen{ \delta \alpha \inparen{ \frac{\delta }{100} }^{t-2}\subp{\eps}{t} }^2} \le \frac{ \log( \inparen{\frac{100}{\delta}}^{t-2} L/\subp{\eps}{t}) }{ \inparen{ \frac{\delta }{100} }^{2t} \alpha^2 \subp{\eps}{t}^2} \le \frac{ \log(L/\subp{\eps}{t}) }{ \inparen{ \frac{\delta }{100} }^{3t} \alpha^2 \subp{\eps}{t}^2}
		\]
		Where last inequality holds since $\log(a x) \le a \log(x)$ when $\log(x) \ge \log(a)/(a-1)$.


		The output list size, $\subp{L}{t}$, satisfies
			\begin{align*}
			\subp{L}{t} &= L^{\prod_{i=2}^{t-1} \subp{m}{i}} \\
				&\le L^{\subp{m}{2}^{t-1}} \\
				&= \exp\inparen{( \log L )\cdot \subp{m}{2}^{t-1} } \\
				&\le \exp \inparen{ \frac{ \log^t(L/\widetilde{\eps})}{ \inparen{\frac{\delta}{100}}^{3t^2} \alpha^{2t} \widetilde{\eps}^{2t} } }.
				\end{align*}
		The query complexity, $\subp{Q}{t}$, satisfies
				\begin{align*}
				\subp{Q}{t} &=n \inparen{1 + \sum_{i=2}^{t-1} \prod_{j=i}^{t-1} \subp{m}{j} } \\
					&\le (t-1) n \subp{m}{2}^{t-1} 
					\\
					&\le n \inparen{ \frac{ \log^t(L/\widetilde{\eps})}{ \inparen{\frac{\delta}{100}}^{3t^2} \alpha^{2t} \widetilde{\eps}^{2t} } }.
\end{align*}

		The running time, $\subp{T}{t}$, satisfies
				\begin{align*}
				\subp{T}{t} &= T \prod_{i=2}^{t-1} \subp{m}{i} + T \inparen{1 + \sum_{i=2}^{t-1} \prod_{j=i}^{t-1} \subp{m}{j} } \\
					&\le T \inparen{ \frac{ \log^t(L/\widetilde{\eps})}{ \inparen{\frac{\delta}{100}}^{3t^2} \alpha^{2t} \widetilde{\eps}^{2t} } }.
				\end{align*}
		The pre-processing time, $\subp{\inparen{\Tpre}}{t}$, satisfies
			\begin{align*}
				\subp{\inparen{\Tpre}}{t} 	&= \subp{m}{t}\cdot \inparen{ \log n + \subp{\inparen{\Tpre}}{t-1} } + \exp \inparen{ \subp{m}{t} \log \subp{L}{t-1} } \\
											&= \log n \cdot \inparen{1+\sum_{i=2}^{t} \prod_{j=i}^{t-1} \subp{m}{j}} + \sum_{i=1}^{t-1} \subp{L}{i} \prod_{j=i}^{t-1} \subp{m}{j} \\
											&\le  \subp{m}{2}^{t-1} t \log n + t \subp{m}{2}^{t-1} \subp{L}{t} \\
											&\le (t \log n + t\subp{L}{t} ) \subp{m}{2}^{t-1} \\
											&\le \inparen{ t \log n + t \exp \inparen{ \frac{ \log^t(L/\widetilde{\eps})}{ \inparen{\frac{\delta}{100}}^{3t^2} \alpha^{2t} \widetilde{\eps}^{2t} } } }\frac{ \log^t(L/\widetilde{\eps})}{ \inparen{\frac{\delta}{100}}^{3t^2} \alpha^{2t} \widetilde{\eps}^{2t} } \\
											&\leq  \log n \cdot  \exp\left(\frac{\log^{t}L}{(\alpha\cdot\widetilde{\epsilon})^{O(t)}\cdot\delta^{O(t^{2})}}\right)
			\end{align*}

\end{proof}

We proceed to the proof of Lemma \ref{approx-local-list-rec-2-wise}.

\begin{proof}[Proof of Lemma \ref{approx-local-list-rec-2-wise}]
Our goal is to find a randomized algorithm $\widetilde{A}$ that outputs a list of (deterministic) local algorithms $\widetilde{A}_{1},\ldots, \widetilde{A}_{\widetilde L}$ such that for any codeword $\widetilde c \in C\otimes C'$ that is consistent with most of the input lists, with high probability over the randomness of $A$, there exists some $\widetilde{A}_i$ in the output list that computes correctly most of the coordinates of $\widetilde c$.

We first describe the algorithm $\widetilde{A}$.
The algorithm $\widetilde{A}$ first
chooses a uniform random subset $R\subseteq[n]$ of rows of size $m:=O\left(\frac{\log(L/\widetilde{\epsilon})}{(\delta\cdot\alpha'\cdot\widetilde{\epsilon})^{2}}\right)$.
It then runs for each row $r\in R$, independently, the approximate local
list recovery algorithm $A'$ for $C'$, let $A_{1}^{r},\ldots,A_{L'}^{r}$
denote the output algorithms on row $r$. Finally,
for every possible choice of a single local algorithm $A_{j_r}^{r}$ per each of the rows $r\in R$, the algorithm
$\widetilde{A}$ outputs a local algorithm denoted $\widetilde{A}_{J}$ where
$J:=(j_{r})_{r\in R}\in[L']^{R}$. The formal definition of the algorithm $\widetilde{A}_{J}$  is given below, followed by an informal description. \\

\begin{algorithm}
\begin{algorithmic}
\Function{$\widetilde{A}_J$}{$(i,i') \in [n] \times [n']$}
\LineComment{$\widetilde{A}_J$ receives oracle access to a matrix of lists $S\in{\F \choose \ell}^{n\times n'}$}
\LineComment{ $J = (j_r)_{r \in R} \in [L']^R$ }
\For{ $r \in R$ }
	\State{ Run $A_{j_r}^{r}$ on input $i'$ and oracle access to the $r$th row $S|_{\{r\} \times [n']}$ .}
	\State{ Let $c'_{r} \leftarrow A^r_{j_r}(i')$. }
	\LineComment{ $c'_{r}$ is a candidate for the symbol at position $(r,i') \in [n] \times [n']$. }
\EndFor
\LineComment{ At this point, we have candidate symbols for every position in $R \times \{i'\}$. }
\State{ Run the (global) list recovery algorithm for $C$ on the $i'$th column $S|_{[n] \times \{i'\}}$. }
\State{ Let $\mathcal{L}\subseteq \F^{n}$ denote the output list. }
\State{ Choose a codeword $c\in\mathcal{L}$ such that $c|_{R}$ is closest to $(c'_r)_{r\in R}$ (breaking ties arbitrarily). }
\State{\textbf{Return:} $c_i$}
\EndFunction
\end{algorithmic}
\caption{ The approximate local list recovery algorithm  for $C \otimes C'$. \label{alg:allr} }
\end{algorithm}

Recall, that the algorithm $\widetilde{A}_J$ is given as input a codeword coordinate $(i,i') \in [n] \times [n']$ in the tensor product code $C \otimes  C'$, is
allowed to query the input lists at every coordinate, and must produce a guess for the codeword value indexed by $(i,i')$.   To this end, the algorithm $\widetilde{A}_J$ first runs on each row $r \in R$ the local recovery algorithm $A_{j_r}^{r}$ for $C'$ that is specified by the choice of $J=(j_r)_{r \in R}$ on input $i'$ and oracle access to  the $r$th row. For each row $r \in R$ let $c'_r$ be the guess for
the symbol at position $(r,i') \in [n] \times [n']$ produced by $A_{j_r}^{r}$. At this point we have candidate symbols for all positions in $R \times \{i'\}$. Now, $\widetilde{A}_J$ runs the global list recovery algorithm for $C$ on the $i'$th column and chooses a codeword $c \in C$ from the output list that agrees the most with the candidate symbols $(c'_r)_{r\in R}$ on this column. Finally,  the $i$th symbol of $c$ is $\widetilde{A}_J$'s guess for the $(i,i')$ symbol of the tensor codeword. Next we prove the correctness of the purposed local list recovery algorithm, followed by an analysis of its performance.

\paragraph{Correctness:}
Let $\widetilde{c}$ be a codeword of $\widetilde C = C \otimes C'$ such that $\dist(\widetilde{c},S)\leq\widetilde{\alpha}$. Our goal is to show that with probability at least $1-\widetilde \epsilon$ over the randomness of $\widetilde{A}$, there exists some local algorithm $\widetilde{A}_{J}$ in the output list of $\widetilde A$ that computes correctly at least $1- \widetilde \epsilon$ fraction of the coordinates of $\widetilde{c}$.

We first explain how the algorithm $\widetilde{A}_{J}$ above is obtained.
Recall that $A_{j_r}^r$ is the local algorithm that computes the $j_r$th ``guess'' for the codeword on the $r$th row.
For every row $r\in R$ let $j_{r}\in[L']$ be such that $A_{j_{r}}^{r}$
agrees the most with
 $\widetilde{c}$ on the $r$th row among all local algorithms $A_1^r,\ldots,A_{L'}^r$
  (breaking ties arbitrarily), and let $J=(j_{r})_{r\in R}$.
We shall show below that with probability at least $1-\widetilde{\epsilon}$
over the randomness of $\widetilde{A}$, it holds that $\widetilde{A}_{J}$
computes correctly at least $1-\widetilde{\epsilon}$ fraction of the coordinates of $\widetilde{c}$.

The high level idea of the proof is as follows.
First, we observe that since the rows in $R$ are chosen uniformly at random, and by averaging, with high probability for most rows $r \in R$ it holds that $\widetilde c$ is consistent with most of the input lists on the row. Let us denote by  $R_{\text{good}} \subseteq R$  the subset of these 'good' rows. By the local list recovery guarantee for $C'$, with high probability on each such good row $r \in R_{\text{good}}$ the algorithm $A_{j_r}^{r}$ computes correctly most of the coordinates of $\widetilde c$ on this row. Now, by another averaging argument this implies in turn that for most columns $i' \in [n']$ it holds that both $\widetilde c$ is consistent with most of the input lists on the columnn and additionally, most of the guesses $(c'_r)_{r\in R}$ for this column are correct. As above, let us denote by $\text{Col}_{\text{good}} \subseteq [n']$ the subset of these 'good' columns. Finally, by the list recovery guarantee for $C$, on any good column $i' \in \text{Col}_{\text{good}}$ the $i'$th column $\widetilde c|_{[n] \times \{i'\}}$ of $\widetilde c$ is present in the output list $\mathcal{L}$, and we further show that with high probability $\widetilde c |_{[n] \times \{i'\}}$ is closest to $(c'_r)_{r\in R}$ on $R$ among all codewords in $
\mathcal{L}$, in which case $\widetilde{A}_{J}$ outputs the correct $(i,i')$ symbol of $\widetilde c$. Details follow.

We start by showing the existence of a large subset of good rows $R_{\text{good}} \subseteq R$.
For this observe that since the rows in $R$ are chosen uniformly at random, and since $\widetilde{\alpha}\leq\frac{1}{10}\cdot\alpha'\cdot\delta$,
by Chernoff bound (without replacement, see e.g. Lemma 5.1 in \cite{GGR}),
with probability at least $1-\exp(-(\alpha'\cdot\delta)^{2}m)\geq1-\frac{\widetilde{\epsilon}}{3}$
over the choice of $R$, it holds that $\dist(\widetilde{c}|_{R\times[n']},S|_{R\times[n']})\leq\frac{\alpha'\cdot\delta}{8}$.
If this is the case, then by averaging, for at least $1-\frac{\delta}{8}$ fraction of the rows
$r\in R$ it holds that $\dist\left(\widetilde{c}|_{\{r\}\times[n']},S|_{\{r\}\times[n']}\right)\leq\alpha'$.
Let
$$
R_{\text{good}}=\left\{ r\in R\mid \dist\left(\widetilde{c}|_{\{r\}\times[n']},S|_{\{r\}\times[n']}\right)\leq\alpha'\right\} .
$$

Now we observe that for each good row $r\in R_{\text{good}}$, with high probability over the randomness of $A'$ the algorithm $A_{j_r}^{r}$ computes correctly most of the coordinates of $\widetilde c$ on this row.
For this note that since $1 - \eps' \ge 1- \frac{1}{100} \cdot \delta \cdot \widetilde{\epsilon}$, with probability at least
$1- \frac{1}{100} \cdot \delta \cdot \widetilde{\epsilon}$ over the randomness of $A'$, independently for each row $r$, it holds that $A_{j_{r}}^{r}$
computes correctly at least $1-\frac{1}{100}\cdot\delta\cdot\widetilde{\epsilon}$ fraction
of the coordinates of $\widetilde{c}|_{\{r\}\times[n']}$. Consequently, by Chernoff bound
with probability at least $1-\exp(-(\delta\cdot\widetilde{\epsilon})^{2}\cdot m)\geq1-\frac{\widetilde{\epsilon}}{3}$
over the randomness of $\widetilde{A}$, it holds that $(A_{j_{r}}^{r}){}_{r\in R_{\text{good}}}$
compute correctly at least $1-\frac{1}{24}\cdot\delta\cdot\widetilde{\epsilon}$ fraction
of the coordinates of $\widetilde{c}|_{R_{\text{good}}\times[n']}$ . Finally, if this is the case then for at least $1-\frac{\widetilde{\epsilon}}{3}$
fraction of the columns $i'\in[n']$ it holds that $\dist\left(\widetilde{c}|_{R_{\text{good}}\times\{i'\}},(A_{j_{r}}^{r}(i')){}_{r\in R_{\text{good}}}\right)\leq\frac{\delta}{8}$.

Next we show the existence of a large subset of good columns $\text{Col}_{\text{good}}\subseteq [n']$.
So far we obtained that with probability at least $1-\frac{2}{3}\cdot\widetilde{\epsilon}$
over the randomness of $\widetilde{A}$, for at least $1-\frac{\widetilde{\epsilon}}{3}$ fraction
of the columns $i'\in[n']$ it holds that $\dist\left(\widetilde{c}|_{R\times\{i'\}},(A_{j_{r}}^{r}(i'))_{r\in R}\right)\leq\frac{\delta}{4}$.
Moreover, note that since $\widetilde{\alpha}\leq\frac{1}{10}\cdot\alpha\cdot\widetilde{\epsilon}$,
for at least $1-\frac{\widetilde{\epsilon}}{3}$ fraction of the columns
$i'\in[n']$ it holds that $\dist(\widetilde{c}|_{[n]\times\{i'\}},S|_{[n]\times\{i'\}})\leq\alpha$.
Let
$$
\text{Col}_{\text{good}}=\left\{ i'\in[n']\mid \dist(\widetilde{c}|_{[n]\times\{i'\}},S|_{[n]\times\{i'\}})\leq\alpha\;\text{{and}\;}\dist\left(\widetilde{c}|_{R\times\{i'\}},(A_{j_{r}}^{r}(i'))_{r\in R}\right)\leq\frac{\delta}{4}\right\} .
$$
Then by the above, with probability at least $1-\frac{2}{3}\cdot\widetilde{\epsilon}$
over the randomness of $\widetilde{A}$ it holds that $|\text{Col}_{\text{good}}|\geq(1-\frac{2}{3}\cdot\widetilde{\epsilon})|n'|$.

Now  for each column $i' \in \text{Col}_{\text{good}}$ it holds that $\dist(\widetilde{c}|_{[n]\times\{i'\}},S|_{[n]\times\{i'\}})\leq\alpha$, and so $\widetilde{c}|_{[n]\times\{i'\}}$ must be in the output list $\mathcal{L}$ of the $i'$th column. Moreover, since the code $C$ has relative distance $\delta$, any codeword
$\hat c \in \mathcal{L}$ other than $\widetilde{c}|_{[n]\times\{i'\}} $ must differ from  $\widetilde{c}|_{[n]\times\{i'\}}$ by at least $\delta$ fraction of the coordinates. Furthermore, since $R$ is chosen uniformly at random, by Chernoff bound, with probability at least $1-\exp(-\delta^{2}m)$ over the choice of $R$ it holds that $\dist(\hat c|_R, \widetilde{c}|_{R\times\{i'\}}) \geq 3\delta/4$. By union over all codewords in $\mathcal{L}$ this implies in turn that  with probability at least $1-L\cdot \exp(-\delta^{2}m)\geq1-\frac{1}{9}\cdot\widetilde{\epsilon}^{2}$
over the choice of $R$, for all codewords  $\hat c\in \mathcal{L} \setminus \{\widetilde{c}|_{[n]\times\{i'\}}\}$  it holds that
 $\dist(\hat c|_R,\widetilde{c}|_{R\times\{i'\}} ) \geq 3\delta/4$.

 Next observe that if the above holds then for any column $i' \in \text{Col}_{\text{good}}$
 we have on the one hand that
 $ \dist\left(\widetilde{c}|_{R\times\{i'\}},(c'_r)_{r\in R}\right)\leq\frac{\delta}{4}$, and on the other hand that
 $$ \dist\left(\hat c|_R, (c'_r)_{r\in R}\right)\geq \dist(\hat c|_R, \widetilde{c}|_{R\times\{i'\}}) -\dist\left(\widetilde{c}|_{R\times\{i'\}},(c'_r)_{r\in R}\right) \geq \frac{\delta}{2}$$
  for
 any $\hat c  \in \mathcal{L} \setminus \{\widetilde{c}|_{[n]\times\{i'\}}\}$.
  So $\widetilde{c}|_{[n]\times\{i'\}}$ will be the codeword in $\mathcal{L}$ that is closest to $(c'_r)_{r\in R}$ on $R$, and consequently we will have $c=\widetilde{c}|_{[n]\times\{i'\}}$ and $c_i = \widetilde c_{i,i'}$.
 So we obtained that for each column $i' \in \text{Col}_{\text{good}}$,
  with probability at least $1-\frac{1}{9}\cdot\widetilde{\epsilon}^{2}$ over the randomness of $\widetilde A$,
the algorithm $\widetilde A_{J}$ computes the entire column $i'$ correctly.  By averaging, this implies in turn that with probability at least $1-\frac{\widetilde{\epsilon}}{3}$  over the randomness of $\widetilde A$ the algorithm $\widetilde A_J$ computes correctly at least
$1-\frac{\widetilde{\epsilon}}{3}$  fraction of the coordinates of $\widetilde c|_{[n] \times \text{Col}_{\text{good}}}$.

Concluding, we obtained that with probability at least $1-\widetilde{\epsilon}$
over the randomness of $\widetilde{A}$, the algorithm $\widetilde A_{J}$ computes
correctly at least $1-\widetilde{\epsilon}$ fraction of the coordinates of $\widetilde{c}$, so the algorithm $\widetilde A$ satistifes the local list recovery requirement. Next we analyze the performance of the algorithm.


\paragraph{Output list size:}
The resulting output list size equals the number of different strings $(j_r)_{r \in R} \in [L']^R$ which is
\[
\widetilde{L}=(L')^{m}=\exp\left(\frac{\log(L')\cdot\log(L/\widetilde{\epsilon})}{(\delta\cdot\alpha'\cdot\widetilde{\epsilon})^{2}}\right).
\]

\paragraph{Query complexity:}
The query complexity is
\[
\widetilde{Q}=m\cdot Q'+n=O\left(\frac{\log(L/\widetilde{\epsilon})}{(\delta\cdot\alpha'\cdot\widetilde{\epsilon})^{2}}\right)\cdot Q'+n,
\]
since $Q'$ queries are needed in order to list recover each of the rows in $R$, and $n$ additional queries are needed to globally list recover the $i'$'th column.

	\paragraph{Pre-Processing Time:}
		The pre-processing algorithm generates a random set $R$, of size $m$, which takes $m \log n$ time to generate and store.
		The pre-processing algorithm then runs $m$ independent copies of $A^\prime$ (once for each row in $R$), and this takes time $m \cdot \Tpre^\prime$.
		Finally, the pre-processing algorithm generates the set $J$ of size $\inparen{L^\prime}^m$.
		Thus the total pre-processing time is
		\[
			m \inparen{ \log n + \Tpre^\prime } + \exp \inparen{ m \log L^\prime }.
		\]
	\paragraph{Running Time:}
		Each local recovery algorithm runs $m$ copies of the local recovery algorithm for $C^\prime$, which takes time $m \cdot T^\prime$.
		Then it runs the global list recovery algorithm for $C$ once, which takes time $T$, thus
		the total running time of each local recovery algorithm is
		\[
			m \cdot T^\prime + T.
		\]

\end{proof}

\subsection{Local list recovery}\label{subsec:local-list-rec}

Next we show that the approximately locally list recoverable codes of Lemma \ref{approx-local-list-rec} can be turned into locally list recoverable codes by pre-encoding the message with a locally decodable code.

\begin{lemma}\label{lem:local-list-rec}
Suppose that $C: \F^{\rho n} \to \F^{n}$ is a {\em systematic} linear code of  rate $\rho$ and relative distance $\delta$ that is $(\alpha,\ell,L)$-(globally)
list recoverable, and $\widehat C: \F^{\widehat k} \to \F^{(\rho n)^t}$ is $(\widehat Q,\widehat \alpha)$-locally decodable.
Then $\widetilde{C}:=C^{\otimes t}\big(\widehat C\big): \F^{\widehat k} \to \F^{n^t} $ is $(\widetilde{Q},\widetilde{\alpha},\ell,\widetilde{L})$-locally
list recoverable for $\widetilde{\alpha}=\alpha\cdot\widehat\alpha\cdot\rho^t\cdot \delta^{O(t)}$,
$$
\widetilde{Q}=\widehat Q\cdot n \cdot\frac{\log^{t}L}{(\alpha\cdot \widehat \alpha)^{O(t)}\cdot\left(\rho\cdot\delta\right)^{O(t^{2})}},
$$
and
$$
\widetilde{L}=\exp\left(\frac{\log^{t}L}{(\alpha\cdot\widehat \alpha)^{O(t)}\cdot\left(\rho\cdot\delta\right)^{O(t^{2})}}\right).
$$
Moreover, the local list recovery algorithm for $\widetilde{C}$ has preprocessing time
$$\Ttpre=\exp\left(\frac{\log^{t}L}{(\alpha\cdot\widehat \alpha)^{O(t)}\cdot\left(\rho\cdot\delta\right)^{O(t^{2})}}\right) \cdot \log n ,$$ and
if the (global) list recovery algorithm for $C$ runs in
time $T$ and the local decoding algorithm for $\widehat C$ runs in time
$\widehat T$ then the local list recovery algorithm for $\widetilde{C}$ runs
in time
$$
\widetilde{T} =\widehat T+ \widehat Q \cdot T\cdot\frac{\log^{t}L}{(\alpha\cdot\widehat \alpha)^{O(t)}\cdot\left(\rho\cdot\delta\right)^{O(t^{2})}}.
$$
\end{lemma}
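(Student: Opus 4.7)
The plan is to layer the $\widehat{C}$-local decoder on top of the approximate local list recovery algorithm for $C^{\otimes t}$ obtained in Lemma \ref{approx-local-list-rec}. The first observation I would make is that systematicity is preserved under tensor products: if $C$ is systematic, then a straightforward induction on $t$ (using that $C \otimes C'$ applied to a block matrix $M$ produces $G_1 M G_2^T$, whose top-left $k \times k'$ sub-block equals $M$ when $G_1, G_2$ each have an identity block in the top) shows that $C^{\otimes t}$ is systematic. Consequently, the composed codeword $\widetilde{C}(x) = C^{\otimes t}(\widehat{C}(x))$ contains $\widehat{C}(x)$ in its length-$(\rho n)^t$ systematic prefix, so recovering a message bit $x_i$ reduces to (a) obtaining oracle access to a string close to $\widehat{C}(x)$, and (b) invoking the local decoder for $\widehat{C}$ on input $i$.

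Set $\widetilde{\epsilon} := \min\bigl\{1/3,\ \widehat{\alpha} \cdot \rho^t / 2\bigr\}$ and invoke Lemma \ref{approx-local-list-rec} on $C^{\otimes t}$ with this parameter. The resulting preprocessing algorithm outputs $\widetilde{L}$ deterministic algorithms $\widetilde{A}_1,\dots,\widetilde{A}_{\widetilde{L}}$ such that, with probability at least $1 - \widetilde{\epsilon} \ge 2/3$ over the preprocessing randomness, some $\widetilde{A}_{j^*}$ agrees with $C^{\otimes t}(\widehat{C}(x))$ on at least a $(1-\widetilde{\epsilon})$ fraction of the $n^t$ coordinates, provided $S$ is $\widetilde{\alpha}'$-close to this codeword (with $\widetilde{\alpha}' = \alpha \cdot \widetilde{\epsilon} \cdot \delta^{O(t)}$). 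Restricting $\widetilde{A}_{j^*}$ to the systematic prefix, at most $\widetilde{\epsilon}\cdot n^t$ positions disagree with $\widehat{C}(x)$, which is a fraction at most $\widetilde{\epsilon}/\rho^t \le \widehat{\alpha}/2$ of the prefix's length. Thus, viewed as an oracle for $\widehat{C}(x)$, the algorithm $\widetilde{A}_{j^*}$ is $\widehat{\alpha}$-close to the true codeword $\widehat{C}(x)$.

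The local list recovery algorithm for $\widetilde{C}$ is then assembled as follows. The preprocessing phase is exactly that of Lemma \ref{approx-local-list-rec}. On input $i \in [\widehat{k}]$, the $j$-th output algorithm $A_j$ simulates the local decoder $\widehat{A}$ of $\widehat{C}$ on input $i$, answering each of its $\widehat{Q}$ oracle queries $m \in [(\rho n)^t]$ by running $\widetilde{A}_j$ on the tensor-codeword coordinate corresponding to $m$ under the systematic embedding. For $j = j^*$, the simulated oracle is $\widehat{\alpha}$-close to $\widehat{C}(x)$, so $\widehat{A}$ returns $x_i$ with probability $\ge 2/3$ over its own internal randomness, as required by Definition \ref{defn:local-list-recover}.

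For the parameters, the distance bound inherits as $\widetilde{\alpha} = \alpha \cdot \widetilde{\epsilon} \cdot \delta^{O(t)} = \alpha \cdot \widehat{\alpha} \cdot \rho^t \cdot \delta^{O(t)}$; the query complexity multiplies as $\widetilde{Q}_{\text{new}} = \widehat{Q} \cdot \widetilde{Q}_{\text{approx}}$; the running time becomes $\widehat{T} + \widehat{Q} \cdot \widetilde{T}_{\text{approx}}$; and the list size and preprocessing time are inherited unchanged. Substituting $\widetilde{\epsilon} = \Theta(\widehat{\alpha}\cdot\rho^t)$ into the bounds of Lemma \ref{approx-local-list-rec} and absorbing the extra $\rho^{O(t^2)}$ factors into $\delta^{O(t^2)}$ as $(\rho \delta)^{O(t^2)}$ yields the claimed expressions. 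The only nontrivial point in the proof is the translation from $\widetilde{\epsilon}$-fraction error on the whole tensor codeword to $\widehat{\alpha}$-fraction error on the systematic prefix, which relies on the $\rho^t$ relative length of the prefix and the chosen value of $\widetilde{\epsilon}$; beyond that, the argument is a direct composition of the two black boxes.
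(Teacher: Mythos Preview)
Your proposal is correct and follows essentially the same approach as the paper: set $\widetilde{\epsilon} = \Theta(\widehat{\alpha}\rho^t)$ in Lemma~\ref{approx-local-list-rec}, use systematicity of $C^{\otimes t}$ to view the approximate recoverer restricted to the information positions as an oracle that is $\widehat{\alpha}$-close to $\widehat{C}(x)$, and compose with the local decoder of $\widehat{C}$. Your explicit cap $\widetilde{\epsilon}\le 1/3$ to guarantee the $2/3$ preprocessing success probability is a small refinement the paper leaves implicit, but otherwise the argument and parameter derivations coincide.
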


\begin{proof}
	Setting $\widetilde{\eps} = \widehat{\alpha} \rho^t$ in Lemma \ref{approx-local-list-rec}, we conclude that the tensor code $\overline{C} := C^{\otimes t} : \F^{(\rho n)^t} \rightarrow \F^{n^t}$ is
	$(\overline{Q},\overline{\alpha},\widetilde{\eps},\ell,\overline{L})$-approximately locally list recoverable for $\overline{\alpha} = \alpha \widetilde{\eps} \delta^{O(t)}$. Note furthermore that since $C$ is systematic then so is $\overline C$.

	Intuitively, the proof works as follows: to recover the $i$th message symbol, $x_i$, run the local decoder of the inner code $\widehat{C}$ to obtain a set of $\widehat{Q}$ indices
	in $\F^{(\rho n)^t}$ that, if queried, would allow you to recover $x_i$.  Since the code $\overline{C}$ is systematic, those symbols correspond to symbols in the big code $\widetilde{C}$.
	Use the approximate local list recovery algorithm for $\overline{C}$ to obtain $\overline{L}$ guesses for each of these $\widehat{Q}$ symbols.  Finally, for each of these $\overline{L}$
	sets of $\widehat{Q}$ ``guesses'' run the local decoding algorithm for $\widehat{C}$ to obtain $\overline{L}$ guesses for $x_i$.
	Since $\overline{C}$ is only approximately locally list recoverable, there will be a subset of symbols on which the approximate local list decoder fails, but by
	carefully choosing parameters, these errors can be handled by the local decoding procedure of $\widehat{C}$.

	It is not hard to see that the query complexity of this algorithm will be $\widehat{Q}$ times the query complexity of $C^{\otimes t}$, and the output list size
	will be the same as that of $C^{\otimes t}$.

	Below, we describe the algorithm and proof of correctness in more detail.
	Let $(\overline{A}_1,\cdots,\overline{A}_{\overline{L}}) \leftarrow \overline{A}(\cdot)$ be the approximate local list recovery algorithms for $\overline{C}$.
	In Algorithm \ref{alg:llr} we describe the local list recovery algorithms $\widetilde{A}_1,\ldots,\widetilde{A}_{\overline{L}}$ for the code $\widetilde{C} \defined \overline{C}(\widehat{C})$.
	
	\begin{algorithm}
		\begin{algorithmic}
		\Function{$\widetilde{A}_j$}{$i \in [\widehat{k}]$}
		\LineComment{$\widetilde{A}_j$ receives oracle access to lists $S\in{\F \choose \ell}^{n^t}$}
		\State{Run the local decoding algorithm for $\widehat{C}$ on input $i$ to obtain a set of $\widehat{Q}$ indices that the local decoder would query. }
		\State{Let $R \subseteq [(\rho n)^t]$ be the subset of indices that would be queried. }
		\State{Let $\overline{R} \subseteq [n^t]$ be the indices in $\overline{C}$ encoding the indices of $R$. }
		\LineComment{$\overline{R}$ exists and $|\overline{R}| = |R| = \widehat{Q}$ because $\overline{C}$ is systematic.}
		\For{ $\overline{r} \in \overline{R}$ }
			\State{ Let $c^{(\overline{r})}_j \gets \overline{A}_j(\overline{r})$ (on oracle access to $S$) }
		\EndFor
		\State{ Run the local decoder for $\widehat{C}$ on input $\{ c^{(\overline{r})}_j \}_{\overline{r} \in \overline{R}}$ to obtain a guess $x^{(j)}_i$ for the $i$th symbol of the message}
		\State{\textbf{Return:} $x^{(j)}_i$}
		\EndFunction
		\end{algorithmic}
		\caption{The local list recovery algorithm for $\widetilde{C} \defined C^{\otimes{t}}(\widehat{C})$. \label{alg:llr}}
	\end{algorithm}

			\paragraph{Correctness:}
			Given a string of lists $S\in{\F \choose \ell}^{n^t}$, suppose $\widetilde{c} = \widetilde{C}(x) = \overline{C}(\widehat{C}(x))$ is a codeword of $\widetilde{C}$ that is $\widetilde{\alpha}$-close to $S$.
			We need to show that with probability at least $\frac{2}{3}$ over the randomness of $\widetilde{A}$ there is a $j \in [\overline{L}]$ such that
			for all $i \in [\widehat k]$, $\Pr[ \widetilde{A}_j(i) = x_i ] \ge \frac{2}{3}$.

			Since $\widetilde{\eps} = \widehat{\alpha}\rho^t$ the tensor code $\overline{C} = C^{\otimes t}$ is $(\overline{Q},\overline{\alpha},\widetilde{\eps},\ell,\overline{L})$-approximately locally list recoverable,
			for $\overline{\alpha} = \alpha \widetilde{\eps} \delta^{O(t)}$.  Thus if $\widetilde{c}$ is $\widetilde{\alpha}$-close to $S$,
			then with probability $1-\widetilde{\eps}$, over the randomness of $\overline{A}$, there is a $j \in [\overline{L}]$ such that
			\[
				\Pr_{i \in [n^t]} [\overline A_j(i) = \overline{C}(x)_i ] \geq 1 - \widetilde{\eps}.
			\]
			This means that with probability $1 - \widetilde{\eps}$ over the randomness of $\overline{A}$, there is a $j$ such that
			$\overline{A}_j(\cdot)$ is $\widetilde{\eps}$-close to the codeword $\overline{C}(\widehat{C}(x))$.
			Since $\overline{C}$ is systematic, the inner codeword $\widehat{C}(x)$ must
			be $\widetilde{\eps} \rho^{-t}$-close to the restriction of $S$ to the information symbols.
			Since $\widehat{C}$ is $(\widehat{Q},\widehat{\alpha})$-locally decodable, and $\widehat{\alpha} = \widetilde{\eps} \rho^{-t}$,
			then by the local decoding property of the inner code, $\widehat{C}$, given any $i \in [\widehat{k}]$,
			and oracle access to $\overline{A}_j(\cdot)$, the local decoding algorithm for $\widehat{C}$ will recover $x_i$
			with probability at least $\frac{2}{3}$.
	
			\paragraph{Output list size and query complexity:}
			The query complexity is
			\[
				\widetilde{Q} = \widehat{Q} \cdot \overline{Q} = \widehat Q \cdot n\cdot \frac{\log^{t}L}{(\alpha\cdot\widehat{\alpha} \cdot \rho^t)^{O(t)}\cdot\delta^{O(t^{2})}},
			\]
			and the output list size is
			\[
				\widetilde{L} = \overline{L} = \exp\left(\frac{\log^{t}L}{(\alpha\cdot \widehat{\alpha} \cdot \rho^t)^{O(t)}\cdot\delta^{O(t^{2})}}\right).
			\]

	It can also be verified that the running times are as required.

\end{proof}

\subsection{Instantiations}\label{subsec:instant}

In what follows we shall instantiate Lemma~\ref{lem:local-list-rec} in two ways. For both, we will use the high-rate LDCs of~\cite{KMRS15-STOC} (Theorem \ref{thm:high-rate-lccs} below) as the code $\widehat C$.
In the first instantiation, which is more straightforward, we just use a random linear code (via Corollary \ref{cor:list-rec-random}) as the code $C$.  This yields a code $\widetilde{C}$ that is not efficiently encodable, nor efficiently list recoverable, but it does achieve small locality together with constant alphabet size and constant output list size.  The second instantiation, which does yield efficinetly encodable (in nearly-linear time) and efficiently list recoverable (in sub-linear time) codes, uses a modification of the Algebraic Geometry subcodes studied in~\cite{GX13, GK14} (Theorem \ref{thm:ag}) as the code $C$.  These latter codes have constant alphabet size, but slightly super-constant output list size, which means that our efficient construction will as well. In more detail, we obtain the following pair of theorems.

\begin{theorem}[High-rate locally list recoverable codes, non-efficient]\label{thm:nonexplicit-high-rate-local-list-rec}
There is a constant $c$ so that the following holds.
Choose $\eps > 0$ and positive integers $\ell,t$.  Suppose that $s \geq \max \{1/\epsilon^c,   c  (\log \ell) t  /\eps\} $.

Then there exists  an infinite family of $\F_{2}$-linear codes $\left\{ C_{n}\right\} _{n}$
such that the following holds.
\begin{enumerate}
\item $C_n: \F_{2^s}^{(1-\epsilon)n} \to \F_{2^s}^n$ has rate $1 - \eps$ and relative distance at least $(\epsilon/(16 t))^{t}$.
\item $C_n$ is $(Q,\alpha, \ell, L)$-locally list recoverable for $\alpha = (\epsilon/t)^{O(t)}$,
$$ Q  =n^{1/t} \cdot 2^{O(\sqrt{\log n \cdot \log \log n})} \cdot (s\ell)^t \cdot (t/\epsilon)^{O(t^2)}  ,$$
and
$$L = \exp \left((s \ell)^{t} \cdot (t/\epsilon)^{O(t^2)}  \right).$$
\end{enumerate}
In particular, when $\epsilon, \ell, t, s$ are constant we get that  $\alpha = \Omega(1)$, $Q = n^{1/t+o(1)}$, and $L=O(1)$.
\end{theorem}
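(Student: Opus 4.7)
The plan is to apply Lemma \ref{lem:local-list-rec} with $C$ a random $\F_2$-linear code over $\F_{2^s}$ of appropriately chosen rate, and with $\widehat C$ the high-rate locally decodable code of \cite{KMRS15-STOC} (Theorem \ref{thm:high-rate-lccs}); then we verify that the resulting $\widetilde C = C^{\otimes t}(\widehat C)$ has the advertised parameters.

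First I would choose the rate split. Let $\rho := 1 - \eps/(2t)$ so that $\rho^t \geq 1 - \eps/2$, and let $\widehat C$ have rate $\geq 1 - \eps/2$ so that the composed rate satisfies $\rho^t \cdot \text{rate}(\widehat C) \geq (1-\eps/2)^2 \geq 1-\eps$. The codes of \cite{KMRS15-STOC} give $\widehat C : \F_{2^s}^{\widehat k} \to \F_{2^s}^{(\rho n)^t}$ that is $(\widehat Q, \widehat \alpha)$-locally decodable with $\widehat\alpha = \Omega(1)$ and $\widehat Q = 2^{O(\sqrt{\log((\rho n)^t)\cdot \log\log((\rho n)^t)})}$. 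Since $t$ is treated as a (possibly mildly growing) parameter and we take logs, $\widehat Q$ remains $2^{O(\sqrt{\log n \cdot \log\log n})}$ up to factors absorbed in the stated bound. These codes are $\F_2$-linear and can be taken to be systematic (Gaussian elimination preserves the LDC property up to re-indexing).

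Next I would invoke Corollary \ref{cor:list-rec-random} with $\eps' = \eps/(2t)$ to obtain a random $\F_2$-linear code $C : \F_{2^s}^{\rho n} \to \F_{2^s}^n$ of rate $\rho$ that is, with probability $1 - \exp(-n)$, simultaneously $(\alpha, \ell, L_0)$-list recoverable with $\alpha = 1 - \rho - \eps' = \eps/(2t) \geq \eps/(16t)$ and $L_0 = q^{c\ell/\eps'} = 2^{O(s \ell t /\eps)}$, provided that $q = 2^s \geq \ell^{c/\eps'}$, i.e.\ $s \gtrsim (\log \ell) t /\eps$, as assumed. By the GV bound (Theorem \ref{thm:GV}) applied simultaneously (a union bound loses only another $\exp(-n)$), the same random code has relative distance $\delta \geq \eps/(16t)$. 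Passing to the systematic form (valid on the high-probability event that the first $\rho n$ columns of the generator matrix are independent) yields a systematic $C$ satisfying all hypotheses of Lemma \ref{lem:local-list-rec}.

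Plugging $C$ and $\widehat C$ into Lemma \ref{lem:local-list-rec} gives $\widetilde C := C^{\otimes t}(\widehat C): \F_{2^s}^{\widehat k} \to \F_{2^s}^{n^t}$. Substituting the values: the decoding radius is
\[
\widetilde\alpha = \alpha \cdot \widehat\alpha \cdot \rho^t \cdot \delta^{O(t)} = \Omega\!\left(\frac{\eps}{t}\right)\cdot (\eps/(16t))^{O(t)} = (\eps/t)^{O(t)},
\]
as claimed. The output list size becomes
\[
\widetilde L = \exp\!\left(\frac{\log^t L_0}{(\alpha \widehat\alpha)^{O(t)} (\rho \delta)^{O(t^2)}}\right) = \exp\!\left((s\ell)^t \cdot (t/\eps)^{O(t^2)}\right),
\]
and the query complexity is $\widehat Q \cdot n \cdot \tfrac{\log^t L_0}{(\alpha\widehat\alpha)^{O(t)}(\rho\delta)^{O(t^2)}} = n \cdot 2^{O(\sqrt{\log n \log\log n})} \cdot (s\ell)^t \cdot (t/\eps)^{O(t^2)}$. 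This is the bound for a length-$n^t$ code, so reparameterising $N := n^t$ (i.e.\ taking the block length variable in the theorem to be $N$ and renaming $N \to n$) gives exactly the $n^{1/t} \cdot 2^{O(\sqrt{\log n \log\log n})} \cdot (s\ell)^t \cdot (t/\eps)^{O(t^2)}$ query bound. Finally, the relative distance of $\widetilde C$ is at least $\delta^t \geq (\eps/(16t))^t$ (tensoring multiplies distances, and pre-encoding with the systematic LDC does not decrease it below this).

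The main obstacle is purely bookkeeping: one must choose $\eps'$ small relative to $\eps/t$ but large enough that the $\F_{2^s}$-alphabet condition $q \geq \ell^{c/\eps'}$ is implied by the hypothesis $s \geq c(\log\ell)t/\eps$, and then track how the small $\alpha = \Theta(\eps/t)$ and small $\delta = \Theta(\eps/t)$ propagate through the $(\alpha\widehat\alpha)^{O(t)}(\rho\delta)^{O(t^2)}$ denominator of Lemma \ref{lem:local-list-rec} without blowing up past the claimed $(t/\eps)^{O(t^2)}$. No step requires new ideas beyond careful parameter matching; existence (rather than explicitness) of $C$ is immediate from the probabilistic statements of Corollary \ref{cor:list-rec-random} and Theorem \ref{thm:GV} via a union bound.
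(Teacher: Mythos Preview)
Your proposal is correct and follows essentially the same approach as the paper: instantiate Lemma~\ref{lem:local-list-rec} with a random linear code $C$ (via Corollary~\ref{cor:list-rec-random} and the GV bound) and the KMRS locally decodable code $\widehat C$ (Theorem~\ref{thm:high-rate-lccs}), then read off the parameters. The only cosmetic difference is that the paper takes the inner block length to be $n^{1/t}$ directly, whereas you take it to be $n$ and reparameterise $N=n^t$ at the end; these are equivalent. One small inaccuracy: for $\widehat C$ of rate $1-\eps/2$, Theorem~\ref{thm:high-rate-lccs} gives $\widehat\alpha=\Theta(\eps)$, not $\Omega(1)$ as you wrote---but the extra $\eps^{-O(t)}$ factor this introduces is absorbed into the stated $(t/\eps)^{O(t^2)}$ bounds, so the conclusion is unaffected.
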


\begin{theorem}[High-rate locally list recoverable codes, efficient]\label{thm:explicit-high-rate-local-list-rec}
There is a constant $c$ so that the following holds.
Choose $\eps > 0$ and a positive integer $\ell$.
Let $\left\{t_n\right\}_n$ be a sequence of positive integers, non-decreasing with $n$, so that $t_0$ is sufficiently large and
\[ t_n \leq \sqrt{ \frac{\eps \log_q(n) }{ c \ell }}.\]  For each choice of $t$ choose
	$s = s(t)$ so that
   $s \geq \max \{1/\epsilon^c, c (\log \ell) t  /\eps\} $ is even.
Then there exists  an infinite family of $\F_{2}$-linear codes $\left\{ C_{n}\right\} _{n}$
such that the following holds.
Below, to simplify notation we use $t$ instead of $t_n$ and $s$ instead of $s(t_n)$.
\begin{enumerate}

\item $C_n: \F_{2^s}^{(1-\epsilon)n} \to \F_{2^s}^n$ has rate $1 - \eps$ and relative distance at least $(\Omega({\epsilon} /{t}))^{2t}$.
\item $C_n$ is $(Q,\alpha, \ell, L)$-locally list recoverable  for $\alpha = (\epsilon/t)^{O(t)}$,
$$ Q = n^{1/t} \cdot 2^{O(\sqrt{\log n \cdot \log \log n})} \cdot  \exp\left(   \frac{ t^2 \ell s } {\eps} \cdot \exp(\log^*n) +t\log s\right),$$
and
$$L = \exp \left( \exp\left(   \frac{ t^2 \ell s} {\eps} \cdot \exp(\log^*n) +t \log s\right) \right).$$

\item The local list recovery algorithm for $C_n$ has preprocessing time
$$ \Tpre = \exp \left( \exp\left(   \frac{ t^2 \ell s } {\eps} \cdot \exp(\log^*n) +t\log s\right) \right) \cdot \log n ,$$
and running time
$$ T =    n^{O(1/t)} \cdot 2^{O(\sqrt{\log n \cdot \log \log n})} \cdot  \exp \left( \exp\left(   \frac{ t \ell s } {\eps} \cdot \exp(\log^*n) +\log s\right) \right) .$$
\item $C_n$ can be encoded in time
$$ n \cdot 2^{O(\sqrt{\log n \cdot \log \log n})} +  t \cdot  n^{1+O(1/t)} .$$
\end{enumerate}

In particular, when $\epsilon, \ell, t_n = t, s$ are constant we get that  $\alpha = \Omega(1)$, $Q = n^{1/t+o(1)}$,
$L=  \exp(\exp(\exp(\log^*n)))$, $\Tpre  = \log^{1+o(1)} n$, $T  = n^{O(1/t)}$, and encoding time is $n^{1+O(1/t)}$.
\end{theorem}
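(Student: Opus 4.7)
The plan is to instantiate Lemma~\ref{lem:local-list-rec} with two specific ingredients: for the outer code $C$, use the (modified) Algebraic Geometry subcodes from Appendix~\ref{app:ag} (based on \cite{GX13,GK14}); for the inner code $\widehat{C}$, use the high-rate locally decodable codes of \cite{KMRS15-STOC} (Theorem~\ref{thm:high-rate-lccs}). This is the efficient analogue of the non-explicit Theorem~\ref{thm:nonexplicit-high-rate-local-list-rec}, where random linear codes (Corollary~\ref{cor:list-rec-random}) played the role of~$C$.

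First, I would select the AG subcode $C: \F_{2^s}^{(1-\epsilon/2)n} \to \F_{2^s}^n$ from Theorem~\ref{thm:ag} with rate $1-\epsilon/2$, which gives $(\alpha,\ell,L_0)$-global list recoverability for $\alpha=\Omega(\epsilon)$ and $L_0 = \exp(\exp((\log^* n)^2)) \cdot (\text{poly in }s,\ell,1/\epsilon)$, together with relative distance at least $\Omega(\epsilon/t)^2$ (after squaring so that its $t$-th tensor power has distance $(\Omega(\epsilon/t))^{2t}$ as required). Crucially, the modification from Appendix~\ref{app:ag} ensures $C$ is $\F_2$-linear (not merely linear over a subfield), which is essential for Lemma~\ref{lem:local-list-rec}; the hypothesis $s \geq c(\log\ell) t/\epsilon$ is what allows the concatenation trick there to preserve the rate. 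Next, I would invoke Theorem~\ref{thm:high-rate-lccs} to obtain a systematic $\F_{2^s}$-linear LDC $\widehat{C}: \F_{2^s}^{\widehat{k}} \to \F_{2^s}^{(\rho n)^t}$ of rate $\geq 1-\epsilon/2$ (so the composition has rate $\geq 1-\epsilon$), with query complexity $\widehat{Q} = 2^{O(\sqrt{\log N\cdot \log\log N})}$ for $N = (\rho n)^t$, decoding error $\widehat{\alpha} = \Omega(1)$, and near-linear encoding time.

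With these ingredients fixed, the resulting code $\widetilde{C} = C^{\otimes t}(\widehat{C})$ is $\F_2$-linear of rate $\geq (1-\epsilon/2)^2 \geq 1-\epsilon$, and the distance bound follows from $\dist(C^{\otimes t}) \geq \dist(C)^t$. Plugging $L_0$, $\alpha$, $\widehat{\alpha}$, $\widehat{Q}$ into the parameter formulas of Lemma~\ref{lem:local-list-rec} and simplifying, the outer list size $L_0$ (which is doubly exponential in $\log^* n$) gets raised to a $\poly(t,s,\ell,1/\epsilon)$ power, giving the stated $\widetilde{L}$; the query complexity becomes $\widehat{Q}\cdot n$ times the overhead, producing the $n^{1/t}\cdot 2^{O(\sqrt{\log n\log\log n})}$ leading factor; and the preprocessing and running times follow analogously. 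The encoding time is the sum of encoding $\widehat{C}$ (near-linear by \cite{KMRS15-STOC}) and encoding $C^{\otimes t}$, the latter being $t$ rounds of encoding $C$ along one axis at a time, each round taking $n^{1+O(1/t)}$ time since a single AG-subcode encoding runs in polynomial time in its block length.

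The main obstacle is the bookkeeping: faithfully tracking how $L_0$'s $\exp(\exp((\log^*n)^2))$ behavior propagates through the $t$-fold recursion in Lemma~\ref{approx-local-list-rec} (which raises it to the $\subp{m}{2}^{t-1}$ power) while still landing at $\exp(\exp(t^2\ell s/\epsilon \cdot \exp(\log^* n) + t\log s))$ rather than something worse, and separately verifying that the modified AG subcodes of Appendix~\ref{app:ag} genuinely satisfy list\emph{-recoverability} (not just list-decodability as originally analyzed in \cite{GX13}) with the claimed parameters and with full $\F_2$-linearity. Once those two points are pinned down, everything else is a matter of substituting into the formulas already proved in Sections~\ref{subsec:approx-local-list-rec}--\ref{subsec:local-list-rec}.
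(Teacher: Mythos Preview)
Your overall plan is the same as the paper's, but the instantiation contains a genuine error in the choice of the base code's block length and rate, and this breaks the ``high rate'' conclusion.

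You take $C:\F_{2^s}^{(1-\epsilon/2)n}\to\F_{2^s}^n$ of rate $1-\epsilon/2$. Then $C^{\otimes t}$ has block length $n^t$ and, crucially, rate $(1-\epsilon/2)^t$. Your rate computation ``$(1-\epsilon/2)^2\ge 1-\epsilon$'' ignores the $t$-fold tensoring entirely; with your choice the final code $C^{\otimes t}(\widehat C)$ has rate $(1-\epsilon/2)^{t+1}$, which is nowhere near $1-\epsilon$ once $t$ is large. Likewise, the query complexity coming out of Lemma~\ref{lem:local-list-rec} is $\widehat Q\cdot n\cdot(\text{overhead})$ where $n$ is the block length of the base code $C$, so with your $C$ you get leading factor $n$, not $n^{1/t}$. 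The paper instead chooses $C$ to have block length $n^{1/t}$ and rate $(1-\epsilon/2)^{1/t}\le 1-\epsilon/(8t)$, so that $C^{\otimes t}$ has block length $n$ and rate $1-\epsilon/2$; the condition $t_n\le\sqrt{\epsilon\log_q(n)/(c\ell)}$ in the theorem statement is there precisely to ensure $n^{1/t}$ is large enough for Theorem~\ref{thm:ag} to apply. This is also why the distance and list-recovery radius of $C$ are $\Omega((\epsilon/t)^2)$ rather than $\Omega(\epsilon^2)$: the rate of $C$ is only $1-O(\epsilon/t)$.

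A second, smaller point: you write that the Appendix~\ref{app:ag} modification makes $C$ ``$\F_2$-linear (not merely linear over a subfield).'' This is backwards. The unmodified AG subcodes live over $\F_{q^m}$ but are only $\F_q$-linear; the concatenation in Appendix~\ref{app:ag} brings the alphabet down to $\F_q$ while keeping $\F_q$-linearity, so the result is a fully $\F_{2^s}$-linear code over $\F_{2^s}$. Full linearity over the alphabet is what Definition~\ref{defn:tensor} (and hence Lemma~\ref{lem:local-list-rec}) requires; mere $\F_2$-linearity would not suffice to control the rate of the tensor product.
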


\begin{remark}[Super-constant $t$]
	Theorem~\ref{thm:explicit-high-rate-local-list-rec} is interesting even when $t_n = t$ is a sufficiently large constant that does not depend on $n$.  For our applications, we will need to take $t_n$ to be slightly super-constant, so we allow for that in the statement of Theorem~\ref{thm:explicit-high-rate-local-list-rec}.
	\end{remark}

To prove the above theorems we use the following result from~\cite{KMRS15-STOC, KMRS15_subpoly} that shows the existence of high-rate LDCs with sub-polynomial query complexity.

\begin{theorem}[\cite{KMRS15_subpoly}, Theorem 1.3]\label{thm:high-rate-lccs}
There is a constant $c$ so that the following holds.
Choose $\rho \in [0,1]$ and $\varepsilon>0$.
Then there exists an infinite sequence $\{n_i\}_i$ such that for any $n=n_i$ in the sequence and for any $s \geq 1/\epsilon^c$ there exists an
$\F_2$-linear code $C_n$ satisfying:
\begin{enumerate}
\item $C_{n}: \F_{2^s}^{\rho n} \to \F_{2^s}^n$ has  rate $\rho$ and relative distance
at least $1-\rho-\varepsilon$.
\item $C_{n}$ is $(2^{O(\sqrt{\log n\cdot\log\log n})},\frac{1-\rho-\varepsilon}{2})$-locally decodable in time
$s\cdot 2^{O(\sqrt{\log n\cdot\log\log n})}$.
\item $C_n$ can be encoded in time $n \cdot 2^{O(\sqrt{\log n\cdot\log\log n})}$.
\end{enumerate}
\end{theorem}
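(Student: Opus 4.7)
The statement of Theorem~\ref{thm:high-rate-lccs} is imported from \cite{KMRS15_subpoly}, so the cleanest ``proof'' is simply to cite that work. But to give a genuine sketch of how one would \emph{prove} such a statement from scratch, I would proceed by combining three ingredients: (a) a base high-rate locally correctable code with polynomial query complexity, (b) a recursive composition that drives the query complexity down to sub-polynomial, and (c) a distance amplification step that pushes the relative distance to $1-\rho-\varepsilon$ while preserving rate and locality.

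For the base code, I would use \emph{multiplicity codes} of Kopparty-Saraf-Yekhanin (or \emph{lifted affine-invariant codes}). These are defined by evaluating multivariate polynomials, together with their derivatives up to some order, on $\F_q^m$. By choosing the number of variables $m$ and the derivative order carefully, one obtains an $\F_q$-linear code of rate arbitrarily close to $1$ that is locally correctable with query complexity $O(q^m)^{1/m} = n^{1/m}$, using correction along affine lines. By tuning $m$, this gives a base code whose query complexity is any fixed polynomial $n^{\beta}$ with rate $\rho$ and distance $\Omega(1/\mathrm{poly}(m))$.

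Next, I would apply the composition/recursion that is the heart of \cite{KMRS15_subpoly}. The idea is: given a code $C_0$ on $n_0$ symbols with query complexity $Q_0$, build a new code $C_1$ on a much larger length $n_1$ whose local correction queries only a $C_0$-like inner block and invokes the $C_0$ correction. Iterating this composition roughly $O(\sqrt{\log n / \log \log n})$ times and balancing parameters yields final query complexity $2^{O(\sqrt{\log n \cdot \log \log n})}$, since each round of composition trades a small loss in rate for a polynomial reduction in the exponent of locality. Linearity and systematic structure are preserved throughout. To upgrade distance, I would apply the Alon-Edmunds-Luby expander-based distance amplification (as used again in Section~\ref{sec:caplldc} of the present paper), which takes a high-rate $\F_{2^s}$-linear code of inverse-polylogarithmic distance and produces a code of the same rate with relative distance $1-\rho-\varepsilon$, at the cost of only a constant multiplicative blow-up in query complexity and with nearly-linear encoding time.

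The main obstacle is the bookkeeping: one must track rate, distance, alphabet size, query complexity, encoding time, and decoding time simultaneously across the recursive composition, and verify that they all land at the claimed bounds. In particular, the requirement $s \geq 1/\varepsilon^c$ enters because each AEL amplification step requires alphabet size large enough to support the expander-based redistribution of symbols, and the $2^{O(\sqrt{\log n \cdot \log \log n})}$ overhead in encoding (item 3) must be shown to be stable under composition. Once these parameter trade-offs are carried out, items 1--3 of the theorem follow directly from the properties of the final composed code.
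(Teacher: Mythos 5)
You are right that the paper contains no proof of this statement: it is imported directly from \cite{KMRS15_subpoly} (their Theorem 1.3), and the only content the paper adds is the remark immediately following the theorem, which records the three adaptations your sketch only partially touches --- the cited result is stated for locally \emph{correctable} codes and must be converted to local \emph{decoding} of message symbols (via the discussion at the end of Section 1.1 of \cite{KMRS15_subpoly}), it guarantees only some $s_0 \leq 1/\varepsilon^c$ whereas the version used here needs every $s \geq 1/\varepsilon^c$, and the encoding-time bound of item 3 is not stated there explicitly but is extracted from their Lemma 3.2 together with \cite{Kop15}, Appendix A. Your from-scratch outline (a base high-rate LCC, recursive composition, and AEL distance amplification) is consistent in spirit with how \cite{KMRS15_subpoly} actually proves the result, so beyond noting those three deltas, citing that work is exactly what the paper does.
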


\begin{remark} We remark about a few differences between the above theorem and Theorem 1.3 in \cite{KMRS15_subpoly}:
\begin{enumerate}
\item Theorem 1.3 in \cite{KMRS15_subpoly} talks about locally correctable codes (LCCs) instead of locally decodable codes (LDCs). The difference between LCCs and LDCs is that for LCCs the local correction algorithm is required to decode codeword bits as opposed to message bits. However, it can be shown that the result holds for LDCs as well (see discussion at end of Section 1.1 in \cite{KMRS15_subpoly}).
\item Theorem 1.3 in \cite{KMRS15_subpoly} only states the existence of {\em some} $s_0 \leq 1/\varepsilon^c$ for which the above holds, however it can be verified that the result holds for {\em any} $s \geq s_0$ as well.
\item Encoding time is not stated explicitly in  \cite{KMRS15_subpoly}, Lemma 3.2 and  \cite{Kop15}, Appendix A.
\end{enumerate}
\end{remark}

We proceed to the proof of Theorem \ref{thm:nonexplicit-high-rate-local-list-rec}.

\begin{proof}[Proof of theorem \ref{thm:nonexplicit-high-rate-local-list-rec}]
Let $c'$ be a sufficiently large constant for which both Corollary \ref{cor:list-rec-random} and Theorem \ref{thm:high-rate-lccs} hold, and suppose that $s \geq \max \{(4/\epsilon)^{c'}, 16 {c'} (\log \ell)  t  /\eps\}$.  

Let $C: \F_{2^s}^{((1-\epsilon/2)n)^{1/t}} \to \F_{2^s}^{n^{1/t}}$ be a linear code of rate $(1- \epsilon/2)^{1/t} \leq 1- \epsilon/(8t)$
(the inequality holds since $(1-x)^y \leq 1 - xy/4$ for $x,y \in [0,1]$, see e.g. Fact 2.1 in \cite{KMRS15_subpoly})
that is $( \frac {\epsilon} {16 t} , \ell, 2^{O(s\ell t/\epsilon)})$-list recoverable
whose existence is guaranteed by Corollary \ref{cor:list-rec-random} for sufficiently large $n$ (depending on $\epsilon, \ell, t,s$). Note furthermore that by Theorem \ref{thm:GV} we may assume that $C$ has relative distance at least
$  {\epsilon} /{(16 t)}$. Finally, note that
one may assume that the code $C$ is systematic by performing Gaussian elimination on the generating matrix of $C$.
Let $\widehat C: \F_{2^s}^{(1-\epsilon)n} \to \F_{2^s}^{(1-\epsilon/2)n}$ be an $\F_2$-linear code of rate
$\frac{1-\epsilon} {1-\epsilon/2} \leq 1- \frac \epsilon 2$ that is
$\big(2^{O(\sqrt{\log n \cdot \log \log n})} ,\frac \epsilon 4\big)$-locally decodable given by Theorem \ref{thm:high-rate-lccs} for infinite values of $n$ (depending on $\epsilon$).

Let $C_n:=C^{\otimes t}(\widehat C)$ for any $n$ for which both $C$ and $\hat C$ exist.
Then $C_n: \F_{2^s}^{(1-\epsilon)n} \to \F_{2^s}^{n}$  is an $\F_2$-linear code of rate $1-\epsilon$ and relative distance at least
$(\epsilon/(16 t))^{t}$.
Moreover, by Lemma \ref{lem:local-list-rec} the code $C_n$ is
$(Q,\alpha, \ell, L)$-locally list recoverable for $\alpha = (\epsilon/t)^{O(t)}$,
$$ Q = n^{1/t} \cdot 2^{O(\sqrt{\log n \cdot \log \log n})} \cdot \bigg(\frac{s \ell t} {\epsilon}\bigg)^{t} \cdot (t/\epsilon)^{O(t^2)} =n^{1/t} \cdot 2^{O(\sqrt{\log n \cdot \log \log n})} \cdot (s\ell)^t \cdot (t/\epsilon)^{O(t^2)} ,$$
and
$$L = \exp \left((s \ell)^{t} \cdot (t/\epsilon)^{O(t^2)} \right).$$
\end{proof}

Next we prove Theorem \ref{thm:explicit-high-rate-local-list-rec}.

\begin{proof}[Proof of Theorem \ref{thm:explicit-high-rate-local-list-rec}]

Fix any $n \in \mathbb{N}$
so that Theorem \ref{thm:high-rate-lccs} may be instantiated with block length $(1 - \eps/2)n$, and so that
\begin{equation}
\label{eq:chooset}
n^{1/t_n} \geq q^{8c_0\ell t_n/\epsilon},
\end{equation}
where $c_0$ is the constant from the statement of Theorem~\ref{thm:ag}.
By Theorem~\ref{thm:high-rate-lccs} and the assumption on $t_n$, there are infinitely many such $n$.
For the rest of the proof, we will denote $t_n$ by $t$, since $n$ is now fixed.

Let $c'$ be a sufficiently large constant (independent of $n$) for which both Theorem \ref{thm:ag} and Theorem \ref{thm:high-rate-lccs} hold, and suppose that $s \geq \max \{(4/\epsilon)^{c'}, 8 c'  (\log \ell) t  /\eps\}$ is even.
Then the code $C_n$ is constructed as follows.

Let  $C: \F_{2^s}^{((1-\epsilon/2)n)^{1/t}} \to \F_{2^s}^{n^{1/t}}$ be a linear code of rate $1 - \epsilon/(8t) \geq (1- \epsilon/2)^{1/t}$ and relative distance $\Omega(({\epsilon} /{8t})^2)$, that is $(\Omega(({\epsilon} /{8t})^2), \ell, L_0)$-list recoverable whose existence is guaranteed by Theorem \ref{thm:ag}, where
\begin{align*}
L_0 &= \exp_{2^s} \inparen{ \exp_{2^s} \inparen{ \frac{ \ell t }{\eps} \cdot \exp(\log^*(n))}  } \\
&= \exp\inparen{ \exp\inparen{ \frac{\ell s t}{\eps} \exp(\log^*(n)) } + \log(s)}.
\end{align*}
Here, we are using \eqref{eq:chooset} to ensure that we may choose the block length $n^{1/t}$ and rate $1 - \epsilon/(8t)$ in Theorem~\ref{thm:ag}.

Once more,
one may further assume that the code $C$ is systematic by performing Gaussian elimination on the generating matrix of $C$.
Let $\widehat C: \F_{2^s}^{(1-\epsilon)n} \to \F_{2^s}^{(1-\epsilon/2)n}$ be an $\F_2$-linear code of rate
$\frac{1-\epsilon} {1-\epsilon/2} \leq 1- \frac \epsilon 2$ that is
$\big( 2^{O(\sqrt{\log n \cdot \log \log n})}, \frac \epsilon 4\big)$-locally decodable given by Theorem \ref{thm:high-rate-lccs}.

Let $C_n:=C^{\otimes t}(\widehat C)$.
Then $C_n: \F_{2^s}^{(1-\epsilon)n} \to \F_{2^s}^{n}$  is an $\F_2$-linear code of rate $1-\epsilon$ and relative distance at least $(\Omega({\epsilon} /{t}))^{2t}$. Moreover, by Lemma \ref{lem:local-list-rec} the code $C_n$ is
$(Q,\alpha, \ell, L)$-locally list recoverable for $\alpha = (\epsilon/t)^{O(t)}$,
$$ Q = n^{1/t} \cdot 2^{O(\sqrt{\log n \cdot \log \log n})} \cdot  \exp\left(   \frac{ t^2 \ell s } {\eps} \cdot \exp(\log^*n) +t\log s\right),$$
and
$$L = \exp \left( \exp\left(   \frac{ t^2 \ell s } {\eps} \cdot \exp(\log^*n) +t\log s\right) \right).$$

Next observe that since $C$ can be list recovered in time $\poly(n^{1/t},L_0)$, and $\widehat C$ can be locally decoded in time $2^{O(\sqrt{\log n \cdot \log \log n})}$, the local list recovery algorithm for $C_n$
has preprocessing time
$$ \Tpre = \exp \left( \exp\left(   \frac{ t^2 \ell s } {\eps} \cdot \exp(\log^*n) +t\log s \right) \right) \cdot \log n ,$$
and running time
$$ T =    n^{O(1/t)} \cdot 2^{O(\sqrt{\log n \cdot \log \log n})} \cdot  \exp \left( \exp\left(   \frac{ t \ell s } {\eps} \cdot \exp(\log^*n) +\log s\right) \right) .$$

Finally, since $C$ can be encoded in time $\poly(n^{1/t})$
then $C^{\otimes t}$ can be encoded in time $t \cdot n \cdot \poly(n^{1/t})$,
and consequently the encoding time of $C_n:=C^{\otimes t}(\widehat C)$ is
$$
n \cdot 2^{O(\sqrt{\log n \cdot \log \log n})} + t \cdot n^{1+O(1/t)}.
$$
\end{proof}

\section{Capacity achieving locally list decodable codes}\label{sec:caplldc}
Next we show the existence of capacity achieving locally list decodable codes over large (but constant) alphabet. As before we exhibit two instantiations of this result: In the first instantiation we obtain capacity achieving locally list decodable codes that have constant output list size, however these codes cannot be efficiently encoded or list decoded;  In the second instantiation we obtain capacity achieving locally list decodable codes that are efficiently encodable (in nearly-linear time) and efficiently list decodable (in sub-linear time), however these codes have slightly super-constant output list size. These latter codes can also achieve $n^{o(1)}$ query complexity at the cost of increasing the alphabet and output list sizes to $n^{o(1)}$.
Towards our GV bound application, we actually present a stronger version that applies also to list recovery, where list decoding corresponds to the special case in which the input list size $\ell$ equals $1$.

\begin{theorem}[Capacity achieving locally list decodable / recoverable codes, non-efficient]\label{thm:nonexplicit-cap-local-list-rec}
There is a constant $c$ so that the following holds.
Choose $\rho \in [0,1]$, $\eps > 0$, a positive integer $\ell$, and sufficiently large integer $t$.  Let
$$s_0:= \max \big\{c (1-\rho-\epsilon/2) \log(1/(1-\rho-\epsilon/2)) /\epsilon,  c (\rho+\epsilon/2) \log(1/(\rho+\epsilon/2)) /\epsilon,c (\log\ell)/\eps\big\},$$ and suppose that  $s \geq  \max \big\{(t/\epsilon)^{ct},  c  \ell t /\eps^2\big\} \cdot s_0/\rho.$

Then there exists  an infinite family of $\F_{2}$-linear codes $\left\{ C_{n}\right\} _{n}$
such that the following holds.
\begin{enumerate}
\item $C_n: \F_{2^s}^{\rho n} \to \F_{2^s}^n$ has rate $\rho$ and relative distance at least $1-\rho-\epsilon$.
\item $C_n$ is $(Q,1-\rho-\epsilon, \ell, L)$-locally list recoverable for
$$ Q  =n^{1/t} \cdot 2^{O(\sqrt{\log n \cdot \log \log n})} \cdot s^{t+O(1)} \cdot 2^{O(s_0 \ell t /\epsilon)} \cdot (t/\epsilon)^{O(t^2)}  ,$$
and
$$L = \exp \left(s^t \cdot 2^{O(s_0 \ell t /\epsilon)} \cdot (t/\epsilon)^{O(t^2)}  \right) .$$
\end{enumerate}

In particular, when $\rho, \epsilon, \ell, t,s$ are constant we get that  $Q = n^{1/t+o(1)}$ and $L=O(1)$.
\end{theorem}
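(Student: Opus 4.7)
The plan is to derive Theorem~\ref{thm:nonexplicit-cap-local-list-rec} as a consequence of the high-rate local list-recovery result in Theorem~\ref{thm:nonexplicit-high-rate-local-list-rec}, using the Alon--Edmunds--Luby (AEL) distance-amplification transformation (as formulated for local list recovery in \cite{KMRS15-STOC,GKORS17}) as a black box. Morally, AEL takes a code of rate very close to $1$ with some decoding/recovery guarantee and produces a code of rate $\rho$ that is $(1-\rho-\epsilon,\ell,L)$-list recoverable, by concatenating with a small capacity-achieving inner code and then redistributing symbols along the edges of a sufficiently good expander graph. The crucial property we need is that this transformation is local-preserving: a local query to a coordinate of the AEL codeword corresponds to a bounded number of local queries to the outer high-rate code, so query complexity is preserved up to a multiplicative constant depending on the expander degree (i.e.\ on $\rho,\epsilon,\ell$), and the output list size only changes by constant factors.

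First, I would choose an auxiliary parameter $\epsilon' = \epsilon'(\rho,\epsilon,\ell)$ small enough that AEL, applied to a rate-$(1-\epsilon')$ code that is locally $(\alpha_{\mathrm{in}},\ell',L_{\mathrm{in}})$-list recoverable for appropriate $\alpha_{\mathrm{in}},\ell'$, produces a rate-$\rho$ code that is $(1-\rho-\epsilon,\ell,L)$-locally list recoverable. The AEL analysis (as in \cite{KMRS15-STOC,GKORS17}) fixes the required expander degree $d=d(\rho,\epsilon,\ell)$, the required inner code (a small capacity-achieving list-recoverable code over alphabet of size roughly $2^{s_0}$, whose existence is guaranteed by Corollary~\ref{cor:list-rec-random} with a random linear construction), and the required input-list size $\ell'$ to the outer code, which is determined by the inner code's output list size.

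Second, I would invoke Theorem~\ref{thm:nonexplicit-high-rate-local-list-rec} with $\epsilon \leftarrow \epsilon'$, the input list size $\ell \leftarrow \ell'$, and tensor power $t$, over alphabet $\F_{2^{s'}}$ where $s'$ is chosen to meet the hypotheses of that theorem. This yields an $\F_2$-linear high-rate code $C_n^{\mathrm{out}}$ of rate $1-\epsilon'$, relative distance at least $(\epsilon'/(16t))^t$, and local $(\alpha_{\mathrm{in}},\ell',L_{\mathrm{in}})$-list recoverability with query complexity $n^{1/t}\cdot 2^{O(\sqrt{\log n\log\log n})}\cdot (s'\ell')^t\cdot(t/\epsilon')^{O(t^2)}$ and output list size $L_{\mathrm{in}}=\exp((s'\ell')^t(t/\epsilon')^{O(t^2)})$. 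One then needs to check that $\alpha_{\mathrm{in}}=(\epsilon'/t)^{O(t)}$ from that theorem is at least the threshold AEL requires of its outer code for the target error radius $1-\rho-\epsilon$; this is the usual AEL parameter balance and is where we need $t$ to be sufficiently large and $\epsilon'$ sufficiently small relative to $\rho,\epsilon$.

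Third, I would apply the AEL transformation to $C_n^{\mathrm{out}}$ with the chosen expander and inner code to produce the final code $C_n$. Rate becomes $\rho$, relative distance becomes at least $1-\rho-\epsilon$, alphabet is $\F_{2^s}$ with $s$ the product of $s'$ and the inner block length (which is a constant depending on $\rho,\epsilon,\ell$ once $s_0$ is fixed), and $\F_2$-linearity is preserved because both the outer code and the inner code are $\F_2$-linear. For local list recovery of $C_n$, on input a coordinate $i$ we simulate the natural reduction: the AEL decoder reads the $d$ inner blocks touching $i$, brute-forces list recovery on each (producing lists of size $\ell'$ over the outer alphabet), and feeds those lists into the local list-recovery algorithm for $C_n^{\mathrm{out}}$. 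Each outer query expands by a factor of $d$ (one AEL block), so $Q$ inflates by a $\mathrm{poly}(d)\cdot 2^{O(s_0\ell t/\epsilon)}$ factor, which matches the bound stated. The list size $L$ is essentially $L_{\mathrm{in}}$ times a factor accounting for the inner brute-force, giving the claimed bound. The main obstacle is purely bookkeeping: carefully verifying that the constant $c$ and the threshold on $s$ given in the theorem statement are consistent with both the input hypothesis of Theorem~\ref{thm:nonexplicit-high-rate-local-list-rec} applied to $C_n^{\mathrm{out}}$ (in particular $s'\geq\max\{1/(\epsilon')^c,c(\log\ell')t/\epsilon'\}$) and the alphabet/inner-code requirements of the AEL step (which scale with $s_0$ and the expander degree), and tracking the blow-up of $Q$, $L$ through the composition exactly as stated.
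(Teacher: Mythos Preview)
Your proposal is correct and follows essentially the same approach as the paper: apply the AEL distance-amplification lemma for local list recovery (stated in the paper as Lemma~\ref{lem:AEL-list-rec}, imported from \cite{GKORS-ECCC}) with the high-rate locally list-recoverable code of Theorem~\ref{thm:nonexplicit-high-rate-local-list-rec} as $C_{\out}$ and a random linear capacity-achieving list-recoverable code from Corollary~\ref{cor:list-rec-random} as $C_{\inn}$. One small correction: the requirement is not that $t$ be large so that $\alpha_{\out}=(\epsilon'/t)^{O(t)}$ clears some fixed AEL threshold; rather, AEL accepts any positive $\alpha_{\out}$ but forces the expander degree $d$ (and hence the inner block length $n_{\inn}=s/s_0$) to scale like $(\delta_{\out}\alpha_{\out}\epsilon)^{-c}$, which is exactly what drives the lower bound $s\ge (t/\epsilon)^{ct}\cdot s_0/\rho$ in the hypothesis.
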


\begin{theorem}[Capacity achieving locally list decodable / recoverable codes, efficient]\label{thm:explicit-cap-local-list-rec}
There is a constant $c$ so that the following holds.
Choose $\rho \in [0,1]$, $\eps > 0$, and a positive integer $\ell$.
Let $\left\{t_n\right\}_n$ be a sequence of positive integers, non-decreasing with $n$, so that $t_0$ is sufficiently large, and so that
\[ t_n \leq \sqrt{ \frac{\eps \log_q(n) }{ c \ell }}.\]
Let
$$s_0:= \max \big\{c (1-\rho-\epsilon/2) \log(1/(1-\rho-\epsilon/2)) /\epsilon,  c (\rho+\epsilon/2) \log(1/(\rho+\epsilon/2)) /\epsilon,c (\log\ell)/\eps\big\},$$ and for each choice of $t$ choose $s = s(t)$ so that $s \geq    \max \big\{(t/\epsilon)^{ct},  c  \ell t /\eps^2\big\} \cdot s_0/\rho$ is even.

Then there exists  an infinite family of $\F_{2}$-linear codes $\left\{ C_{n}\right\} _{n}$
such that the following holds.
Below, we use $t$ to denote $t_n$ and $s$ to denote $s(t_n)$ to simplify notation.
\begin{enumerate}
\item $C_n: \F_{2^s}^{\rho n} \to \F_{2^s}^n$ has rate $\rho$ and relative distance at least $1-\rho-\epsilon$.
\item $C_n$ is $(Q,1-\rho-\epsilon, \ell, L)$-locally list recoverable for
$$ Q = n^{1/t} \cdot 2^{O(\sqrt{\log n \cdot \log \log n})} \cdot  \exp\left(  t^2 s  \cdot 2^{O(s_0 \ell /\epsilon)}   \cdot \exp(\log^*n) +t\log s\right)$$
and
$$L = \exp \left( \exp\left(   t^2 s \cdot 2^{O(s_0 \ell /\epsilon)} \cdot \exp(\log^*n) +t\log s\right) \right).$$

\item The local list recovery algorithm for $C_n$ has preprocessing time
$$ \Tpre = \exp \left( \exp\left(  t^2s  \cdot 2^{O(s_0 \ell /\epsilon)} \cdot \exp(\log^*n) +t\log s\right) \right) \cdot \log n ,$$
and running time
$$ T =    n^{O(1/t)} \cdot 2^{O(\sqrt{\log n \cdot \log \log n})} \cdot  \exp \left( \exp\left(   t s \cdot 2^{O(s_0 \ell /\epsilon)}  \cdot \exp(\log^*n) +\log s\right) \right) .$$
\item $C_n$ can be encoded in time
$$ n \cdot 2^{O(\sqrt{\log n \cdot \log \log n})} +  t \cdot  n^{1+O(1/t)} + O(n \cdot 2^{s^2})+s^{O(1)} \cdot n \cdot \polylog n.$$
\end{enumerate}

In particular,
\begin{itemize}
\item
When $\rho, \epsilon, \ell, t,s$ are constant we see that $Q = n^{1/t+o(1)}$,
$L=  \exp(\exp(\exp(\log^*n)))$, $T_{\text{pre}}=\log^{1+o(1)} n$, $T  = n^{O(1/t)}$, and encoding time is $n^{1+O(1/t)}$.
\item Alternatively, when $\rho, \epsilon, \ell$ are constant, 
$t = t_n = O\inparen{ \frac{\log\log\log(n)}{ (\log\log\log\log(n))^2}}$, 
and $s=t^{O(t)}$  we see that $Q,  L, T_{\text{pre}}, T$ are of the form $n^{o(1)}$ 
and encoding time is $n^{1+O(1/t)}=n^{1+o(1)}$.
\end{itemize}
\end{theorem}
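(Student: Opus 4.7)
The plan is to derive Theorem~\ref{thm:explicit-cap-local-list-rec} by plugging the high-rate efficiently locally list-recoverable codes from Theorem~\ref{thm:explicit-high-rate-local-list-rec} into the expander-based distance-amplification transformation of Alon--Edmunds--Luby~\cite{AEL95}, as formalized for local list recovery in \cite{GKORS17}. The AEL reduction takes a high-rate $(1-\eps')$ locally list-recoverable code $C^{\mathrm{out}}$ with input list parameter $d\ell$, together with a short inner code $C^{\mathrm{in}}$ that achieves list-recovery capacity on constant block length, and returns a capacity-achieving locally list-recoverable code of rate $\rho$ whose query complexity, output list size, preprocessing time, and running time match those of $C^{\mathrm{out}}$ up to multiplicative overheads depending only on $\rho,\eps,\ell$.

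First I would invoke Theorem~\ref{thm:explicit-high-rate-local-list-rec} to obtain $C^{\mathrm{out}}\colon \F_{2^{s'}}^{(1-\eps')n'}\to \F_{2^{s'}}^{n'}$, for $\eps'=\Theta(\eps)$, $s'\le s$, and inflated input list parameter $\ell'=d\ell$, where $d=O(1/\eps^2)$ is the left-degree of the expander used in the AEL step. Here $n'=n/d$ and the tensor exponent is the same $t=t_n$ as in the statement. Next I would pick an inner code $C^{\mathrm{in}}$ of block length $\Theta(s)$ and rate $\rho/(1-\eps')$ which $(1-\rho-\eps/2,\ell,L^{\mathrm{in}})$-list recovers, with $L^{\mathrm{in}}=2^{O(s_0\ell/\eps)}$; such a code is guaranteed by Corollary~\ref{cor:list-rec-random} (over the subfield $\F_{2^{s_0/\rho}}$ and then lifted to $\F_{2^s}$). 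Because its block length is $\Theta(s)$, brute-force list recovery of $C^{\mathrm{in}}$ runs in time $\exp(s^2)$, which accounts for the additive $O(n\cdot 2^{s^2})$ term in the encoding bound. The final code $C_n$ encodes a message with $C^{\mathrm{out}}$, encodes each block of $C^{\mathrm{out}}$'s codeword with $C^{\mathrm{in}}$, and permutes the resulting symbols through an explicit near-Ramanujan bipartite $d$-regular expander constructible in time $s^{O(1)}\cdot n\polylog n$.

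Second, I would invoke the AEL analysis: the expander property guarantees that if the received word is $(1-\rho-\eps)$-close to a codeword of $C_n$, then after undoing the permutation at most an $\eps'$-fraction of the inner blocks contain more than $(1-\rho-\eps/2)$-fraction errors. Brute-force list-recovering the remaining blocks produces, for each outer coordinate, a list of size $\le d\ell$ whose collection is $\Theta(\eps')$-close to the outer codeword. The local list-recovery algorithm of $C^{\mathrm{out}}$ can therefore be simulated on-the-fly: each query of $C^{\mathrm{out}}$'s local algorithm is answered by performing $d$ queries to the amplified code and a single brute-force inner decoding. This multiplies $Q$ by $d$, multiplies $T$ by $\exp(s)$, and leaves $L$ and $\Tpre$ essentially unchanged. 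The rate becomes $(1-\eps')\cdot \rho/(1-\eps')=\rho$ and the distance bound follows from the standard expander-mixing argument.

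Third, I would do the parameter bookkeeping, substituting the $Q,L,\Tpre,T$ expressions from Theorem~\ref{thm:explicit-high-rate-local-list-rec} with $\ell$ replaced by $d\ell$ and alphabet $2^{s'}$ replaced by $2^s$, and absorbing the $2^{O(s_0\ell/\eps)}$ factor from the inner code into the exponent. The encoding time decomposes as $n\cdot 2^{O(\sqrt{\log n\log\log n})}$ for the outer LDC, $t\cdot n^{1+O(1/t)}$ for the tensor encoding, $O(n\cdot 2^{s^2})$ for the inner encoding, and $s^{O(1)}\cdot n\polylog n$ for the expander permutation. Finally, the two ``in particular'' regimes are obtained by plugging in the two choices of $t_n$; for the slowly growing $t_n=O\!\left(\tfrac{\log\log\log n}{(\log\log\log\log n)^2}\right)$ with $s=t^{O(t)}$, one checks that all of $\exp(\exp(t^2 s\, 2^{O(s_0\ell/\eps)}\exp(\log^* n)))$ is $n^{o(1)}$. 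The main obstacle is purely accounting: verifying that the constraint $t_n\le \sqrt{\eps\log_q(n)/(c\ell)}$ from Theorem~\ref{thm:explicit-high-rate-local-list-rec} still holds after the substitution $\ell\mapsto d\ell$ (which is handled by enlarging $c$), and that the super-constant $t_n$ regime still falls inside the parameter window where the expander and AG ingredients are explicit and efficient; both have been handled in prior work and are essentially inherited.
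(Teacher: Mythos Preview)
Your overall architecture is right — apply the AEL distance amplification (Lemma~\ref{lem:AEL-list-rec}) with the high-rate codes of Theorem~\ref{thm:explicit-high-rate-local-list-rec} as the outer code and a short random linear list-recoverable code (Corollary~\ref{cor:list-rec-random}) as the inner code. That is exactly what the paper does. However, there is one genuine error in how you wire the pieces together, and it matters for the parameters.

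You set the outer code's input list parameter to $\ell' = d\ell$, where $d$ is the expander degree, and you assert that ``brute-force list-recovering the remaining blocks produces, for each outer coordinate, a list of size $\le d\ell$.'' This is not how the AEL reduction works for list recovery. In the construction of Lemma~\ref{lem:AEL-list-rec}, each outer symbol \emph{is} an inner message; after list-recovering an inner block you obtain up to $L_{\inn}$ candidate inner messages, and these $L_{\inn}$ candidates are precisely the input list handed to the outer code at that coordinate. The relevant constraint in Lemma~\ref{lem:AEL-list-rec} is $L_{\inn}\le \ell_{\out}$, not anything involving $d\ell$. Hence the outer code must be instantiated with input list size $\ell_{\out}=L_{\inn}=2^{O(s_0\ell/\eps)}$, not $d\ell=O(\ell/\eps^2)$. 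This is exactly where the factor $2^{O(s_0\ell/\eps)}$ in the exponents of $Q$, $L$, $\Tpre$, $T$ in the theorem statement comes from: plug $\ell\mapsto 2^{O(s_0\ell/\eps)}$ into the bounds of Theorem~\ref{thm:explicit-high-rate-local-list-rec} and the stated expressions fall out. With $\ell'=d\ell$ you would instead get a polynomial dependence on $\ell/\eps$ in that slot, which does not match the theorem.

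Two smaller points. First, the expander degree is not $O(1/\eps^2)$ here: Lemma~\ref{lem:AEL-list-rec} gives $d\le(\delta_{\out}\alpha_{\out}\eps)^{-c}$, and since $\delta_{\out},\alpha_{\out}=(\eps/t)^{O(t)}$ for the outer code, one gets $d=(t/\eps)^{O(t)}$; this is what forces the constraint $s\ge (t/\eps)^{ct}\cdot s_0/\rho$ (via $n_{\inn}=s/s_0\ge d$). Second, the $O(n\cdot 2^{s^2})$ term is an \emph{encoding} cost, not a decoding cost, so attributing it to ``brute-force list recovery of $C^{\inn}$'' is off; it arises from the one-time search/verification of the constant-size inner code.
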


To prove the above theorems we shall use
the following lemma from \cite{GKORS17} that gives a distance amplification procedure for local list recovery.

\begin{lemma}[\cite{GKORS-ECCC}, Lemma 5.4]\label{lem:AEL-list-rec}
There is a constant $c$ so that the following holds.
For any $\delta_{\out},\alpha_{\out},\epsilon>0$
there exists an integer
$d\leq(\delta_{\out}\cdot \alpha_{\out}\cdot \epsilon)^{-c}$
such that the following holds.
\begin{itemize}
\item Let $C_{\out}: (\Sigma_{\out})^{\rho_{\out} \cdot n_{\out}} \to (\Sigma_{\out})^{n_{\out}} $ be an $\F$-linear code of rate $\rho_{\out}$ and relative distance $\delta_{\out}$
that is $(Q,\alpha_{\out},\ell_{\out},L_{\out})$-locally list
recoverable.
\item Let $C_{\inn}: (\Sigma_{\inn})^{\rho_{\inn} \cdot n_{\inn}} \to (\Sigma_{\inn})^{n_{\inn}} $ be an $\F$-linear code of
rate $\rho_{\inn}$ and relative distance $\delta_{\inn}$ that is
$(\alpha_{\inn},\ell_{\inn},L_{\inn})$-(globally) list recoverable.
\item Additionally, suppose that $n_{\inn}\geq d$,
$|\Sigma_{\out}|=|\Sigma_{\inn}|^{\rho_{\inn}\cdot n_{\inn}}$ and $L_{\inn}\leq\ell_{\out}$.
\end{itemize}
Then there exists an $\F$-linear code $C:  (\Sigma_{\inn}^{n_{\inn}})^{(\rho_{\inn} \cdot \rho_{\out}) \cdot n_{\out}} \to  (\Sigma_{\inn}^{n_{\inn}})^{n_{\out}}$ of rate $\rho_{\inn}\cdot \rho_{\out}$ and relative
distance at least $\delta_{\inn}-\epsilon$ that is $(O( Q \cdot n_{\inn}^2\cdot\log n_{\inn}),\alpha_{\inn}-\epsilon,\ell_{\inn},L_{\out})$-locally
list recoverable.

Moreover,
\begin{itemize}
 \item If the  local list recovery algorithm for $C_{\out}$ has preprocessing time $T_{\text{pre},\out}$ and running time $T_{\out}$, and the global list recovery algorithm for $C_{\inn}$ has running time $T_{\inn}$, then  the local list recovery algorithm for $C$ has preprocessing time $T_{\text{pre},\out}$ and running time $$O(T_{\out}) + O(Q \cdot T_{\inn}) + \poly (Q, n_{\inn}, \ell_{\inn}). $$
 \item If the encoding times of $C_{\out},C_{\inn}$ are $\widehat T_{\out}, \widehat T_{\inn}$ respectively then the encoding time of $C$ is $$O(\widehat T_{\out}) + O(n_{\out} \cdot \widehat T_{\inn})+ n_{\out} \cdot \poly(n_{\inn},\log n_{\out}) .$$
 \end{itemize}
\end{lemma}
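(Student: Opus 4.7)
The plan is to follow the classical Alon--Edmunds--Luby distance amplification blueprint, adapted to the local list recovery setting. I would construct $C$ by fixing a $d$-regular bipartite expander $G$ on $n_{\out}$ left and $n_{\out}$ right vertices, with normalized second eigenvalue $\lambda = O(1/\sqrt{d})$; here $d$ will be chosen as a sufficiently large polynomial in $1/(\delta_{\out}\cdot\alpha_{\out}\cdot\eps)$. A message is first encoded by $C_{\out}$ to produce $n_{\out}$ symbols of $\Sigma_{\out} = \Sigma_{\inn}^{\rho_{\inn} n_{\inn}}$. Each such symbol is viewed as $\rho_{\inn}\cdot n_{\inn}$ mini-symbols of $\Sigma_{\inn}$ and shipped along the edges leaving its left vertex; on the right side each vertex collects the $d = \rho_{\inn}\cdot n_{\inn}$ mini-symbols incident to it, which is then encoded by $C_{\inn}$ into $n_{\inn}$ symbols of $\Sigma_{\inn}$. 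The resulting code $C$ has block length $n_{\out}$ over $\Sigma_{\inn}^{n_{\inn}}$ and rate $\rho_{\inn}\cdot\rho_{\out}$ as claimed. Linearity is immediate since the expander edge map is an $\F$-linear permutation.

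Next I would prove the distance/approximate-distance amplification via the expander mixing lemma. Fix input lists $S$ with $\dist(C(x),S)\le \alpha_{\inn}-\eps$ for some $x$. Call a right vertex $v$ \emph{bad} if the relative error on block $v$ (after decoding the $d$ mini-symbols against the list $S_v$) exceeds $\alpha_{\inn}$; call a left vertex $u$ \emph{bad} if the corresponding $C_{\out}$-coordinate cannot be recovered, meaning the symbol at $u$ is not in any list produced by globally list recovering $C_{\inn}$ on the right neighbors of $u$. By a standard expander-mixing calculation (choose $d$ so that $\lambda/\alpha_{\inn}\le\eps/2$), the fraction of bad right vertices is at most $O(\eps)$, and then the fraction of left vertices with more than an $\alpha_{\out}$-fraction of bad right neighbors is itself at most $\alpha_{\out}$. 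This yields the distance statement $\delta(C)\ge\delta_{\inn}-\eps$ as a special case, and more importantly sets up the list recovery reduction.

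For the local algorithm, I would define a simulated oracle $S'$ for $C_{\out}$ as follows. On a query to left coordinate $u$, look up its $d$ right neighbors in $G$, read the lists $S_{v_1},\dots,S_{v_d}$ from the real oracle, run the global list recovery algorithm of $C_{\inn}$ on each of these blocks to obtain at most $L_{\inn}$ candidate inner messages per block, and collect the $u$-th mini-symbol from each candidate across the $d$ blocks into a list $S'_u\subseteq\Sigma_{\out}$ of size at most $L_{\inn}\le\ell_{\out}$. By the paragraph above, $S'$ is $\alpha_{\out}$-close to $C_{\out}(x)$, so I can invoke the local list recovery algorithm of $C_{\out}$ with oracle $S'$; each of its queries costs $d\cdot(\text{one global $C_{\inn}$ recovery})$, giving query complexity $O(Q\cdot n_{\inn})$ and the claimed output list size $L_{\out}$. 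Preprocessing is unchanged (it only involves $C_{\out}$), while each local-decoder invocation runs the $C_{\out}$ decoder once and performs $Q$ simulated queries, each costing $O(T_{\inn})$ plus bookkeeping in $\poly(n_{\inn},\ell_{\inn})$, matching the stated running time. Encoding time is likewise the sum of one $C_{\out}$ encoding, $n_{\out}$ inner encodings, and a near-linear shuffle along the expander.

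The main obstacle is the analysis tying the local-query pattern of $C_{\out}$ back to the error pattern on $S$: we must argue that with only black-box access to the oracle for $C$, the simulated $S'$ is faithful enough that $C_{\out}$'s local guarantee still triggers. This is where the expander mixing step is delicate, because the local algorithm never sees the full error pattern; one has to show that, \emph{independently of which queries the $C_{\out}$ decoder happens to make}, the distribution of bad left vertices is a deterministic property of $S$ and $x$, so the $2/3$-probability guarantee of $C_{\out}$ carries over verbatim. Once this is in hand, the rest of the proof is careful accounting of $d$, preprocessing, running, and encoding times.
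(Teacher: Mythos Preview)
Your overall blueprint---an AEL expander construction together with a simulated oracle for $C_{\out}$ obtained by running $C_{\inn}$'s global list recovery, after which one invokes $C_{\out}$'s local list recovery---is exactly the approach of \cite{GKORS-ECCC}. (The present paper does not reprove the lemma; it imports it as a black box.) However, there is a genuine gap in your construction that makes the central step fail.

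You place the $C_{\inn}$ encoding \emph{after} the expander shuffle: left vertex $u$ holds the raw outer symbol split into $d=\rho_{\inn}n_{\inn}$ mini-symbols, ships one mini-symbol per edge, and then each right vertex encodes its collected $d$ mini-symbols with $C_{\inn}$. With this order, your claim that $|S'_u|\le L_{\inn}$ is not justified. Running $C_{\inn}$'s list recovery on right block $v_j$ gives $L_{\inn}$ candidate inner messages at $v_j$, and hence $L_{\inn}$ candidates for the single $\Sigma_{\inn}$-symbol that $u$ contributed to $v_j$. But the outer symbol at $u$ is the concatenation of $d$ such mini-symbols coming from $d$ \emph{independent} right-block recoveries; the only list you can form for the full $\Sigma_{\out}$-symbol at $u$ is a product list of size up to $L_{\inn}^{d}$, not $L_{\inn}$. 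Since $d$ must be polynomial in $1/(\delta_{\out}\alpha_{\out}\eps)$, this blows up and cannot be fed into $C_{\out}$'s $\ell_{\out}$-list-recovery.

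The fix---and this is how the construction in \cite{GKORS-ECCC} actually proceeds---is to reverse the order: encode each outer symbol with $C_{\inn}$ \emph{at the left vertex first}, obtaining an inner codeword of length $n_{\inn}$, and only then redistribute its $n_{\inn}$ symbols in $d$ chunks across the expander edges (this is also why the hypothesis is $n_{\inn}\ge d$ rather than $\rho_{\inn}n_{\inn}=d$). To simulate $S'_u$, you read the $d$ right-neighbor lists, project each onto the chunk that originated at $u$ (yielding per-position lists of size $\le\ell_{\inn}$ for every one of the $n_{\inn}$ positions of $u$'s inner codeword), and run $C_{\inn}$'s global list recovery \emph{once, at $u$}. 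This produces at most $L_{\inn}\le\ell_{\out}$ candidates for the full outer symbol, exactly as needed. The expander-mixing argument you sketch then applies correctly: it bounds the fraction of left vertices whose right-neighborhood contains more than an $\alpha_{\inn}$-fraction of bad blocks, so that the $C_{\inn}$ list-recovery at $u$ succeeds for all but an $\alpha_{\out}$-fraction of left positions. With this correction the remainder of your outline (the deterministic nature of the simulated oracle, and the query, running time, and encoding time accounting) goes through essentially as you wrote it.
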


\begin{remark}
The definition of locally list recoverable codes in \cite{GKORS17} is stronger than ours since it requires an additional soundness property which guarantees that with high probability, all local algorithms in the output list compute an actual codeword. This requirement is not needed in our setting, and it can be verified that the proof of the above lemma goes through also with our weaker definition. Also, the encoding time was not mentioned explicitly in \cite{GKORS17} but it can be deduced from the proof of the lemma.
\end{remark}

We proceed to the proof of Theorem \ref{thm:nonexplicit-cap-local-list-rec}.

\begin{proof}[Proof of Theorem \ref{thm:nonexplicit-cap-local-list-rec}]

Let $c'$ be a sufficiently large constant for which both Corollary \ref{cor:list-rec-random}, Theorem \ref{thm:nonexplicit-high-rate-local-list-rec} and Lemma \ref{lem:AEL-list-rec} hold, and suppose that $s \geq     \max \big\{(t/\epsilon)^{c't},  32(c')^2  \ell t /\eps^2\big\} \cdot s_0/\rho$ for
$$s_0:= \max \big\{4c' (1-\rho-\epsilon/2) \log(1/(1-\rho-\epsilon/2)) /\epsilon,  4c' (\rho+\epsilon/2) \log(1/(\rho+\epsilon/2)) /\epsilon,4c' (\log\ell)/\eps\big\}.$$

Let $C_{\inn}: \F_{2^{s_0}}^{(\rho+\epsilon/4) \cdot (s/s_0)} \to \F_{2^{s_0}}^{s/s_0}$ be a linear code of rate $\rho+\epsilon/4$ that is $\big(1-\rho-\epsilon/2,\ell, 2^{ 4c' s_0 \ell/\epsilon}\big)$-list recoverable whose existence is guaranteed by Corollary \ref{cor:list-rec-random} for sufficiently large $t$ (depending on $\rho,\epsilon,\ell$).  Note furthermore that by Theorem \ref{thm:GV} we may assume that $C_{\inn}$ has relative distance at least $1-\rho-\epsilon/2$.

Let $C_{\out}: \F_{2^{(\rho+\epsilon/4) s}}^{\rho  n/(\rho +\epsilon/4)} \to \F_{2^{(\rho+\epsilon/4) s}}^n$ be an $\F_2$-linear code of rate $\frac{\rho} {\rho+\epsilon/4}\leq 1 -\epsilon/8$ and relative distance at least
$(\epsilon/(128  t))^{t}$, given by Theorem \ref{thm:nonexplicit-high-rate-local-list-rec} for infinite values of $n$ (depending on $\rho,\epsilon,\ell,t,s$),
 that is $(Q_{\out},(\epsilon/t)^{O(t)}, 2^{4c' s_0 \ell/\epsilon}, L_{\out})$-locally list recoverable for
$$ Q_{\out}  =n^{1/t} \cdot 2^{O(\sqrt{\log n \cdot \log \log n})} \cdot s^t \cdot 2^{O(s_0 \ell t /\epsilon)} \cdot (t  /\epsilon)^{O(t^2)}  $$
and
$$L_{\out} = \exp \left( s^t \cdot 2^{O(s_0 \ell t /\epsilon)} \cdot (t/\epsilon)^{O(t^2)}  \right).$$

Then by Lemma \ref{lem:AEL-list-rec} for any $n$ for which $C_{\out}$ exists there exists an $\F_2$-linear code $C_n: \F_{2^s}^{\rho n} \to \F_{2^s}^{n}$ of rate $\rho$
and relative distance at least $1-\rho-\epsilon$ that is $(Q,1-\rho-\epsilon, \ell, L)$-locally list recoverable for
$$ Q  =n^{1/t} \cdot 2^{O(\sqrt{\log n \cdot \log \log n})} \cdot s^{t+O(1)} \cdot 2^{O(s_0 \ell t /\epsilon)} \cdot (t/\epsilon)^{O(t^2)}  ,$$
and
$$L = \exp \left(s^t \cdot 2^{O(s_0 \ell t /\epsilon)} \cdot (t/\epsilon)^{O(t^2)}  \right) .$$

\end{proof}

Next we prove Theorem \ref{thm:explicit-cap-local-list-rec}.

\begin{proof}[Proof of theorem \ref{thm:explicit-cap-local-list-rec}]
Let $c'$ be a sufficiently large constant for which both Corollary \ref{cor:list-rec-random}, Theorem \ref{thm:explicit-high-rate-local-list-rec} and Lemma \ref{lem:AEL-list-rec} hold.
Fix $n \in \mathbb{N}$ so that Theorem~\ref{thm:explicit-high-rate-local-list-rec} guarantees the existence of a code with block length $n$ and using parameter $t = t_n$.  (By Theorem~\ref{thm:explicit-high-rate-local-list-rec}, there are infinitely many such $n$).  Now that $n$ is fixed, we will use $t$ to denote $t_n$.
Suppose that $s = s(t) \geq   \max \big\{(t/\epsilon)^{c't},  32(c')^2  \ell t /\eps^2\big\} \cdot s_0/\rho$ is even for
$$s_0:= \max \big\{4c' (1-\rho-\epsilon/2) \log(1/(1-\rho-\epsilon/2)) /\epsilon,  4c' (\rho+\epsilon/2) \log(1/(\rho+\epsilon/2)) /\epsilon,4c' (\log\ell)/\eps\big\}.$$
The code $C_n$ is constructed as follows.

Let $C_{\inn}: \F_{2^{s_0}}^{(\rho+\epsilon/4) \cdot (s/s_0)} \to \F_{2^{s_0}}^{s/s_0}$ be a linear code of rate $\rho+\epsilon/4$ that is $\big(1-\rho-\epsilon/2,\ell, 2^{ 4c' s_0 \ell/\epsilon}\big)$-list recoverable whose existence is guaranteed by Corollary \ref{cor:list-rec-random}.  Note furthermore that by Theorem \ref{thm:GV} we may assume that $C_{\inn}$ has relative distance at least $1-\rho-\epsilon/2$.

Let $C_{\out}: \F_{2^{(\rho+\epsilon/4) s}}^{\rho  n/(\rho +\epsilon/4)} \to \F_{2^{(\rho+\epsilon/4) s}}^n$ be an $\F_2$-linear code of rate $\frac{\rho} {\rho+\epsilon/4}\leq 1 -\epsilon/8$ and relative distance at least
$(\Omega(\epsilon/t))^{2t}$, given by Theorem \ref{thm:explicit-high-rate-local-list-rec}, that is $(Q_{\out},(\epsilon/t)^{O(t)}, 2^{4 c' s_0 \ell/\epsilon}, L_{\out})$-locally list recoverable for
$$ Q_{\out} = n^{1/t} \cdot 2^{O(\sqrt{\log n \cdot \log \log n})} \cdot  \exp\left(  t^2 s \cdot 2^{O(s_0 \ell /\epsilon)}   \cdot \exp(\log^*n) +t\log s\right)$$
and
$$L_{\out} = \exp \left( \exp\left(   t^2s \cdot 2^{O(s_0 \ell /\epsilon)} \cdot \exp(\log^*n) +t\log s\right) \right).$$

Then by Lemma \ref{lem:AEL-list-rec} there exists an $\F_2$-linear code $C_n: \F_{2^s}^{\rho n} \to \F_{2^s}^{n}$ of rate $\rho$
and relative distance at least $1-\rho-\epsilon$ that is $(Q,1-\rho-\epsilon, \ell, L)$-locally list recoverable for
$$ Q = n^{1/t} \cdot 2^{O(\sqrt{\log n \cdot \log \log n})} \cdot  \exp\left(  t^2 s  \cdot 2^{O(s_0 \ell /\epsilon)}   \cdot \exp(\log^*n) +t\log s\right)$$
and
$$L = \exp \left( \exp\left(   t^2 s  \cdot 2^{O(s_0 \ell /\epsilon)} \cdot \exp(\log^*n) +t\log s\right) \right).$$
The stated running and encoding times follow similarly.

\end{proof}

\section{Nearly-linear time capacity achieving list decodable codes}\label{sec:lintime}

In this section we show how to construct capacity achieving  list decodable codes that can be encoded and list decoded (probabilistically) in  nearly-linear time. These codes are presented in Section \ref{subsec:near-linear-cap-list-dec} below.
We then show (in Section \ref{subsec:near-linear-GV})
 how these codes can be used to probabilistically construct codes of rate up to $0.02$ that can be uniquely decoded (probabilistically) up to half the Gilbert-Varshamov (GV) bound in nearly-linear time.

\subsection{Nearly-linear time capacity achieving list decodable codes}\label{subsec:near-linear-cap-list-dec}

Our nearly-linear time capacity achieving list decodable codes follow as a consequence of our efficient capacity achieving locally list decodable code construction (Theorem \ref{thm:explicit-cap-local-list-rec}). Once more, we show a stronger version that applies also to list recovery.

\begin{theorem}[Nearly-linear time capacity achieving list decodable / recoverable codes]\label{thm:near-linear-cap-list-rec}
There is a constant $c$ so that the following holds.
Choose $\rho \in [0,1]$, $\eps > 0$, and a positive integer $\ell$.
Let $\left\{t_n\right\}_n$ be a sequence of positive integers, non-decreasing with $n$, so that $t_0$ is sufficiently large, and so that
\[ t_n \leq \sqrt{ \frac{\eps \log_q(n) }{ c \ell }}.\]
Let
$$s_0:= \max \big\{c (1-\rho-\epsilon/2) \log(1/(1-\rho-\epsilon/2)) /\epsilon,  c (\rho+\epsilon/2) \log(1/(\rho+\epsilon/2)) /\epsilon,c (\log\ell)/\eps\big\},$$ and for each choice of $t$, suppose that $s = s(t)$ is such that  $s \geq   \max \big\{(t/\epsilon)^{ct},  c  \ell t /\eps^2\big\} \cdot s_0/\rho$ is even.

Then there exists  an infinite family of $\F_{2}$-linear codes $\left\{ C_{n}\right\} _{n}$
such that the following holds.
Below, to simplify notation we use $t$ instead of $t_n$ and $s$ instead of $s(t_n)$.
\begin{enumerate}

\item $C_n: \F_{2^s}^{\rho n} \to \F_{2^s}^n$ has rate $\rho$ and relative distance at least $1-\rho-\epsilon$.
\item $C_n$ is $(1-\rho-\epsilon, \ell, L)$-list recoverable for
$$L = \exp \left( \exp\left(   t^2 s \cdot 2^{O(s_0 \ell /\epsilon)} \cdot \exp(\log^*n) +t\log s)\right) \right).$$
\item  $C_n$ can be list recovered probabilistically (with success probability $2/3$)\footnote{More precisely, there exists a randomized algorithm $A$, running in time $T$, that outputs a list of $L$ messages, and the guarantee is that with probability at least $2/3$ the output list  contains all messages that correspond to close-by codewords. Note that the success probability can be amplified to $1-e^{-t}$, at the cost of increasing the output list size by a multiplicative factor of $O(t)$, by repeating the algorithm independently $O(t)$ times and returning the union of all output lists.}  in time
$$T =    n^{1+O(1/t)} \cdot 2^{O(\sqrt{\log n \cdot \log \log n})} \cdot  \exp \left( \exp\left(   t^2 s \cdot 2^{O(s_0 \ell /\epsilon)}  \cdot \exp(\log^*n) +t\log s\right) \right) .  $$
\item $C_n$ can be encoded in time
$$ n \cdot 2^{O(\sqrt{\log n \cdot \log \log n})} +  t \cdot  n^{1+O(1/t)} + O(n \cdot 2^{s^2})+s^{O(1)} \cdot n \cdot \polylog n.$$
\end{enumerate}

In particular,
\begin{itemize}
\item
When $\rho, \epsilon, \ell, t,s$ are constant we get that
$L=  \exp(\exp(\exp(\log^*n)))$ and list recovery and encoding times are $n^{1+O(1/t)}$.
\item Alternatively, when $\rho, \epsilon, \ell$ are constant, $t = t_n =O \inparen{ \frac{\log\log\log(n)}{(\log\log\log\log(n))^2)}}$
and $s=t^{O(t)}$  we get that  $L=n^{o(1)}$ and list recovery and encoding times are $n^{1+O(1/t)}=n^{1+o(1)}$.
\end{itemize}
\end{theorem}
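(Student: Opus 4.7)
The plan is to take the capacity-achieving locally list recoverable codes $C_n$ constructed in Theorem~\ref{thm:explicit-cap-local-list-rec} and convert their local recovery procedure into a global one by a straightforward repetition-and-majority argument applied at every message coordinate. Properties (1), (2), and (4) of the statement are inherited verbatim from Theorem~\ref{thm:explicit-cap-local-list-rec}, since we use exactly the same codes: the rate, distance, combinatorial list-recoverability bound, and encoding time are unchanged. The content of the proof is therefore entirely the runtime bound in property (3).

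The global list recovery algorithm operates as follows. Given input lists $S \in {\F_{2^s} \choose \ell}^n$, first invoke the preprocessing algorithm from Theorem~\ref{thm:explicit-cap-local-list-rec} to obtain local algorithms $A_1,\ldots,A_L$ at cost $T_{\text{pre}}$. Then for each pair $(j,i) \in [L] \times [\rho n]$, run $A_j(i)$ independently $O(\log(nL))$ times on oracle access to $S$, take the majority outcome, and denote the result by $m^{(j)}_i$. Finally, output the candidate list $m^{(1)},\ldots,m^{(L)} \in \F_{2^s}^{\rho n}$.

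Correctness follows from the local list recovery guarantee: with probability at least $2/3$ over the preprocessing randomness, for every codeword $C(x)$ that is $(1-\rho-\eps)$-close to $S$ there is some $j \in [L]$ with $\Pr[A_j(i) = x_i] \geq 2/3$ for each $i \in [\rho n]$. Standard Chernoff amplification with $O(\log(nL))$ repetitions boosts each individual query to success probability $1 - 1/(100 \cdot n L)$, and a union bound over the $\rho n \cdot L$ pairs $(j,i)$ then forces $m^{(j)} = x$ for the good $j$ with probability at least $99/100$ conditional on preprocessing success. The overall probability of returning every close-by message is therefore at least $2/3$ (it can be pushed to $1 - o(1)$ by independently running preprocessing a constant number of times and taking the union of output lists, at no asymptotic cost to runtime or output list size).

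For the runtime, the main loop costs $O(L \cdot \rho n \cdot \log(nL)) \cdot T$, which combined with preprocessing gives total runtime $T_{\text{pre}} + O(n \cdot L \cdot T \cdot \log(nL))$. Substituting the bounds on $T_{\text{pre}}$, $T$, and $L$ from Theorem~\ref{thm:explicit-cap-local-list-rec} yields the claimed expression; note in particular that the outer $\exp(\exp(\cdot))$ factor of $L$, which carries a $t^2$ in its inner exponent, absorbs the corresponding factor from $T$, which carries only a $t$, so the runtime matches the $L$-factor up to the multiplicative $n^{1+O(1/t)} \cdot 2^{O(\sqrt{\log n \log\log n})}$ arising from $n$ and $T$. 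The whole point of engineering capacity-achieving local list recovery in Theorem~\ref{thm:explicit-cap-local-list-rec} was precisely so that this final step reduces to a one-paragraph amplification argument; there is no genuine obstacle to overcome here beyond bookkeeping of the parameters.
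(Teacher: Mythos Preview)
Your approach is essentially the paper's, but there is a quantifier error in the correctness argument that the paper handles explicitly and that your parenthetical fix does not repair. The local list recovery guarantee (Definition~\ref{defn:local-list-recover}) reads: for \emph{each} close-by codeword $C(x)$, with probability at least $2/3$ over the preprocessing randomness there exists a good index $j$. You have swapped the quantifiers, asserting that with probability at least $2/3$ over a single preprocessing run, \emph{every} close-by codeword simultaneously has a good $j$. These are not equivalent when the number of close-by codewords is large.

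The paper closes this gap in two steps. First, it notes that each local algorithm $A_j$ can be good for at most one message (if $x\neq y$ then $x_i\neq y_i$ for some $i$, and $A_j(i)$ cannot equal two distinct symbols each with probability $\geq 2/3$); combined with the per-codeword $2/3$ guarantee and linearity of expectation, this forces the number of close-by codewords to be at most $3L'/2$. This step also supplies the combinatorial list-recoverability in item~(2), which is \emph{not} inherited verbatim from Theorem~\ref{thm:explicit-cap-local-list-rec} as you claim---that theorem asserts only local list recovery. Second, the paper repeats preprocessing $O(\log L')$ times (not a constant number of times as you suggest) so that a union bound over the $\leq 3L'/2$ close-by codewords yields simultaneous success with probability at least $5/6$; the resulting output list has size $O(L'\log L')$, which is absorbed into the stated $L$. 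With these two corrections in place your argument and the paper's coincide, and your runtime bookkeeping is otherwise fine.
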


\begin{proof}
We use the same codes as in Theorem \ref{thm:explicit-cap-local-list-rec} so it only remains to prove the second and third bullets.

The global list recovery algorithm $A$ for $C_n$ first runs the local list recovery algorithm $A'$ for $C_n$ guaranteed by Theorem \ref{thm:explicit-cap-local-list-rec} for $O( \log L')$ times independently where $L'$ is the output list size given by Theorem \ref{thm:explicit-cap-local-list-rec}. Let $A'_1, A'_2,\ldots,A'_{O(L' \log L')}$ denote the local algorithms in the output lists. Then for each $A'_j$ the algorithm $A$ includes in the output list a message $x^{(j)} \in \F_{2^s}^{\rho n}$ which results by applying $A'_j$ on each of the $\rho n$ message coordinates $ O( \log (n L'))$ times independently and taking majority vote for each of the coordinates.

We claim that with probability at least $2/3$ the list output by $A$ includes all messages that correspond to close-by codewords. To see this note first that since any  close-by codeword is represented in the output list of $A'$ with probability at least $2/3$, it must hold that the number of close-by codewords is at most $3L'/2$. Consequently, running $A'$ for $O( \log L')$ times guarantees that with probability at least $5/6$ all  close-by codewords will be represented in the output list. Moreover, repeating the decoding of each message coordinate for $O( \log (n  L'))$ times guarantees that each of these coordinates is decoded correctly with probability at least $1- 1/(10  nL')$, and so by union bound with probability at least $5/6$ all messages are decoded correctly. So we obtained that with probability at least $1 - 1/6 -1/6 = 2/3$ the output list of $A$ will include all messages that correspond to close-by codewords.

Finally note that the output list size of the algorithm $A$ is
$$
O(L' \log L') = \exp \left( \exp\left(   t^2 s  \cdot 2^{O(s_0 \ell /\epsilon)} \cdot \exp(\log^*n) +t\log s\right) \right),
$$
 and its running time is
\begin{eqnarray*}
&& O(T'_{\text{pre}}\cdot \log L') +O( L'  \log L' \cdot n  \log (n L') \cdot T')  \\
 && = n^{1+O(1/t)} \cdot 2^{O(\sqrt{\log n \cdot \log \log n})} \cdot  \exp \left( \exp\left(   t^2 s \cdot 2^{O(s_0 \ell /\epsilon)}  \cdot \exp(\log^*n) +t\log s\right) \right),
\end{eqnarray*}
where $T'_{\text{pre}},T'$ denote the preprocessing and running times of the local list recovery algorithm for $C_n $, respectively.
\end{proof}

\subsection{Nearly-linear time decoding up to half the GV bound}\label{subsec:near-linear-GV}

Next we show how the nearly-linear time capacity achieving list recoverable codes of Theorem \ref{thm:near-linear-cap-list-rec} can be used to obtain codes of rate up to $0.02$ that are uniquely decodable up to half the Gilbert-Varshamov (GV) bound in nearly-linear time.  Let $H_2^{-1}: [0, 1] \to [0, \frac 1 2]$ be the inverse of the binary entropy function $H_2$ in the domain $[0,\frac{1}{2}]$.

\begin{theorem}[Nearly-linear time decoding up to half GV bound]\label{thm:near-linear-GV}
Choose constants $\rho \in [0,0.02]$ and $\epsilon >0$. Then there exists a randomized polynomial time algorithm which  for infinitely many $n$, given an input string $1^n$, outputs a description of a code $C_n$ that satisfies the following properties with probability at least $1-\exp(-n)$.\footnote{The randomized algorithm can output different codes under different random choices, we are only guaranteed that the output code has the required  properties with high probability. Also the encoding and decoding algorithms need access to the random choices made during the construction of the code.}

\begin{enumerate}
\item $C_n: \F_2^{\rho n} \to \F_2^n$ is a linear code of rate $\rho$ and relative
distance at least $\delta:=H_2^{-1}(1-\rho)-\epsilon$.
\item $C_n$ can be uniquely decoded probabilistically (with success probability $2/3$) from $\delta/2$ fraction of errors in time $n^{1+O(1/t)}=n^{1+o(1)}$ for
$t = O\inparen{ \frac{\log\log\log(n)}{ (\log\log\log\log(n))^2 }}$.
\item   $C_n$ can be encoded in time $n^{1+O(1/t)}=n^{1+o(1)}$ with the same choice of $t$. 
\end{enumerate}
 \end{theorem}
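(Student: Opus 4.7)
}
The plan is to follow the Thommesen/Guruswami--Indyk concatenation approach, with our near-linear time capacity-achieving list-recoverable codes (Theorem~\ref{thm:near-linear-cap-list-rec}) replacing the use of folded Reed--Solomon or Reed--Solomon as the outer code. Fix constants $\rho\in[0,0.02]$ and $\epsilon>0$. Pick a small constant inner rate $\rho_{\inn}$ (so that $\delta_{\inn}:=H_2^{-1}(1-\rho_{\inn})$ is near $1/2$) and an outer rate $\rho_{\out}=\rho/\rho_{\inn}$. Let $s$ be chosen as in Theorem~\ref{thm:near-linear-cap-list-rec} with $t=t_n=\Theta(\log\log\log n/(\log\log\log\log n)^2)$, and take $C_{\out}:\F_{2^s}^{\rho_{\out} N}\to\F_{2^s}^{N}$ from Theorem~\ref{thm:near-linear-cap-list-rec} to be an explicit $\F_2$-linear code of rate $\rho_{\out}$ that is $(1-\rho_{\out}-\epsilon',\ell,L)$-list recoverable in time $N^{1+o(1)}$, where $\ell$ and $\epsilon'$ will be chosen depending on $\rho$ and $\epsilon$. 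For each outer coordinate $i\in[N]$, independently pick a random $\F_2$-linear inner code $C_{\inn}^{(i)}:\F_2^s\to\F_2^{s/\rho_{\inn}}$, and let $C_n$ be the resulting concatenated code. Its block length is $n=Ns/\rho_{\inn}$ and its rate over $\F_2$ is $\rho_{\out}\rho_{\inn}=\rho$.

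The first step is the Thommesen-style combinatorial argument: with probability at least $1-\exp(-n)$ over the choice of the inner codes, $C_n$ has relative distance at least $H_2^{-1}(1-\rho)-\epsilon$. This is exactly the analysis in \cite{Thomm83} and Rudra~\cite{R07a}: a union bound over pairs of (nonzero) outer messages, combined with the fact that conditioned on the outer codeword, the bits of the concatenated codeword are an i.i.d.\ combination of random-linear inner encodings and therefore concentrate around relative weight $1/2$. Rudra's refinement replaces the Johnson-bound analysis (which limits $\rho$ to $\approx 10^{-4}$ when the outer code is Reed--Solomon) by the use of an outer code that is list recoverable all the way up to capacity; this is precisely what Theorem~\ref{thm:near-linear-cap-list-rec} provides, and pushes the admissible rate up to $\rho\leq 0.02$.

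For decoding from $\delta/2$ fraction of errors, I would use the Guruswami--Indyk decoder: (i) For each of the $N$ inner blocks, brute-force list decode $C_{\inn}^{(i)}$ up to the appropriate radius by enumerating all $2^s$ inner messages and keeping those whose encoding is close to the received block; this yields an $\ell$-list for each outer coordinate. (ii) Feed these $N$ lists into the near-linear time list-recovery algorithm for $C_{\out}$ from Theorem~\ref{thm:near-linear-cap-list-rec}, obtaining a list of at most $L$ candidate outer messages. (iii) Re-encode each candidate through the concatenation and output the one whose concatenated codeword is closest to the received word. Thommesen's argument guarantees that the transmitted message appears in the list, and unique decoding up to half the minimum distance ensures that picking the closest codeword from the list recovers it correctly (with high probability over the inner code randomness).

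The running time bookkeeping is then routine: step (i) costs $N\cdot 2^s\cdot\poly(s/\rho_{\inn})$, step (ii) costs $N^{1+o(1)}$ by Theorem~\ref{thm:near-linear-cap-list-rec}, and step (iii) costs $L\cdot n$. For the chosen $t_n$ and $s=s(t_n)$ we have $s=n^{o(1)}$ and even $2^s=n^{o(1)}$, and $L=n^{o(1)}$, so the total decoding time is $n^{1+O(1/t)}=n^{1+o(1)}$. Encoding reduces to encoding $C_{\out}$ once (time $n^{1+O(1/t)}$ by Theorem~\ref{thm:near-linear-cap-list-rec}) followed by $N$ inner encodings of cost $\poly(s)$ each, giving the same bound. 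The main obstacle is the combinatorial step: while the Thommesen--Guruswami--Indyk--Rudra analysis is established, one must verify that it interfaces cleanly with our outer code, which has alphabet $2^s$ for mildly super-constant $s$ and list-recovery output list $L=n^{o(1)}$ rather than a constant; because $s=o(\log n)$ and $L=n^{o(1)}$, the union bound and list-size bookkeeping in Rudra's proof still succeed with failure probability $\exp(-n)$, and the $\rho\leq 0.02$ threshold is preserved.
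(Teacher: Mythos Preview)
Your plan is the paper's proof: take the outer code from Theorem~\ref{thm:near-linear-cap-list-rec}, concatenate with independent random linear inner codes, invoke the Thommesen analysis (Lemma~\ref{lem:random-concat}) for distance, decode by brute-force inner list decoding plus outer list recovery (Lemma~\ref{lem:concat-list-rec}), then re-encode each candidate and return the closest. The running-time bookkeeping you describe matches.

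Two places are misassigned, though. First, the Thommesen step yields relative distance $H_2^{-1}(1-\rho)-\epsilon$ for \emph{any} $\rho$ and does not use list recovery; it is not what ``guarantees that the transmitted message appears in the list.'' Second, the $\rho\le 0.02$ constraint and the need for a capacity-achieving (rather than Johnson-bound) outer code enter only in the \emph{decoding} step: one must check that the list-decoding radius of the concatenated code, namely $(1-\rho_{\out}-O(\epsilon))\cdot H_2^{-1}(1-\rho_{\inn}-O(\epsilon))$, is at least $\delta/2$. This is the inequality from Rudra's thesis, and it forces the specific choice $\rho_{\inn}=\theta^{-1}(\rho+\epsilon/2)$ with $\theta(x)=1-H_2(1-2^{x-1})$, not merely ``a small constant inner rate'' (if $\rho_{\inn}$ is too small then $\rho_{\out}=\rho/\rho_{\inn}$ becomes large and the factor $1-\rho_{\out}$ kills the radius). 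With these two corrections your argument is exactly the paper's.
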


To prove the above theorem we rely on the following lemma which says that one can turn a code that approximately satisfies the Singleton bound into one that approximately satisfies the GV bound via random concatenation. The proof is similar to that of Thommesen~\cite{Thomm83}. In what follows let $\theta(x):=1-H_2(1-2^{x-1})$ for $x \in [0,1]$.

\begin{lem}\label{lem:random-concat} There is a constant $c$ so that the following holds. Choose $\epsilon>0$,
$ \rho_{\inn} \in [0, 1],$ and $ \rho_{\out} \in \left[0,\frac{\theta(\rho_{\inn})-\epsilon/2}{\rho_{\inn}}\right]$. Suppose that $s\ge\frac{c \cdot \rho_{\inn}}{\epsilon^{2}\cdot(1-\rho_{\out})}$.
Then the following holds for sufficiently large $n$. Let $C_{\out}: \F_{2^s}^{ \rho_{\out} \cdot n} \to \F_{2^s}^n$ be an
$\F_{2}$-linear code of rate $\rho_{\out}$ and relative distance at least $1-\rho_{\out}-\frac{\epsilon^{2}}{c}$.
Let $C: \F_{2}^{s \cdot \rho_{\out} \cdot n} \to \F_2^{s n/\rho_{\inn}}$ be a code obtained from $C_{\out}$  by applying
a random linear code $C^{(i)}: \F_{2}^s \to \F_{2}^{s/\rho_{\inn}}$
on each coordinate $i\in[n]$ of $C_{\out}$ independently (where we identify the field $\F_{2^s}$ with the vector space $\F_2^s$ via the usual $\F_2$-linear transformation).
Then $C$ has
relative distance at least $H_2^{-1}(1-\rho_{\out} \cdot \rho_{\inn})-\epsilon$ with probability
at least $1-exp(-n)$.
\end{lem}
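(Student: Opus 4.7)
The plan is to follow the random concatenation analysis of Thommesen~\cite{Thomm83}. Since $C$ is $\F_2$-linear (the outer code is $\F_2$-linear and each inner $C^{(i)}$ is an $\F_2$-linear map), it suffices to prove that with probability at least $1-\exp(-n)$ over the choice of the inner codes $\{C^{(i)}\}$, every nonzero message $m$ produces a codeword $C(m)$ of binary Hamming weight at least $\alpha\cdot(sn/\rho_{\inn})$, where $\alpha:=H_2^{-1}(1-\rho_{\out}\rho_{\inn})-\epsilon$.

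Fix a nonzero message $m$ and let $w_m$ denote the $\F_{2^s}$-Hamming weight of $C_{\out}(m)$. By the relative distance of $C_{\out}$ we have $w_m\ge\delta_{\out}n$ with $\delta_{\out}:=1-\rho_{\out}-\epsilon^2/c$. For each coordinate $i$ on which $C_{\out}(m)_i\ne 0$, the block $C^{(i)}(C_{\out}(m)_i)$ is distributed as a uniformly random element of $\F_2^{s/\rho_{\inn}}$ (because the generator matrix of $C^{(i)}$ is uniformly random and the input is nonzero), and these blocks are independent across $i$. Hence $\wt(C(m))$ has the same distribution as the Hamming weight of a uniformly random string in $\F_2^{N_m}$ where $N_m:=w_ms/\rho_{\inn}$, and the standard binomial tail bound gives
$$\Pr\bigl[\wt(C(m))<\alpha\cdot sn/\rho_{\inn}\bigr]\;\le\;2^{-N_m\cdot(1-H_2(\alpha n/w_m))},$$
provided $\alpha n/w_m\le 1/2$.

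Taking a union bound over the at most $2^{s\rho_{\out}n}$ nonzero messages and writing $\delta=w_m/n$, the whole task reduces to verifying that for every $\delta\in[\delta_{\out},1]$,
$$s\rho_{\out}-(\delta s/\rho_{\inn})\bigl(1-H_2(\alpha/\delta)\bigr)\;\le\;-\Omega(1).$$
A short differentiation shows that $\delta\mapsto\delta(1-H_2(\alpha/\delta))$ has derivative $1+\log(1-\alpha/\delta)$ and is therefore monotone increasing on the region $\alpha/\delta\le 1/2$; so it is enough to verify the above inequality at $\delta=\delta_{\out}$.

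The main technical step --- and the only place where the hypothesis $\rho_{\out}\le(\theta(\rho_{\inn})-\epsilon/2)/\rho_{\inn}$ enters --- is to lower-bound $\delta_{\out}(1-H_2(\alpha/\delta_{\out}))$ in terms of $\rho_{\out}\rho_{\inn}$. The role of $\theta$ is captured by the identity $H_2^{-1}(1-\theta(\rho_{\inn}))=1-2^{\rho_{\inn}-1}$ that is built into its definition. Combined with the $\epsilon/2$ slack in the hypothesis and a Taylor expansion of $H_2$ around its argument, this should give both (i) $\alpha/\delta_{\out}\le 1/2-\Omega(\epsilon)$ so that the binomial tail bound applies, and (ii) a gap of at least $\Omega(\epsilon^2(1-\rho_{\out}))$ between $\delta_{\out}(1-H_2(\alpha/\delta_{\out}))$ and $\rho_{\out}\rho_{\inn}$. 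The hypothesis $s\ge c\rho_{\inn}/(\epsilon^2(1-\rho_{\out}))$ then lets this gap absorb the $s\rho_{\out}$ term from the union bound, yielding the claimed $\exp(-n)$ failure probability.
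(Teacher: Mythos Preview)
Your proof has a genuine gap at the union bound step. You bound the number of bad events by the total message count $2^{s\rho_{\out}n}$ and then argue that the per-message failure probability is worst at the minimum weight $\delta=\delta_{\out}$. This combination is too crude: it requires
\[
\rho_{\out}\rho_{\inn} \;<\; \delta_{\out}\bigl(1-H_2(\alpha/\delta_{\out})\bigr),
\]
and this inequality is false in the regime of interest. For a concrete counterexample take $\rho_{\inn}=0.5$, $\rho_{\out}=0.25$ (well within the $\theta$-constraint), small $\epsilon$. Then $\rho_{\out}\rho_{\inn}=0.125$, $\alpha\approx H_2^{-1}(0.875)\approx 0.295$, $\delta_{\out}\approx 0.75$, $\alpha/\delta_{\out}\approx 0.393$, and $\delta_{\out}(1-H_2(\alpha/\delta_{\out}))\approx 0.75\cdot 0.032\approx 0.024$, far below $0.125$. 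So your claimed gap (ii) cannot hold, and the union bound blows up.

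The missing idea is Thommesen's weight-stratified count, which is exactly what the paper uses: instead of charging every outer codeword against the full message count, one bounds the number of outer codewords of relative weight exactly $\gamma$ by
\[
\binom{n}{\gamma n}\,(2^s)^{\gamma n-(1-\rho_{\out}-\epsilon^2/c)n}\;\le\;2^n\cdot 2^{(\gamma-\delta_{\out})sn},
\]
using the minimum distance of $C_{\out}$ (fix the support, then fixing $\gamma n-\delta_{\out}n$ nonzero symbols determines the rest). The point is that at small $\gamma$ the failure probability is large but the codeword count is only $2^{O(n)}$, not $2^{s\rho_{\out}n}$; at large $\gamma$ the count is large but the failure probability is tiny. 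Summing the product over $\gamma$ gives the exponent
\[
-\frac{\gamma s n}{\rho_{\inn}}\left(1-\rho_{\inn}\Bigl(1-\frac{\delta_{\out}}{\gamma}\Bigr)-\frac{\rho_{\inn}}{\gamma s}-H_2\!\left(\frac{\alpha}{\gamma}\right)\right),
\]
and the $\theta$-hypothesis together with the bound on $s$ is precisely what makes this negative for all $\gamma\in[\delta_{\out},1]$ (the paper invokes Lemma~8 of \cite{GR08} for this calculation). Your monotonicity observation is correct but irrelevant once the count depends on $\gamma$: the worst $\gamma$ for the refined bound is no longer $\delta_{\out}$.
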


We shall also use the following lemma that states the effect of concatenation on list recovery properties.

\begin{lem}\label{lem:concat-list-rec}  Let $C_{\out}: \F_{q^s}^{\rho_{\out} \cdot n} \to \F_{q^s}^n$ be an
 $(\alpha_{\out}, \ell_{\out}, L_{\out})$-list recoverable code, with a list recovery algorithm running in time $T_{\out}$.
Let  $C:  \F_{q}^{s \cdot \rho_{\out} \cdot n} \to \F_q^{s n/\rho_{\inn}}$ be the code obtained from $C_{\out}$
 by applying a code $C^{(i)}: \F_q^s \to \F_q^{s/\rho_{\inn}}$ on each coordinate $i \in [n]$ of $C_{\out}$. Suppose furthermore that
 at least $1-\epsilon$ fraction of the codes $C^{(i)}$ are $(\alpha_{\inn}, \ell_{\inn}, L_{\inn})$-list recoverable for $L_{\inn} = \ell_{\out}$, with a list recovery algorithm running in time $T_{\inn}$.
Then $C$ is $((\alpha_{\out}-\epsilon) \cdot \alpha_{\inn}, \ell_{\inn}, L_{\out})$-list recoverable in time $T_{\out} + O(n \cdot T_{\inn})$. \end{lem}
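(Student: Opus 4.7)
The plan is to use the standard concatenation-decoding paradigm: run the inner list recovery algorithm on each of the $n$ blocks to reduce each block of inner coordinates to a single ``outer list'' of size $\ell_{\out}$, then feed the resulting lists of outer symbols into the list recovery algorithm for $C_{\out}$. The heart of the argument is an averaging step that shows most blocks have both a list recoverable inner code and not-too-many inner-level errors, so that the outer alphabet lists produced contain the correct outer codeword symbol in all but $\alpha_{\out}$ fraction of positions.

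More precisely, suppose $C(x)$ is $(\alpha_{\out}-\epsilon)\cdot\alpha_{\inn}$-close to the input lists $S\in{\F_q\choose\ell_{\inn}}^{sn/\rho_{\inn}}$. Partition the $sn/\rho_{\inn}$ coordinates into $n$ blocks of size $s/\rho_{\inn}$, one per inner code. Call block $i$ \emph{inner-bad} if $C^{(i)}$ is not $(\alpha_{\inn},\ell_{\inn},L_{\inn})$-list recoverable, and \emph{error-bad} if strictly more than $\alpha_{\inn}$ fraction of the positions in block $i$ witness a disagreement between $C(x)$ and $S$. By hypothesis there are at most $\epsilon n$ inner-bad blocks, and by a straightforward averaging (the total disagreement fraction is $\leq(\alpha_{\out}-\epsilon)\alpha_{\inn}$, and each error-bad block contributes at least $\alpha_{\inn}\cdot s/\rho_{\inn}$ disagreements) there are at most $(\alpha_{\out}-\epsilon)n$ error-bad blocks. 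So the set $B$ of bad blocks has $|B|\leq\alpha_{\out}n$.

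For every block $i\notin B$, run the inner list recovery algorithm for $C^{(i)}$ on the $s/\rho_{\inn}$ input lists restricted to that block; the output is a list $T_i\subseteq\F_q^s$ of size at most $L_{\inn}=\ell_{\out}$, and by the inner list recovery guarantee it contains the restriction of $C(x)$ to block $i$, which is precisely the symbol $C_{\out}(x_{\out})_i\in\F_{q^s}$ corresponding to the outer message $x_{\out}$ of $x$. For $i\in B$, set $T_i$ to be an arbitrary element of ${\F_{q^s}\choose\ell_{\out}}$. Then $T\in{\F_{q^s}\choose\ell_{\out}}^n$ satisfies $\dist(C_{\out}(x_{\out}),T)\leq|B|/n\leq\alpha_{\out}$, so running the $(\alpha_{\out},\ell_{\out},L_{\out})$-list recovery algorithm for $C_{\out}$ on $T$ returns a list of at most $L_{\out}$ messages that contains $x_{\out}$. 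This is the list we output.

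For the running time, the inner step costs $n\cdot T_{\inn}$ (one call per block, with trivially assembling $T_i$ for bad blocks in negligible time), and the outer step costs $T_{\out}$, giving the claimed total $T_{\out}+O(n\cdot T_{\inn})$. There is no real obstacle here; the only thing one must be careful about is the accounting that $(\alpha_{\out}-\epsilon)n$ error-bad blocks plus $\epsilon n$ inner-bad blocks together give the $\alpha_{\out}$ slack that the outer algorithm needs, which is exactly where the $\epsilon$ discount in the statement $(\alpha_{\out}-\epsilon)\cdot\alpha_{\inn}$ comes from.
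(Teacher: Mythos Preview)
Your proposal is correct and follows essentially the same approach as the paper: both use Markov/averaging to bound the number of error-bad blocks by $(\alpha_{\out}-\epsilon)n$, add the at most $\epsilon n$ blocks whose inner code fails to be list recoverable, and then run inner list recovery followed by outer list recovery. Your writeup is in fact slightly more explicit than the paper's in separating the two types of bad blocks, but the argument is the same.
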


Before we prove the above pair of lemmas we show how they imply Theorem \ref{thm:near-linear-GV}.

\begin{proof}[Proof of Theorem \ref{thm:near-linear-GV}]
Let $c$ be a sufficiently large constant for which both Theorem \ref{thm:near-linear-cap-list-rec} and Lemma \ref{lem:random-concat} hold.  Let $\rho_{\inn}:=\theta^{-1}(\rho+\epsilon/2)$ and $\rho_{\out}:=\frac{\rho}{\rho_{\inn}}=\frac{\rho} {\theta^{-1}(\rho+\epsilon/2)}$.
Define
\[ t = t_n = O \inparen{ \frac{\log\log\log(n)}{(\log\log\log\log(n))^2}}\]
as in the statement of Theorem~\ref{thm:near-linear-GV},
and let $$s =  \max \big\{(ct/\epsilon^2)^{ct},  c^3  2^{1/\epsilon} t /\eps^4\big\} \cdot s_0/ (\rho_{\out} (1-\rho_{\out}))$$
for even
\begin{align*}
s_0 &:= \max \left\{\frac{c^2}{\eps^2} \inparen{1-\rho_{\out}-\frac{\epsilon^2}{2c}} \log\inparen{\frac{1}{1-\rho_{\out}-\epsilon^2/(2c)}} ,\right.\\
 &\qquad \qquad \qquad\qquad\qquad \left. \frac{c^2}{\epsilon^2} \left(\rho_{\out}+\frac{\epsilon^2}{2c}\right) \log\left(\frac{1}{\rho_{\out}+\epsilon^2/(2c)}\right) , \frac{c^2}{\epsilon^3} \right\}.
\end{align*}
Notice that chosing when $t = t_n$ is super-constant, as above, and $\epsilon, \rho$ are constant (as in the theorem statement) that we always have $s = t^{O(t)}$.

Choose any $n \in \mathbb{N}$ so that Theorem~\ref{thm:near-linear-cap-list-rec} guarantees that codes of block length $n$ exist (there are infinitely many of these).
Then the code $C_n$ is constructed as follows.

Let $C_{\out}: \F_{2^s}^{\rho_{\out} \cdot (\rho_{\inn} n/s)} \to \F_{2^s}^{\rho_{\inn} n/s}$ be the $\F_2$-linear code of rate
$\rho_{\out}$ and relative distance at least $1 - \rho_{\out} - \frac{\epsilon^2}{c}$ that is $\left(1 -\rho_{\out}-\frac{\epsilon^2} {c}, 2^{1/\epsilon}, n^{O(1/t)}\right)$-list recoverable in time $n^{1+O(1/t)}$
whose existence is guaranteed by Theorem \ref{thm:near-linear-cap-list-rec}.
Notice that here $t$ is growing with $n$, but slowly enough that we may apply Theorem~\ref{thm:near-linear-cap-list-rec}.

Let $C_n: \F_{2}^{\rho_{\out} \cdot \rho_{\inn} n} \to \F_2^n$ be a binary linear code of rate  $\rho_{\inn} \cdot \rho_{\out}=\rho$
obtained from $C_{\out}$ by applying a random linear code
$C^{(i)}:  \F_2^s \to \F_2^{s/\rho_{\inn}}$
on each coordinate $i \in [\rho_{\inn} n/s]$ of $C_{\out}$ independently.

Then by Lemma \ref{lem:random-concat} the code $C_n$ has relative distance at least $H_2^{-1}(1-\rho) -\epsilon$ with probability at least $1-\exp(-n)$.
Moreover, by Theorem \ref{thm:cap-list-dec} we also have that each $C^{(i)}$ is $(H_2^{-1}(1-\rho_{\inn} -\epsilon),2^{1/\epsilon})$-list decodable with probability at least $1-\exp(-s)= 1-o(1)$, so with probability at least $1-\exp(-n)$ it holds that at least $1-\epsilon^2/c$ fraction of the $C^{(i)}$'s are $(H_2^{-1}(1-\rho_{\inn} -\epsilon),2^{1/\epsilon})$-list decodable. Lemma \ref{lem:concat-list-rec} then implies that in this case the code $C_n$ is $((1-\rho_{\out} -2\epsilon^2/c )\cdot H_2^{-1}(1-\rho_{\inn} -\epsilon),n^{O(1/t)} )$-list decodable (probabilistically) in time $n^{1+O(1/t)}+n\cdot 2^{O(s)} =n^{1+O(1/t)}$.

Now observe that by Theorem \ref{thm:near-linear-cap-list-rec}
 the code $C_{\out}$ is encodable in time $n^{1+O(1/t)}$, and so encoding time of $C_n$ is $n^{1+O(1/t)}+n\cdot 2^{O(s^2)} =n^{1+O(1/t)} $. Consequently, whenever the list decoding radius of $C_n$ exceeds half the minimum distance, one can uniquely decode $C_n$ up to half the minimum distance in time $n^{1+O(1/t)}$
 by list decoding $C_n$, computing the codewords that correspond to the messages in the output list, and returning the (unique) closest codeword.

So we obtained that the code $C_n$ is uniquely decodable up to half the minimum distance in time $n^{1+O(1/t)}$ whenever
 $$  \left(1-\rho_{\out} -2\epsilon^2/c \right) \cdot H_2^{-1}(1-\rho_{\inn} -\epsilon)  \geq \frac{H_2^{-1}(1-\rho)-\epsilon} {2}.$$
Finally, it is shown in \cite{R07a}, Section 4.4 that the inequality above holds for sufficiently small constant $\epsilon >0$ by choice of $\rho=\rho_{\inn} \cdot \rho_{\out} \leq 0.02 $.
\end{proof}

It remains to prove Lemmas \ref{lem:random-concat} and \ref{lem:concat-list-rec}. We start with Lemma \ref{lem:random-concat}.

\begin{proof}[Proof of Lemma \ref{lem:random-concat}]
The proof follows the arguments of \cite{Thomm83}.
For a string $x$ of length $n$ let the {\em relative weight}
 $\wt(x)$ of $x$ denote the fraction of non-zero coordinates of $x$. It is well known that the relative distance of an $\F$-linear code equals the minimum relative weight $\wt(c)$ of a non-zero codeword $c \in C$.

Fix a codeword $c' \in C_{\out}$ with $\wt(c') =  \gamma \geq 1-\rho_{\out}- \epsilon^2/c$, and  let $c \in \F_2^{sn/\rho_{\inn}}$ be a word obtained
from $c'$ by applying a  random linear code $C^{(i)}:\F_{2}^s\to \F_{2}^{s/\rho_{\inn}}$ on each coordinate $i\in [n]$ of $c'$ independently.
Then for each  non-zero coordinate $i$ of $c'$ it holds that the $i$-th block of $c$ of length $s/\rho_{\inn}$ is distributed uniformly over $\F_2^{s/\rho_{\inn}}$, and so $\gamma s n/\rho_{\inn}$ coordinates of $c$ are uniformly distributed (while the rest equal zero).
Consequently, we have that
$$
 \Pr[\wt (c) < \delta ]\  \leq   {\gamma s n/\rho_{\inn} \choose \leq \delta s n/\rho_{\inn}} 2^{- \gamma s n/\rho_{\inn}}
 \leq  2^{ H_2\left ({\delta } /{\gamma}\right)\gamma sn/\rho_{\inn}}\cdot2^{-\gamma sn/\rho_{\inn}}.
$$

Next we apply a union bound over all codewords $c' \in C_{\out}$. For this fix $\gamma> 0$ such that $\gamma \geq 1-\rho_{\out}- \epsilon^2/c$ and $\gamma n \in \mathbb{N}$. Then it holds that the number of codewords in $C_{\out}$ of relative weight $\gamma$ is at most
\[
{n \choose \gamma n}\cdot \left(2^s\right)^{\gamma n - (1-\rho_{\out}- \epsilon^2/c)n}\leq 2^{n}\cdot 2^{(\gamma - (1-\rho_{\out}- \epsilon^2/c))sn},
\]
where the above bound follows since there are at most ${n \choose \gamma n}$ choices for the location of the non-zero coordinates, and for any such choice fixing the value of  the first
$\gamma n - (1-\rho_{\out}- \epsilon^2/c)n$ non-zero coordinates determines the value of the rest of the non-zero coordinates (since two different codewords cannot differ on less than $(1-\rho_{\out}- \epsilon^2/c)n$ coordinates).

Consequently, we have that
$$
\Pr[\dist(C)  <  \delta ]  \leq  \sum_{1-\rho_{\out}-\epsilon^2/c \le \gamma \le 1,\;\gamma n\in\mathbb{N}}  2^{n}\cdot 2^{(\gamma - (1-\rho_{\out}- \epsilon^2/c))sn} \cdot  2^{ H_2\left ({\delta } /{\gamma}\right)\gamma sn/\rho_{\inn}}\cdot2^{-\gamma sn/\rho_{\inn}}
$$
$$ =    \sum_{1-\rho_{\out}-\epsilon^2/c\le \gamma \le 1,\;\gamma n\in\mathbb{N}}
\exp\left[- \gamma s n/\rho_{\inn}
\left(1- \rho_{\inn}\cdot\left(1-\frac{1-\rho_{\out}-\epsilon^2/c} {\gamma}\right)
-  \frac {\rho_{\inn}} { \gamma s} -H_2\left( \frac { \delta} {\gamma}\right)
  \right) \right].
  $$
Finally, Lemma 8 in \cite{GR08} implies that in our setting of parameters, and for choice of $\delta := H_2^{-1}(1-\rho_{\inn} \cdot \rho_{\out})-\epsilon$, the right hand side of the above inequality is at most $\exp(-n)$, which completes the proof of the lemma. The running time follows.
\end{proof}

Next we prove Lemma \ref{lem:concat-list-rec}.

\begin{proof}[Proof of Lemma \ref{lem:concat-list-rec}]
For $i \in [n]$ and $j \in [s/\rho_{\inn}]$, let $S_{i,j} \subseteq \F_q$ be a list of at most $\ell_{\inn}$ possible symbols for the coordinate
$ C(x)_{i,j} :=   C(x)_{(i-1)\cdot( s/\rho_{\inn}) +j},$ which is the $j$-th coordinate of $C^{(i)} (C_{\out}(x)_i )$.

Suppose that for at most a $(\alpha_{\out} -\epsilon) \cdot \alpha_{\inn}$ fraction of coordinates $(i,j)$, $ C(x)_{i,j} \notin S_{i,j}$.  Then by Markov's inequality, for at most a $\alpha_{\out} -\epsilon$ fraction of $i \in [n]$, the blocks $C^{(i)} (C_{\out}(x)_i )$ have more than $\alpha_{\inn}$ fraction of the $j \in [s/\rho_{\inn}]$ so that $C(x)_{i,j} \notin S_{i,j}$.
Thus, we may list recover each block $C^{(i)} (C_{\out}(x)_i )$ which is list recoverable
to obtain a list $S_i \subseteq \F_{q^s}$ of at most $L_{\inn}=\ell_{\out}$ possible symbols for $C_{\out}(x)_i$,  and the above reasoning shows that $C_{\out}(x)_i \notin S_i$ for at most $\alpha_{\out} n$ values of $i$.  Now we may run $C_{\out}$'s list recovery algorithm to obtain a final list of size $L_{\out}$.
\end{proof}

\paragraph*{Acknowledgements.} The second author would like to thank Swastik Kopparty for raising the question of obtaining capacity achieving locally list decodable codes which was a direct trigger for this work as well as previous work \cite{KMRS15-STOC,GKORS17}, and Sivakanth Gopi, Swastik Kopparty, Rafael  Oliveira and Shubhangi Saraf for many discussions on this topics. The current collaboration began during the Algorithmic Coding Theory Workshop at ICERM, we thank ICERM for their hospitality.

\bibliographystyle{alpha}
\bibliography{lr}

\newcommand{\etalchar}[1]{$^{#1}$}
\begin{thebibliography}{BHNW13}

\bibitem[AEL95]{AEL95}
Noga Alon, Jeff Edmonds, and Michael Luby.
\newblock Linear time erasure codes with nearly optimal recovery.
\newblock In {\em proceedings of the 36th Annual IEEE Symposium on Foundations
  of Computer Science ({FOCS})}, pages 512--519. IEEE Computer Society, 1995.

\bibitem[Ale02]{A02}
Michael Alekhnovich.
\newblock Linear diophantine equations over polynomials and soft decoding of
  reed-solomon codes.
\newblock In {\em Foundations of Computer Science, 2002. Proceedings. The 43rd
  Annual IEEE Symposium on}, pages 439--448. IEEE, 2002.

\bibitem[BB10]{BB10}
Peter Beelen and Kristian Brander.
\newblock {Key equations for list decoding of Reed – Solomon codes and how to
  solve them}.
\newblock {\em Journal of Symbolic Computation}, 45(7):773--786, 2010.

\bibitem[BHNW13]{BHNW13}
Peter Beelen, Tom Hoholdt, Johan~S.R. Nielsen, and Yingquan Wu.
\newblock On rational interpolation-based list-decoding and list-decoding
  binary goppa codes.
\newblock {\em Information Theory, IEEE Transactions on}, 59(6):3269--3281,
  2013.

\bibitem[BS06]{BS-tensor}
Eli {Ben-Sasson} and Madhu Sudan.
\newblock Robust locally testable codes and products of codes.
\newblock {\em Random Struct. Algorithms}, 28(4):387--402, 2006.

\bibitem[BV09]{BV09}
Eli {Ben-S}asson and Michael Viderman.
\newblock Tensor products of weakly smooth codes are robust.
\newblock {\em Theory of Computing}, 5(1):239--255, 2009.

\bibitem[BV15]{BV15}
Eli {Ben-Sasson} and Michael Viderman.
\newblock Composition of semi-{LTC}s by two-wise tensor products.
\newblock {\em Computational Complexity}, 24(3):601--643, 2015.

\bibitem[CH11]{CH11}
Henry Cohn and Nadia Heninger.
\newblock Ideal forms of {C}oppersmith's theorem and {G}uruswami-{S}udan list
  decoding.
\newblock In {\em ICS}, pages 298--308, 2011.

\bibitem[CJN{\etalchar{+}}15]{CJNSV15}
Muhammad~F.I. Chowdhury, Claude-Pierre Jeannerod, Vincent Neiger, Eric Schost,
  and Gilles Villard.
\newblock Faster algorithms for multivariate interpolation with multiplicities
  and simultaneous polynomial approximations.
\newblock {\em Information Theory, IEEE Transactions on}, 61(5):2370--2387,
  2015.

\bibitem[CR05]{CR05}
Don Coppersmith and Atri Rudra.
\newblock On the robust testability of tensor products of codes.
\newblock ECCC TR05-104, 2005.

\bibitem[DL12]{DL12}
Zeev Dvir and Shachar Lovett.
\newblock {Subspace Evasive Sets}.
\newblock In {\em STOC '12}, pages 351--358, October 2012.

\bibitem[DSW06]{DSW06}
Irit Dinur, Madhu Sudan, and Avi Wigderson.
\newblock Robust local testability of tensor products of {LDPC} codes.
\newblock In {\em proceedings of the 9th International Workshop on
  Randomization and Computation (RANDOM)}, pages 304--315. Springer, 2006.

\bibitem[Eli57]{E57}
Peter Elias.
\newblock List decoding for noisy channels.
\newblock Technical Report 335, MIT, September 1957.

\bibitem[GGR11]{GGR}
Parikshit Gopalan, Venkatesan Guruswami, and Prasad Raghavendra.
\newblock List decoding tensor products and interleaved codes.
\newblock {\em SIAM Journal on Computing}, 40(5):1432--1462, 2011.

\bibitem[GI01]{GI01}
Venkatesan Guruswami and Piotr Indyk.
\newblock {Expander-based constructions of efficiently decodable codes}.
\newblock In {\em Foundations of Computer Science, 2001. Proceedings. 42nd IEEE
  Symposium on}, pages 658--667. IEEE, October 2001.

\bibitem[GI02]{GI02}
Venkatesan Guruswami and Piotr Indyk.
\newblock Near-optimal linear-time codes for unique decoding and new
  list-decodable codes over smaller alphabets.
\newblock In {\em Proceedings of the Thiry-fourth Annual ACM Symposium on
  Theory of Computing}, STOC '02, pages 812--821, New York, NY, USA, 2002. ACM.

\bibitem[GI03]{GI03}
Venkatesan Guruswami and Piotr Indyk.
\newblock {Linear time encodable and list decodable codes}.
\newblock In {\em Proceedings of the thirty-fifth annual ACM symposium on
  Theory of computing}, STOC '03, pages 126--135, New York, NY, USA, 2003. ACM.

\bibitem[GI04]{GI04}
Venkatesan Guruswami and Piotr Indyk.
\newblock {Efficiently decodable codes meeting Gilbert-Varshamov bound for low
  rates}.
\newblock In {\em SODA '04: Proceedings of the fifteenth annual ACM-SIAM
  symposium on Discrete algorithms}, pages 756--757, Philadelphia, PA, USA,
  2004. Society for Industrial and Applied Mathematics.

\bibitem[Gil52]{G52}
Edgar~N. Gilbert.
\newblock A comparision of signalling alphabets.
\newblock {\em Bell System Technical Journal}, 31:504--522, 1952.

\bibitem[GK16a]{GK16}
Alan Guo and Swastik Kopparty.
\newblock List-decoding algorithms for lifted codes.
\newblock {\em IEEE Transactions on Information Theory}, 62(5):2719 -- 2725,
  2016.

\bibitem[GK16b]{GK14}
Venkatesan Guruswami and Swastik Kopparty.
\newblock Explicit subspace designs.
\newblock {\em Combinatorica}, 36(2):161--185, 2016.

\bibitem[GKO{\etalchar{+}}16]{GKORS-ECCC}
Sivakanth Gopi, Swastik Kopparty, Rafael Oliveira, Noga {Ron-Zewi}, and
  Shubhangi Saraf.
\newblock Locally testable and locally correctable codes approaching the
  gilbert-varshamov bound.
\newblock {\em Electronic Colloquium on Computational Complexity (ECCC)},
  23:122, 2016.

\bibitem[GKO{\etalchar{+}}17]{GKORS17}
Sivakanth Gopi, Swastik Kopparty, Rafael Oliveira, Noga {Ron-Zewi}, and
  Shubhangi Saraf.
\newblock Locally testable and locally correctable codes approaching the
  gilbert-varshamov bound.
\newblock In {\em Proceedings of the Twenty-Eighth Annual ACM-SIAM Symposium on
  Discrete Algorithms}, pages 2073--2091. SIAM, 2017.

\bibitem[GL89]{GL89}
Oded Goldreich and Leonid~A Levin.
\newblock A hard-core predicate for all one-way functions.
\newblock In {\em Proceedings of the twenty-first annual ACM symposium on
  Theory of computing}, pages 25--32. ACM, 1989.

\bibitem[GM12]{GM12}
Oded Goldreich and Or~Meir.
\newblock The tensor product of two good codes is not necessarily locally
  testable.
\newblock {\em Inf. Proces. Lett.}, 112(8-9):351--355, 2012.

\bibitem[GNP{\etalchar{+}}13]{GNPRS13}
Anna~C. Gilbert, Hung~Q. Ngo, Ely Porat, Atri Rudra, and Martin~J. Strauss.
\newblock $\ell_2/\ell_2$-foreach sparse recovery with low risk.
\newblock In {\em Automata, Languages, and Programming}, volume 7965 of {\em
  Lecture Notes in Computer Science}, pages 461--472. Springer Berlin
  Heidelberg, 2013.

\bibitem[GR08a]{GR08}
Venkatesan Guruswami and Atri Rudra.
\newblock {Concatenated Codes Can Achieve List-decoding Capacity}.
\newblock In {\em Proceedings of the Nineteenth Annual ACM-SIAM Symposium on
  Discrete Algorithms}, SODA '08, pages 258--267, Philadelphia, PA, USA, 2008.
  Society for Industrial and Applied Mathematics.

\bibitem[GR08b]{GR08_folded_RS}
Venkatesan Guruswami and Atri Rudra.
\newblock Explicit codes achieving list decoding capacity: Error-correction
  with optimal redundancy.
\newblock {\em IEEE Transactions on Information Theory}, 54(1):135--150, 2008.

\bibitem[GS99]{GS99}
Venkatesan Guruswami and Madhu Sudan.
\newblock {Improved decoding of Reed-Solomon and algebraic-geometry codes}.
\newblock {\em IEEE Transactions on Information Theory}, 45(6), 1999.

\bibitem[Gur01]{venkat-thesis}
Venkatesan Guruswami.
\newblock {\em {List decoding of error-correcting codes}}.
\newblock PhD thesis, MIT, 2001.

\bibitem[Gur10]{Gur10}
Venkatesan Guruswami.
\newblock Cyclotomic function fields, artin--frobenius automorphisms, and list
  error correction with optimal rate.
\newblock {\em Algebra \& Number Theory}, 4(4):433--463, 2010.

\bibitem[Gur11]{G11}
Venkatesan Guruswami.
\newblock {Linear-algebraic list decoding of folded Reed-Solomon codes}.
\newblock In {\em IEEE Conference on Computational Complexity}, pages 77--85,
  2011.

\bibitem[GW11]{GW11}
Venkatesan Guruswami and Carol Wang.
\newblock {Optimal rate list decoding via derivative codes}.
\newblock In {\em RANDOM '11}, 2011.

\bibitem[GX12a]{GX12}
Venkatesan Guruswami and Chaoping Xing.
\newblock Folded codes from function field towers and improved optimal rate
  list decoding.
\newblock In {\em Proceedings of the 44th annual ACM symposium on Theory of
  computing (STOC)}, pages 339--350. ACM, 2012.

\bibitem[GX12b]{GX13eccc}
Venkatesan Guruswami and Chaoping Xing.
\newblock List decoding reed-solomon, algebraic-geometric, and gabidulin
  subcodes up to the singleton bound.
\newblock {\em Electronic Colloquium on Computational Complexity (ECCC)}, 2012.

\bibitem[GX13]{GX13}
Venkatesan Guruswami and Chaoping Xing.
\newblock List decoding reed-solomon, algebraic-geometric, and gabidulin
  subcodes up to the singleton bound.
\newblock In {\em Proceedings of the 45th annual ACM symposium on Theory of
  Computing (STOC)}, pages 843--852. ACM, 2013.

\bibitem[GX14]{GX14}
Venkatesan Guruswami and Chaoping Xing.
\newblock Optimal rate list decoding of folded algebraic-geometric codes over
  constant-sized alphabets.
\newblock In {\em Proceedings of the Twenty-Fifth Annual ACM-SIAM Symposium on
  Discrete Algorithms}, pages 1858--1866. SIAM, 2014.

\bibitem[HIOS15]{HIOS}
Iftach Haitner, Yuval Ishai, Eran Omri, and Ronen Shaltiel.
\newblock Parallel hashing via list recoverability.
\newblock In Rosario Gennaro and Matthew Robshaw, editors, {\em CRYPTO}, volume
  9216 of {\em Lecture Notes in Computer Science}, pages 173--190. Springer,
  2015.

\bibitem[HOW15]{HOW13}
Brett Hemenway, Rafail Ostrovsky, and Mary Wootters.
\newblock Local correctability of expander codes.
\newblock {\em Inf. Comput}, 243:178--190, 2015.

\bibitem[HW15]{HW15}
Brett Hemenway and Mary Wootters.
\newblock {Linear-time list recovery of high-rate expander codes}.
\newblock In {\em ICALP '15}, volume 9134, pages 701--712, 2015.

\bibitem[INR10]{INR10}
Piotr Indyk, Hung~Q. Ngo, and Atri Rudra.
\newblock {Efficiently decodable non-adaptive group testing}.
\newblock In {\em Proceedings of the Twenty-First Annual ACM-SIAM Symposium on
  Discrete Algorithms}, SODA '10, pages 1126--1142, Philadelphia, PA, USA,
  2010. Society for Industrial and Applied Mathematics.

\bibitem[KM93]{KM93}
Eyal Kushilevitz and Yishay Mansour.
\newblock Learning decision trees using the fourier spectrum.
\newblock {\em SIAM Journal on Computing}, 22(6):1331--1348, 1993.

\bibitem[KMRS15]{KMRS15_subpoly}
Swastik Kopparty, Or~Meir, Noga {Ron-Zewi}, and Shubhangi Saraf.
\newblock High-rate locally-correctable and locally-testable codes with
  sub-polynomial query complexity.
\newblock {\em Electronic Colloquium on Computational Complexity {(ECCC)}},
  2015.

\bibitem[KMRS16]{KMRS15-STOC}
Swastik Kopparty, Or~Meir, Noga {Ron-Zewi}, and Shubhangi Saraf.
\newblock High rate locally-correctable and locally-testable codes with
  sub-polynomial query complexity.
\newblock In {\em Journal of the ACM, to appear. Preliminary version in
  proceedings of the 48th Annual Symposium on Theory of Computing ({STOC})},
  pages 25--32. ACM Press, 2016.

\bibitem[Kop15]{Kop15}
Swastik Kopparty.
\newblock List-decoding multiplicity codes.
\newblock {\em Theory OF Computing}, 11(5):149--182, 2015.

\bibitem[Mei09]{M09_LTCs}
Or~Meir.
\newblock Combinatorial construction of locally testable codes.
\newblock {\em SIAM J. Comput.}, 39(2):491--544, 2009.

\bibitem[NPR12]{NPR12}
Hung~Q. Ngo, Ely Porat, and Atri Rudra.
\newblock {Efficiently Decodable Compressed Sensing by List-Recoverable Codes
  and Recursion}.
\newblock In Christoph D\"{u}rr and Thomas Wilke, editors, {\em 29th
  International Symposium on Theoretical Aspects of Computer Science (STACS
  2012)}, volume~14 of {\em Leibniz International Proceedings in Informatics
  (LIPIcs)}, pages 230--241, Dagstuhl, Germany, 2012. Schloss
  Dagstuhl--Leibniz-Zentrum fuer Informatik.

\bibitem[Rud07]{R07a}
Atri Rudra.
\newblock {\em {List Decoding and Property Testing of Error Correcting Codes}}.
\newblock PhD thesis, University of Washington, 2007.

\bibitem[Sti09]{stichtenoth}
Henning Stichtenoth.
\newblock {\em Algebraic function fields and codes}, volume 254.
\newblock Springer Science \& Business Media, 2009.

\bibitem[STV01]{STV01}
Madhu Sudan, Luca Trevisan, and Salil Vadhan.
\newblock Pseudorandom generators without the {XOR} lemma.
\newblock {\em Journal of Computer and System Sciences}, 62(2):236--266, 2001.

\bibitem[Sud01]{S01}
Madhu Sudan.
\newblock Algorithmic introduction to coding theory (lecture notes), 2001.

\bibitem[Tho83]{Thomm83}
Christian Thommesen.
\newblock The existence of binary linear concatenated codes with reed - solomon
  outer codes which asymptotically meet the gilbert- varshamov bound.
\newblock {\em IEEE Trans. Information Theory}, 29(6):850--853, 1983.

\bibitem[Val05]{V05_tensor_product}
Paul Valiant.
\newblock The tensor product of two codes is not necessarily robustly testable.
\newblock In {\em proceedings of the 9th International Workshop on
  Randomization and Computation (RANDOM)}, pages 472--481. Springer, 2005.

\bibitem[Var57]{V57}
R.~R. Varshamov.
\newblock Estimate of the number of signals in error correcting codes.
\newblock {\em Doklady Akadamii Nauk}, pages 739--741, 1957.

\bibitem[Vid11]{V11_tensor_robustness}
Michael Viderman.
\newblock A combination of testability and decodability by tensor products.
\newblock {\em Electronic Colloquium on Computational Complexity (ECCC)},
  18:87, 2011.

\bibitem[Vid13]{V13}
Michael Viderman.
\newblock Strong {LTC}s with inverse poly-log rate and constant soundness.
\newblock In {\em proceedings of the 54th {IEEE} Symposium on Foundations of
  Computer Science (FOCS)}, pages 330--339. IEEE Computer Society, 2013.

\bibitem[Woz58]{W58}
J.M. Wozencraft.
\newblock List decoding.
\newblock Quarterly progress report, MIT, 1958.

\end{thebibliography}

\appendix
\section{List-recovery of algebraic geometry codes}\label{app:ag}
In this appendix, we outline how the approach of~\cite{GX13} needs to be changed in order to obtain linear list-recoverable codes.  The main theorem is as follows.

\begin{theorem}\label{thm:ag}
There are constants $c,c_0$ so that the following holds.
Choose $\eps > 0$ and a positive integer $\ell$.  Suppose that $q \geq \ell^{c/\eps}$ is an even power of a prime.
Let $N_0 = q^{c_0 \ell / \epsilon}$.

Then for all $N \geq N_0$, there is a deterministic polynomial-time construction of an $\F_q$-linear code $\cC: \F_q^{(1 - \eps)N} \to  \F_q^N$ of rate $1 - \eps$ and relative distance $\Omega(\eps^2)$ which is
$(\Omega(\eps^2), \ell, L)$-list-recoverable in time $\poly(N,L)$, returning a list that is contained in a subspace over $\F_q$ of dimension at most
\[ \inparen{ \frac{ q^{c\ell/\eps}}{\eps} }^{2^{\log^*(N)}} . \]
In particular, when $\eps, \ell,q$ are constant, the ouput list size $L$ is $\exp(\exp(\exp(\log^*N)))$.
\end{theorem}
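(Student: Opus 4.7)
The plan is to combine three ingredients: the Algebraic Geometry subcode construction of Guruswami--Xing \cite{GX13} and Guruswami--Kopparty \cite{GK14}, a routine lift from list decoding to list recovery, and a concatenation step that upgrades linearity over a subfield to full $\F_q$-linearity.

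First, I would invoke the construction of \cite{GK14} to obtain a code $C' \colon \F_q^{(1-\eps/2)N'} \to \F_q^{N'}$ of rate $1-\eps/2$ and relative distance $\Omega(\eps^2)$ that is $(\Omega(\eps^2),L')$-list-decodable in polynomial time, where $C'$ is linear over some subfield $\F_{q_0} \subseteq \F_q$ and where the output list is contained in an $\F_{q_0}$-affine subspace of dimension at most $\inparen{q^{c\ell/\eps}/\eps}^{2^{\log^*(N')}}$. Next, I would observe that the Guruswami--Xing algorithm (interpolation of a multivariate polynomial through prescribed points, followed by root-finding over the underlying function field) extends essentially verbatim to the list-recovery setting: given input lists $S_1,\dots,S_{N'}$ of size $\ell$ each, one simply requires the interpolating polynomial to vanish at every point of $\bigcup_i \{i\}\times S_i$. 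The only effect on the parameters is an extra factor of $\ell$ in the number of interpolation constraints, which is absorbed into the bound stated in the theorem; crucially, the subspace structure of the output is preserved because root-finding over the function field is oblivious to whether the evaluation constraints come from list decoding or from list recovery.

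Second, to upgrade from $\F_{q_0}$-linearity to full $\F_q$-linearity, I would concatenate $C'$ with a small $\F_q$-linear inner code $C_{\inn}$. By Corollary~\ref{cor:list-rec-random}, a random $\F_q$-linear code of block length $m$ (for $m$ a small constant depending on $q, \ell, \eps$), rate $1-\eps/2$, relative distance $\Omega(\eps^2)$, and $(\Omega(\eps^2),\ell,q^{O(\ell/\eps)})$-list-recoverable exists with high probability; since $m$ is independent of $N$, such an inner code can be found by brute force in time depending only on $q,\ell,\eps$. Concatenating $C'$ with $C_{\inn}$ produces an $\F_q$-linear code of rate at least $1-\eps$, relative distance $\Omega(\eps^2)$ (after folding the constants), and a natural two-stage list-recovery algorithm: brute-force list-recover each inner block to produce a list of candidate outer symbols, and then feed these lists into the list-recovery version of the GK14 algorithm for $C'$.

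The main obstacle is verifying that the subspace structure of the output list survives the concatenation with the stated bound. The outer algorithm returns a list contained in an $\F_{q_0}$-affine subspace $V$ of the outer message space; since the concatenation map is $\F_q$-linear by construction, the corresponding list of concatenated codewords lies in an $\F_q$-affine subspace of dimension at most $\dim_{\F_{q_0}}(V) \cdot [\F_q:\F_{q_0}]$, which remains within the claimed bound after absorbing the constant $[\F_q:\F_{q_0}]$ factor. The remaining bookkeeping---confirming the final rate is exactly $1-\eps$, the decoding radius is $\Omega(\eps^2)$, the construction is deterministic polynomial time in $N$, and the running time is $\poly(N,L)$---is then standard, and the hypothesis $N \geq N_0 = q^{c_0 \ell/\eps}$ enters to ensure both that the GK14 construction applies at the outer level and that the brute-force construction of $C_{\inn}$ is dominated by $\poly(N)$.
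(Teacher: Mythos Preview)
Your high-level plan matches the paper's: extend the Guruswami--Xing interpolation argument from list-decoding to list-recovery (absorbing the extra factor of $\ell$ in the number of constraints), then concatenate with a short random linear code. However, the field structure in your description is inverted, and this makes your concatenation step fail as written.

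You describe $C'$ as a code over $\F_q$ that is linear only over a proper subfield $\F_{q_0}\subset\F_q$, and then propose to ``upgrade'' to $\F_q$-linearity by concatenating with an $\F_q$-linear inner code. But concatenation cannot do this: if $C'$ is only $\F_{q_0}$-linear and $C_{\inn}$ is $\F_q$-linear (hence also $\F_{q_0}$-linear), the composite map is $\F_{q_0}$-linear and nothing more. In the paper's construction the roles are the other way around: the outer AG subcode $C_{\out}$ lives over the \emph{extension} alphabet $\F_{q^m}$ and is already $\F_q$-linear (this is what the GX/GK construction yields). Concatenation with an $\F_q$-linear inner code $C_{\inn}\colon \F_q^m \to \F_q^{m/\rho_{\inn}}$ then serves to \emph{reduce the alphabet} from $\F_{q^m}$ down to $\F_q$, and $\F_q$-linearity is preserved because both factors are $\F_q$-linear. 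In particular, the inner code's message space must be identified with a single outer symbol ($\F_{q^m}\cong\F_q^m$), which your description does not set up.

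A related point: in the decoding you correctly note that the inner blocks are brute-force list-recovered first and the resulting lists fed to the outer algorithm. But then the outer code must be $(\alpha_{\out},L_{\inn},\cdot)$-list-recoverable with input list size $L_{\inn}=q^{O(\ell/\eps)}$, not $\ell$. The $q^{c\ell/\eps}$ appearing in the final dimension bound is exactly this $L_{\inn}$; it is not a parameter of $C'$ chosen in advance but arises from cascading the inner list size into the outer recovery step. Your write-up states the correct final bound but attributes it to $C'$ before concatenation, which obscures where it comes from.
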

 We remark that the list size is very slowly growing (although admittedly with extremely large constants).

We follow the approach of~\cite{GX13,GK14}.   In \cite{GX13}, Guruswami and Xing show how to construct high-rate list-decodable codes over a constant alphabet, modulo a construction of \em explicit subspace designs. \em  In \cite{GK14}, Guruswami and Kopparty gave such constructions and used them to construct high-rate list-decodable codes over constant-sized alphabets with small list-sizes.  We would like to use these codes here.  However, there are two things which must be modified.  First, the guarantees of~\cite{GX13,GK14} are for list-decodability, and we are after list-recoverability.  Fortunately, this follows from a standard modification of the techniques that they use.  Second, the codes that they obtain are not linear, but rather are linear over a subfield of the alphabet.  To correct this, we concatenate these codes with list-recoverable linear codes of a constant length.  A random linear code has this property, and since we only require them to be of constant length, we may find such a code, and run list-recovery algorithms on it, in constant time.

We begin by addressing the leap from list-decodability to list-recovery, and then discuss the code concatenation step.
We refer the reader to~\cite{GX13,GK14} for the details (and, indeed, for several definitions); here we just outline the parts which are important for list-recovery.
The basic outline of the construction (and the argument) is as follows:
\begin{description}
	\item[Step 1.] Show that AG codes are list-decodable, with large but very structured lists.  We will extend this to list-recoverability with structured lists.
	\item[Step 2.] Show that one can efficiently find a subcode of the AG code which will avoid this sort of structure: this reduces the list size.  This part of the argument goes through unchanged, and will yield a list-recoverable code over $\F_{q^m}$ with small list size.
\end{description}
Once we have $\F_q$-linear codes over $\F_q^m$ that are list-recoverable, we discuss the third step:
\begin{description}
	\item[Step 3.]  The code produced is $\F_q$-linear (rather than $\F_{q^m}$-linear).  This was fine for~\cite{GX13,GK14}, but for us we require a code which is linear over the alphabet it is defined over.  To get around this we concatenate the codes above with a random linear code of length $m$ over $\F_q$.  This will result in an $\F_q$-linear code over $\F_q$ that is list-recoverable with small list sizes.
\end{description}

We briefly go through the details.  First we give a short refresher/introduction to the notation.  Then we handle the three steps above, in order.  We note that throughout this appendix we will refer to Theorem and Lemma numbers in the extended version~\cite{GX13eccc} rather than the conference version~\cite{GX13}.

\paragraph{Step 0. Algebraic Geometry Codes and basic notation.}
Since we do not need to open up the AG code machinery very much in order to extend the results of \cite{GX13} to list-recovery, we do not go into great detail here, and we refer the reader to~\cite{GX13} and the references therein for the technical details, and to~\cite{stichtenoth} for a comprehensive treatment of AG codes.  However, for the ease of exposition here (for the reader unfamiliar with AG codes), we will introduce some notation and explain the intuitive definitions of these notions.  In particular, we will use the running example of a rational function field.  We stress that this is \em not \em the final function field used; thus the intuition should be taken as intuition only.

Let $F/\F_q$ be a function field of genus $g$.
One example (which may be helpful to keep in mind) of a genus-$0$ function field is the
rational function field $\F_q(X)/\F_q$, which may be thought of as rational functions $f(X)/g(X)$, where $f,g \in \F_q[X]$ are irreducible polynomials.
For the code construction, we will use a function field of larger genus (given by the Garcia-Stichtenoth tower, as in~\cite{GX13}), but we will use this example to intuitively define the algebraic objects that we need.

Let $P_\infty, P_1,\ldots,P_n$ be $n+1$ distinct $\F_q$-rational places (that is, of degree $1$).
Formally, these are ideals, but they are in one-to-one correspondence with $\F_{q} \cup \inset{ \infty }$, and let us think of them that way.  For each such place $P$, there is a map (the \em residue class map \em with respect to $P$) which maps $F/\F_q$ to $\F_q$; we may think of this as function evaluation, and in our example of $\F_q(X)/\F_q$, if $P$ is a place associated with a point $\alpha \in \F_q$, then indeed this maps $f(X)/g(X)$ to $f(\alpha)/g(\alpha)$.

Let $\mathcal{L}(lP_\infty)$ be the Riemann-Roch space over $\F_q$.  Formally, this is
\[ \mathcal{L}(lP_\infty) = \inset{ h \in F\setminus \inset{0} : \nu_{P_\infty}(h) \geq -l } \cup \inset{0}, \]
where $\nu_{P_\infty}$ is the discrete valuation of $P_\infty$.  Informally (in our running example), this should be thought of as the set of rational functions $f(X)/g(X)$ so that $\deg(g(X)) - \deg(f(X)) \geq -l$.  In particular, the number of poles of $f/g$ is at least the number of roots, minus $l$.    It would be tempting, in this example, to think of these as degree $\leq l$ polynomials; all but at most $l$ of the powers of $X$ in the numerator are ``canceled" in the denominator.  Of course, there are many problems with this intuition, but it turns out that
this indeed works out in some sense.  In particular, it can be shown that the dimension of this space is at least $l - g + 1$.  When $g = 0$ (as in our running example), it is exactly $l + 1$, the same as the dimension of the space of degree-$\leq l$ polynomials.

More generally (whatever the genus), for any rational place $P$, we may write a function $h \in \mathcal{L}_m(lP_\infty)$ as
\begin{equation}\label{eq:expand}
 h = \sum_{j=0}^\infty h_j T^j,
\end{equation}
where $T$ is a local parameter of $P$, and it turns out that $h$ is uniquely determined by the first $l + 1$ coefficients $h_0, h_1,\ldots, h_{l + 1}$.

Now let $F_m$ be the constant extension $\F_{q^m}\cdot F$, and let $\mathcal{L}_m(lP_\infty)$ be the corresponding Riemann-Roch space.  This has the same dimension over $\F_{q^m}$ as $\mathcal{L}(lP_\infty)$ does over $\F_q$.  Now we consider the algebraic geometry code defined by
\[ \cC(m;l) := \inset{ (h(P_1),\ldots,h(P_n)) \suchthat h \in \mathcal{L}_m(lP_\infty) }. \]
Following the intuition that $h(P_i)$ denotes function evaluation, this definition looks syntactically the same as a standard polynomial evaluation code, and should be thought of that way.  This is an $\F_{q^m}$-linear code over $\F_{q^m}$, with block length $n$ and dimension at least $l - g + 1$.

\paragraph{Step 1. List-decoding with structured lists to list-recovery with structured lists.}
With the preliminaries (and some basic, if possibly misleading, intuition for the reader unfamiliar with AG codes) out of the way, we press on with the argument.

Fix a parameter $k$, and consider
a general AG code $\cC(m;k + 2g - 1)$, with the notation above.  (We will fix a particular AG (sub)code later, by choosing a function field and by choosing a subcode).
Let $S_1,\ldots,S_n \subset \F_{q^m}$ be lists of size at most $\ell$ corresponding to each coordinate.  We first show that $\cC(m;k + 2g - 1)$ is $(1 - \alpha, \ell, L)$-list-recoverable for some $\alpha$ to be chosen below, where the list size is very large, but the list is structured.
	In~\cite{GX13}, the approach (similar to that in~\cite{G11} or \cite{GW11}) is as follows.
\begin{enumerate}
	\item We will first find a low-degree interpolating linear polynomial (whose coefficients live in Riemann-Roch spaces)
	\[ Q(Y_1,\ldots, Y_s) = A_0 + A_1 Y + \cdots + A_s Y_s \]
	so that $A_i \in \mathcal{L}_m(DP_\infty)$ and $A_0 \in \mathcal{L}_m((D + k + 2g - 1)P_\infty)$, for some parameter $k$ to be chosen later, for
\[ D = \lfloor \frac{ \ell n - k + (s - 1)g + 1 }{s + 1} \rfloor,\]
and subject to $\ell n$ linear constraints over $\F_{q^m}$.  Before we list the constraints, notice that the number of degrees of freedom in $Q$ is
\[ s(D - g + 1) + D + k + g, \]
because the $\F_{q^m}$-dimension of $\mathcal{L}_m((D + k + 2g - 1)P_\infty)$ is at least $D + k + g$, and the $\F_{q^m}$-dimension of $\mathcal{L}_m((DP_\infty))$ is at least $D - g + 1$.  Thus, the choice of $D$ shows that the dimension of this space of interpolating polynomials is greater than $\ell n$.  Thus, we will be able to find such a $Q$ that satisfies the $\ell n$ following $\ell n$ constraints.
For each $i \in [n]$ and for all $y \in S_i$, we have the constraint that
\[ A_0(P_i) + A_1(P_i)y + A_2(P_i)y^q + \cdots + A_s(P_i)y^{q^{s-1}} = 0. \]
	\item With this polynomial $Q$ in hand, we observe that if $h \in \mathcal{L}_m((k + 2g - 1)P_\infty)$ whose encoding has $h(P_i) \in S_i$ for at least $\alpha n$ positions $i$, for $\alpha n > D + k + 2g - 1$, then $Q(h, h^\sigma, \ldots, h^{\sigma^{s-1}}) = 0$,
where $h^\sigma$ denotes the extension of the Frobenius automorphism $\alpha \mapsto \alpha^q$ on $\F_{q^m}$ to $\mathcal{L}_m(lP_\infty)$.
This proof (Lemma 4.7 in \cite{GX13eccc}) remains unchanged when we pass to list-recovery from list-decoding.  Briefly, this agreement means that
\[ Q(h, \ldots, h^{\sigma^{s-1}})(P_i) = A_0(P_i) + A_1(P_i) h(P_i) + \cdots + A_s(P_i) h(P_i)^{q^{s-1}} = 0 \]
for at least $\alpha n$ values of $i$, and so the function $Q(h, h^\sigma, \ldots, h^{\sigma^{s-1}})$ (which lies in $\mathcal{L}_m((D + k + 2g - 1)P_\infty)$; as per the intuition above, we are thinking of these as roughly analogous to degree-$(D + k + 2g -1)$ polynomials) has at least $\alpha n \geq D + k + 2g - 1$ roots, and hence is the zero function.
	\item Thus, any element $h \in \mathcal{L}_m((k + 2g - 1)P_\infty)$ that agrees with at least $\alpha n$ lists also satisfies $Q(h, \ldots, h^{\sigma^{s-1}}) = 0$.  It remains to analyze the space of these solutions, and to show that they are nicely structured.  This requires one more step, which goes through without change.
	More precisely, \cite{GX13} takes a subcode of $\cC(m; k + 2g - 1)$; this subcode will still have a large list size, but the list will be structured.
This resulting code, denoted $\cC(m; k + 2g - 1 | \F_{q^m}^k)$, has dimension $k$.  (Recall that $\cC(m; k + 2g - 1)$ has dimension $k + g$, so we have reduced the dimension by $g$.)
We refer the reader to \cite{GX13} for the details, as they do not matter for us.
At the end of the day,
the analysis of~\cite{GX13} (Lemma 4.8 in the full version~\cite{GX13eccc}) applies unchanged to show that the set of messages $h$ in this new code that are solutions to this equation lie in a structured space: more precisely, the coefficients $(h_0, h_1, \ldots, h_{k + 2g - 1})$ as in \eqref{eq:expand} belong to an \em $(s-1, m)$-ultra-periodic subspace \em of $\F_q^{m(k + 2g - 1)}$.
For us, the precise definition of this does not matter, as we may use the rest of \cite{GX13} as a black box.
	\item Before we move on, we summarize parameters.  We have so far established that there is a code $\cC(m;k + 2g - 1 | \F_{q^m}^k)$ that is list-recoverable with agreement parameter $\alpha$ and inner list sizes $\ell$, resulting in a structured list.  The requirement on $\alpha$ is:
\begin{align*}
	\alpha n & > D + k + 2g - 1 \\
		& = \left\lfloor \frac{ \ell n - k + (s-1)g + 1 }{s + 1} \right\rfloor + k + 2g - 1,
\end{align*}
and so it suffices to take
\begin{align*}
	\alpha n
		& >  \frac{ \ell n - k + (s-1)g + 1 }{s + 1} + k + 2g - 1 \\
	&= \frac{1}{s + 1} \inparen{ \ell n + s(k-1) + g(3s + 1) }.
\end{align*}
Again, the dimension of the code is $k$ and the length is $n$.  It is $\F_{q^m}$-linear over $\F_{q^m}$.
\end{enumerate}
\paragraph{Step 2. Taking a subcode.}
For this step, we may follow the argument of \cite{GX13} without change.
Briefly, to instantiate the AG code we use a function field from a \em Garcia-Stichtenoth tower. \em  The parameters of this are as follows: we choose a prime power $r$, and let $q = r^2$.  Then we choose an integer $e >0$.  There is a function field $F = K_e$ so that
$K_e$ has at least $n = r^{e-1}(r^2 - r) + 1$ rational places, and genus $g_e$ bounded by $r^e$.   This is the function field we will use.
We remark that \cite{GX13} has to do a bit of work here to show that one can actually find a description of the structured list efficiently, but it can be done.  We plug in parameters to obtain the following Lemma, which is analogous to Theorem 4.14 in \cite{GX13eccc}.

\begin{lem}\label{lem:almostthere}
Let $q$ be the even power of a prime, and choose $\ell, \eps > 0$.   There is a parameter $s = O(\ell /\eps)$ so that the following holds.  Let $m \geq s$ and let $R \in (0,1)$.
Suppose that $\alpha \geq R + \eps + 3/\sqrt{q}$.
Then for infinitely many $n$ (all integers of the form $n = q^{e/2}(\sqrt{q} - 1)$), there is a deterministic polynomial-time construction of an $\F_{q^m}$-linear code $\cC$ of block length $n$, dimension $k = Rn$, so that the following holds: for any sets $S_1,\ldots, S_n \subseteq \F_{q^m}$ with $|S_i| \leq \ell$ for all $i$, the set of messages leading to codewords $c \in \cC$ so that $c_i \in S_i$ for at least $\alpha n$ coordinates $i$ is contained in one of $q^{O(mn)}$ possible $(s - 1, m)$-ultra periodic $\F_q$-affine subspaces of $\F_q^{mk}$.  Further, this collection of subspaces can be described in time $\poly(n, m)$.
\end{lem}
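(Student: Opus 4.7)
The plan is to carry out the three-step outline given in the excerpt, instantiating the AG code via the Garcia--Stichtenoth tower and then verifying that the parameters work out as claimed. First, I would fix the function field: take $r = \sqrt{q}$ (recall $q$ is an even power of a prime, so $r$ is a prime power) and let $F = K_e$ be the $e$-th level of the Garcia--Stichtenoth tower over $\F_q$. This gives a function field with at least $n = r^{e-1}(r^2 - r) + 1 = q^{e/2}(\sqrt{q} - 1) + 1$ rational places $P_1, \ldots, P_n$, plus a further rational place $P_\infty$, with genus $g \leq r^e = q^{e/2}$. In particular, $g/n \leq 1/(\sqrt{q} - 1) \leq 2/\sqrt{q}$ for $q \geq 4$, which is where the $3/\sqrt{q}$ slack in the statement comes from.

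Next I would carry out the interpolation/root-finding argument from Step~1 of the excerpt using the code $\cC(m; k + 2g - 1)$ over the constant extension $F_m = \F_{q^m} \cdot F$, with $k = Rn$. As sketched in the excerpt, for a parameter $s$ to be chosen and the specific choice
\[
D = \left\lfloor \frac{\ell n - k + (s-1)g + 1}{s+1} \right\rfloor,
\]
a dimension count produces a nonzero linear interpolating polynomial
\[
Q(Y_1, \ldots, Y_s) = A_0 + A_1 Y_1 + \cdots + A_s Y_s,
\]
with $A_0 \in \mathcal{L}_m((D + k + 2g - 1)P_\infty)$ and $A_i \in \mathcal{L}_m(DP_\infty)$ for $i \geq 1$, which vanishes on all the $\ell n$ list constraints $(P_i, y)$ for $y \in S_i$. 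Then for any $h \in \mathcal{L}_m((k + 2g - 1)P_\infty)$ whose evaluation agrees with the lists on at least $\alpha n$ positions, the function $Q(h, h^\sigma, \ldots, h^{\sigma^{s-1}}) \in \mathcal{L}_m((D + k + 2g - 1)P_\infty)$ has at least $\alpha n$ zeros. Whenever $\alpha n > D + k + 2g - 1$ this forces it to be the zero function, exactly as in the list-decoding argument: the key point is that the argument does not care whether each constraint came from an agreement of a single symbol (list-decoding) or from one of $\ell$ symbols (list-recovery). Step~2 of the excerpt then applies verbatim: passing to the subcode $\cC(m; k + 2g - 1 \mid \F_{q^m}^k)$ of dimension exactly $k$ and invoking Lemma~4.8 of \cite{GX13eccc} shows that the coefficient vectors $(h_0, \ldots, h_{k + 2g - 1})$ of the surviving messages lie in an $(s-1, m)$-ultra-periodic $\F_q$-affine subspace of $\F_q^{m(k + 2g - 1)}$, determined by $Q$.

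It remains to choose $s$ and bound the number of candidate subspaces. Using $g \leq 2n/\sqrt{q}$, the agreement requirement simplifies to
\[
\alpha \;\geq\; \frac{\ell}{s+1} + \frac{sR}{s+1} + \frac{(3s+1)g}{(s+1)n} \;\leq\; \frac{\ell}{s+1} + R + \frac{3g}{n} \;\leq\; \frac{\ell}{s+1} + R + \frac{6}{\sqrt{q}}.
\]
Choosing $s = \lceil \ell/\eps \rceil = O(\ell/\eps)$ makes the first term at most $\eps$, so $\alpha \geq R + \eps + 3/\sqrt{q}$ (after adjusting constants and using $q$ large enough) suffices, giving the $s = O(\ell/\eps)$ claimed. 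Since each candidate subspace is determined by the choice of $Q$, and $Q$ lies in an $\F_{q^m}$-space of dimension at most $s(D - g + 1) + (D + k + g) \leq (s+1)D + k + sg = O(n)$ over $\F_{q^m}$, the total number of candidate $Q$'s (and hence subspaces) is at most $q^{O(mn)}$, as required. Efficient construction of the subcode, efficient enumeration of the subspaces from $Q$, and the $\poly(n,m)$ running time all follow from the corresponding assertions in \cite{GX13eccc} (their Theorem~4.14), which go through unchanged since only the interpolation constraints differ.

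The step I expect to require the most care is simply bookkeeping: checking that the list-recovery version of the interpolation inequality $\alpha n > D + k + 2g - 1$, instantiated with $n = q^{e/2}(\sqrt q - 1) + 1$ and $g \leq q^{e/2}$, really does yield the clean bound $\alpha \geq R + \eps + 3/\sqrt q$ for the stated choice $s = O(\ell/\eps)$, with no hidden dependence on $m$ or on the growing block length $n$. The underlying algebraic machinery and the ultra-periodic-subspace conclusion are imported directly from \cite{GX13eccc}.
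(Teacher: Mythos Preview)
Your proposal is correct and follows essentially the same approach as the paper: instantiate with the Garcia--Stichtenoth tower, run the interpolation argument from Step~1 with $\ell n$ constraints, pass to the subcode from Step~2, and then choose $s = O(\ell/\eps)$ and use $g_e \leq n/(\sqrt{q}-1)$ to simplify the condition $\alpha n > D + k + 2g - 1$ into the clean bound $\alpha \geq R + \eps + 3/\sqrt{q}$. The paper's proof of the lemma itself is essentially just the final parameter bookkeeping step (your last paragraph), since all the algebraic setup is done in the preceding discussion; your bound $6/\sqrt{q}$ versus the paper's $3/(\sqrt{q}-1)$ is a harmless constant discrepancy arising from the slightly looser estimate $g \leq 2n/\sqrt{q}$ in place of $g \leq n/(\sqrt{q}-1)$.
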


\begin{proof}
Our condition on $\alpha$ is that it is at least
\begin{align*}
\frac{ \ell n + s(k-1) + g_e(3s + 1) }{n (s + 1) }
&\leq \frac{ \ell n + s(k-1) + n(3s + 1)/(r - 1) }{n(s+1)} \qquad \text{Using $g_e \leq n/(r-1)$}\\
&= \frac{ \ell + s(R - 1/n) + (3s + 1)/(r - 1) }{s + 1 }.
\end{align*}
Choosing $s = O(\ell/\eps)$ and using the fact that $r = \sqrt{q}$ gives the conclusion.
\end{proof}

With this lemma in hand, we may proceed exactly as the proof in \cite{GX13}; indeed, it is exactly the same code, and we exactly the same conclusion on the structure of the candidate messages.  The basic idea is to choose a subset of messages carefully via a \em cascaded subspace design. \em  This ensures that the number of legitimate messages remaining in the list is small, and further that they can be found efficiently.

We briefly go through parameters, again referring the reader to the discussion in \cite{GX13,GK14} for details.
We will fix
\begin{equation}\label{eq:chooses}
s = O(\ell/\eps), \qquad \text{and} \qquad  m = O\inparen{ \frac{ \ell}{\eps^2} \cdot \log_q(\ell/\eps) }.
\end{equation}
We now trace these choices through the analysis of \cite{GX13, GX14}.
\begin{rem} The reader familiar with these sorts of arguments might expect us to set $m = \ell/\eps^2$, and indeed this would be sufficient if we could allow $q$ to be sufficiently large.  However, in this case, setting $m$ this way would result in a requirement that $q \geq \ell /\eps^2$.  We would like $q$ to be independent of $\ell$ for the next concatenation step to work (of course, the alphabet size $q^m$ must be larger than $\ell$), and this requires us to take $m$ slightly larger.  This loss comes out in the final list size.
\end{rem}

Without defining a cascaded subspace design, we will just mention that it is a sequence of $T$ subspace designs (which we will not define either); a cascaded subspace design comes with vectors of parameters $(r_0, \ldots, r_T)$, $(m_0,\ldots, m_T)$, and $(d_0,\ldots,d_{T_1})$.  For $i=1,\ldots,T$, the $i$'th subspace design in this sequence is a $(r_{i-1},r_i)$-strong-subspace design in $\F_q^{m_{i-1}}$, of cardinality $m_i/m_{i-1}$, and dimension $d_{i-1}$.
Again, for us it does not matter what a strong subspace design is, only that we may find explicit ones:
\begin{theorem}[Follows from Theorem 6 in \cite{GK14}]\label{thm:gk}
For all $\zeta \in (0,1)$ and for all $r,m$ with $r \leq \zeta m / 4$, and for all prime powers $q$ so that $2r/\zeta < q^{\zeta m/(2r)}$, there exists an explicit collection of $M \geq q^{\Omega( \zeta m  / r ) }/(2r)$ subspaces in $\F_q^m$, each of codimension at most $\zeta m$, which form a $(r, r^2/\zeta)$-strong subspace design.
\end{theorem}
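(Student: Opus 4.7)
The plan is to realize $\F_q^m$ as the space $\F_q[X]_{<m}$ of polynomials of degree less than $m$, and to build the subspaces from \emph{evaluation constraints} in a suitable extension field $\F_{q^{s}}$. First, I would set $s := \lceil \zeta m/(2r)\rceil$ and note that the hypothesis $2r/\zeta < q^{\zeta m/(2r)}$ guarantees $q^{s}\geq 2r/\zeta$, so that $\F_{q^{s}}$ has at least $M := \lfloor (q^{s}-1)/(2r)\rfloor \geq q^{\Omega(\zeta m/r)}/(2r)$ elements to play with. Partition a set of nonzero elements of $\F_{q^{s}}$ into $M$ ``batches'' $B_1,\dots,B_M$, each of size roughly $2r$, and choose an $\F_q$-subspace $U \subseteq \F_{q^{s}}$ of codimension $\zeta m/s = 2r$. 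For each $i$, define
\[
V_i \;=\; \bigl\{\,f\in \F_q[X]_{<m} : f(\beta)\in U \text{ for every }\beta\in B_i\,\bigr\}.
\]
Since each condition ``$f(\beta)\in U$'' imposes $2r$ linear constraints over $\F_q$ and $|B_i|\approx \zeta m/(2rs)\cdot s$, the codimension of $V_i$ is at most $\zeta m$, as required.

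Next, for the \emph{strong subspace design} property, I would fix an arbitrary $W\subseteq\F_q^m$ with $\dim_{\F_q} W\leq r$ and bound $\sum_i \dim_{\F_q}(W\cap V_i)$. The key observation is that the evaluation map $\mathrm{ev}_\beta\colon W\to\F_{q^{s}}$ at a single point $\beta$ is $\F_q$-linear, and $\dim_{\F_q}\ker(\mathrm{ev}_\beta) \leq r$. More importantly, the ``Schwartz--Zippel''-style bound for low-degree polynomials says that a nonzero $f\in W\subseteq \F_q[X]_{<m}$ can vanish on $\F_{q^{s}}$ at most $\lfloor m/s\rfloor \leq 2r/\zeta$ times. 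Extending this to the relaxed condition ``$f(\beta)\in U$'' (which is a codimension-$2r$ linear condition on $f(\beta)$) lets me argue that for any fixed direction $w\in W$, the number of $i$ for which $w\in V_i$ is at most $O(r/\zeta)$. Iterating this over a basis of $W$ and a careful dimension-accounting argument (following GK14) should yield
\[
\sum_{i=1}^{M} \dim_{\F_q}(W\cap V_i) \;\leq\; r\cdot\frac{r}{\zeta} \;=\; \frac{r^2}{\zeta},
\]
which is exactly the required strong subspace design condition.

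The main obstacle, and where the argument is most delicate, is the last step: the naive pointwise bound gives a polynomial can be ``caught'' by $\lfloor m/s\rfloor$ of the $V_i$, but we need to amortize this simultaneously across an entire $r$-dimensional $W$ and show the \emph{sum} $\sum_i \dim(W\cap V_i)$ (rather than the number of $i$ with nonzero intersection) is controlled. The standard trick is to pick, for each $i$, a basis of $W\cap V_i$ and assemble all these basis elements into a single matrix; then ``folding'' polynomials by combining the evaluation conditions over batches $B_i$ of size $2r$ turns $d_i := \dim(W\cap V_i)$ copies of the linear constraint into a single polynomial identity of degree $<m$ that must have many roots in $\F_{q^{s}}$, forcing $\sum_i d_i$ to be small. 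Getting the constants in this folding step to match $r^2/\zeta$ (rather than, say, $r^2/\zeta\cdot\mathrm{polylog}$) is the only place where explicit numerical work is genuinely required; everything else---the explicitness, the polynomial-time computability of the $V_i$, and the codimension bound---is immediate from the construction.
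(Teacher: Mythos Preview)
The paper does not prove this theorem from scratch; the statement is a black-box citation of Theorem~6 in Guruswami--Kopparty, plus the single observation spelled out in the remark immediately after: GK14 constructs $(r, r/\zeta)$-\emph{weak} subspace designs with the stated parameters, and any $(A,B)$-weak subspace design is automatically an $(A,AB)$-strong one, because each intersection $W\cap V_i$ has dimension at most $\dim W \leq A$, so summing over the at most $B$ indices with $W\cap V_i \neq \{0\}$ gives $\sum_i \dim(W\cap V_i) \leq AB$. This one-line conversion is exactly what dissolves your ``main obstacle'': no folding or amortized root-counting is needed to pass from the weak count to the strong sum---just the trivial per-term bound $\dim(W\cap V_i)\leq r$.

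Your attempted reconstruction of the underlying construction also has parameter problems. You take an $\F_q$-subspace $U\subseteq\F_{q^s}$ of codimension $2r$, which forces $s\geq 2r$, i.e.\ $\zeta m/(2r)\geq 2r$, i.e.\ $m\geq 4r^2/\zeta$; but the hypothesis only guarantees $m\geq 4r/\zeta$. The codimension count for $V_i$ has the same issue: with $|B_i|\approx 2r$ and $\mathrm{codim}\,U = 2r$, you impose up to $4r^2$ constraints, and you need this $\leq \zeta m$, again the stronger hypothesis. The actual GK14 construction indexes subspaces by \emph{single} elements of an extension field (not batches) and defines each $V_\gamma$ via a folded-RS/Frobenius-type evaluation condition; the weak design property then follows from a clean interpolation-and-root-counting argument, and the strong property comes for free via the conversion above.
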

\begin{rem} In \cite{GK14}, the theorem is stated for $(r,r/\zeta)$-weak subspace designs; however, as is noted in that work, a $(A,B)$-weak subspace design is also a $(A, AB)$-strong subspace design, which yields our version of the theorem.
\end{rem}

Below, we will use Theorem~\ref{thm:gk} in order to instantiate a cascaded subspace design.  The reason we want to do this is because of Lemma 5.6 in \cite{GX13eccc}:
\begin{lemma}[Lemma 5.6 in \cite{GX13eccc}]\label{lem:canonicalsubspace}
Let $\mathcal{M}$ be a $(r_0, r_1, \ldots , r_T)$-cascaded subspace design with length-vector $(m_0,m_1,\ldots,m_T)$.
Let $A$ be an $(r,m)$-ultra periodic affine subspace of $\mathbb{F}_q^{m_T}$.  Then the dimension of the affine space $A \cap U(\mathcal{M})$ is at most $r_T$, where $U(\mathcal{M})$ denotes the \em canonical subspace \em of $\mathcal{M}$.
\end{lemma}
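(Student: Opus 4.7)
The plan is to prove the claim by induction on $T$, the depth of the cascaded subspace design. The base case $T=0$ should reduce directly to the defining property of an $(r,m)$-ultra periodic affine subspace in $\mathbb{F}_q^{m_0}$: in the absence of any subspace design, the canonical subspace $U(\mathcal{M})$ should either be the whole space or trivial, and the dimension bound $r_0$ will come from unpacking the ultra-periodic structure directly (essentially, an ultra-periodic affine subspace has a fixed ``base block'' that determines everything else, and its dimension is controlled by $r_0$).

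For the inductive step, I would peel off the outermost level of the cascade: write $\mathcal{M}$ as an outer $(r_{T-1},r_T)$-strong subspace design $\mathcal{H}_T$ of cardinality $m_T/m_{T-1}$ in $\mathbb{F}_q^{m_{T-1}}$, together with a depth-$(T-1)$ cascaded design $\mathcal{M}'$ with parameters $(r_0,\ldots,r_{T-1})$ and length-vector $(m_0,\ldots,m_{T-1})$. Under the identification $\mathbb{F}_q^{m_T} \cong (\mathbb{F}_q^{m_{T-1}})^{m_T/m_{T-1}}$, the canonical subspace $U(\mathcal{M})$ corresponds to vectors whose $j$-th block is constrained to lie in the $j$-th member $H_j$ of $\mathcal{H}_T$, and whose ``folded'' version lies in $U(\mathcal{M}')$. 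The ultra-periodicity of $A$ forces its projection onto each block to live inside a subspace of dimension at most $r_{T-1}$ (with the same ``period'' relative to $r$ and $m$ on the inner coordinates).

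The key step is then to apply the strong subspace design property of $\mathcal{H}_T$: it bounds $\sum_j \dim(\mathrm{proj}_j(A) \cap H_j) \leq r_T \cdot (\text{number of blocks})/(\text{normalization})$, which, combined with the ultra-periodic compatibility between the blocks, shows that $A \cap (\prod_j H_j)$ is itself ultra-periodic as a subspace of $\mathbb{F}_q^{m_{T-1}}$ (after folding). Intersecting further with $U(\mathcal{M}')$ and applying the inductive hypothesis to $\mathcal{M}'$ yields the desired bound of $r_T$.

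The main obstacle will be making the folding step precise: one must show that the ultra-periodic structure of $A$ interacts cleanly with the block decomposition, so that the strong subspace design inequality can be pushed through without losing a factor that depends on the number of blocks $m_T/m_{T-1}$. The choice of \emph{strong} (rather than weak) subspace designs is essential here, because we need a \emph{simultaneous} bound on the intersection dimensions across all blocks of $\mathcal{H}_T$ in order to conclude that the folded intersection is itself an ultra-periodic subspace of controlled dimension in $\mathbb{F}_q^{m_{T-1}}$ that feeds correctly into the induction.
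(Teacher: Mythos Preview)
The paper does not prove this lemma. It is quoted verbatim as Lemma~5.6 of~\cite{GX13eccc} and used as a black box; the surrounding text explicitly declines to define ``cascaded subspace design,'' ``ultra-periodic subspace,'' or ``canonical subspace,'' referring the reader to~\cite{GX13eccc} for all details. So there is no proof in the present paper to compare your proposal against.

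That said, your overall plan is in the right spirit: the argument in~\cite{GX13eccc} does proceed level by level through the cascade, invoking the strong subspace design property at each stage to control the intersection dimension. A couple of points in your sketch are imprecise, though. First, your base case is muddled: when $T=0$ there are no subspace designs at all, $U(\mathcal{M})$ is the whole space, and the bound $r_0$ must come purely from the periodicity parameter of $A$ --- this only makes sense if $(r,m)=(r_0,m_0)$, a hypothesis that is implicit in how the lemma is applied but not visible in the statement as quoted. Second, your strong-subspace-design inequality is garbled: an $(r_{i-1},r_i)$-strong subspace design guarantees that for any subspace $W$ of dimension at most $r_{i-1}$ one has $\sum_j \dim(W\cap H_j)\le r_i$, with no extra factor depending on the number of blocks; that clean bound is exactly what drives the step from $r_{i-1}$ to $r_i$. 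Finally, in~\cite{GX13eccc} the induction runs from the inside out (level $1$ up to level $T$) rather than by peeling off the outermost layer as you propose; either direction can be made to work, but the inside-out direction aligns more naturally with how the canonical subspace is built.
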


We have not defined a canonical subspace, and we refer the reader to \cite{GX13eccc} for details; the important thing for us is that
we wish to construct a cascaded subspace design $\mathcal{M}$ so that $r_T$ is small, $m_T$ is equal to $mk$, and so that $r_0 = s-1$ and $m_0 = m$.  This will allow us to choose a subcode of the code from Lemma~\ref{lem:almostthere} by restricting the space of messages to the canonical subspace $U(\mathcal{M})$, and this will be the $\F_q$-linear code (over $\F_q^m$) that we are after.

We may use Theorem~\ref{thm:gk} to instantiate such a cascaded subspace design as follows (the derivation below follows the proof of Lemma 5.7 in \cite{GX13eccc}).
We choose $\zeta_i = \eps / 2^i$, $r_0 = s-1$, and $r_i = r_{i-1}^2 / \zeta_i$.  We choose
$m_0 = m$ and we will define $m_i = m_{i-1} \cdot q^{\sqrt{m_{i-1}}}$.
We will continue up to $i = T$, choosing $T$ so that $m_T = mk$.
At this point, we must deal with the detail that there may be no such $T$; to deal with this we do exactly as in the proof of Lemma 5.7 in~\cite{GX13eccc}  and modify our last two choices of $m_{T-1}, m_T$ so that $m_T \leq mk$ but is close (within an additive $\log^2_q(km)$); for our final subspace, we will pad the $m_T$-dimensional vectors with $0$'s in order to form a subspace in $\F_q^{mk}$ with the same dimension.
  Choosing $m_T \approx mk$ puts $T = O(\log^*(mk))$, and $r_T = O(s/\eps)^{2^T} \leq (\ell/\eps)^{O(\log^*(mk))}.$

With these choices, we instantiate $T$ subspace designs via Theorem~\ref{thm:gk}, with $m \gets m_i, r \gets r_i$, and $\zeta \gets \zeta_i$.  We check that the requirements of Theorem~\ref{thm:gk} are satisfied, beginning with the requirement that $r_i \leq \zeta_i m_i / 4.$
Since $m_i \zeta_i$ grows much faster than $r_i$ as $i$ increases, it suffices to check this for $i = 0$, when we require $r_0 \leq \zeta_0 m_0$, or $s-1 \leq m \eps / 8$.  Our choices of $m$ and $s$ in \eqref{eq:chooses} satisfy this.

The next requirement is that $2r_i/\zeta_i \leq q^{\zeta_i m_i / (2r_i)}$ for all $i$.  Again, the right hand side grows much faster than the left, and so we establish this for $i=0$, requiring that
\[ \frac{ 4 (s-1) }{\eps} \leq q^{\eps m / 4(s-1)}. \]
With our choices of $m$ and $s$, this requirement is that
\[ \frac{\ell}{\eps^2} \leq q^{O( \log_q(\ell/\eps) ) } , \]
which is true.

Thus, Theorem~\ref{thm:gk} provides us with a cascaded subspace design with the given parameters.  As mentioned above, we may then use Lemma~\ref{lem:canonicalsubspace} to choose an appropriate subcode of our AG code from Lemma~\ref{lem:almostthere}.
We have chosen the parameters above so that $(r_0,m_0) = (s-1,m)$, precisely the guarantee of Lemma \ref{lem:almostthere}.  Thus, the final bound on the dimension of this intersection is $r_T \leq (\ell/\eps)^{O(\log^*(mk))}$, which gives our final list size.  Finally, we observe (as in Observation 5.5 of~\cite{GX13eccc}) that the dimension of the resulting subcode is at least $(1 - \sum_i \zeta_i) m_T = (1 - \eps)mk$.  Thus the final code has dimension at least $(1 - \eps)mk$ over $\F_q^{km}$, and hence the final rate is at least $(1 - \eps)R$.  Observing that $q$ must be at least $\eps^{-2}$ for the $1/\sqrt{q}$ term in Lemma~\ref{lem:almostthere} to be absorbed into the additive $\eps$ factor, we arrive at the following theorem.

\begin{theorem}\label{thm:fqlinear} Let $q$ be an even power of a prime, and choose $\ell, \eps > 0$, so that $q \geq \eps^{-2}$.  Choose $\rho \in (0,1)$.
There is an $m_{min} = O(\ell \log_q(\ell/\eps) /\eps^2)$ so that the following holds for all $m \geq m_{min}.$
For infinitely many $n$ (all $n$ of the form $q^{e/2} (\sqrt{q} - 1)$ for any integer $e$),
there is a deterministic polynomial-time construction of an $\F_q$-linear code $\cC:\F_{q^m}^{\rho n} \to  \F_{q^m}^n$ of rate $\rho$ and relative distance $1 - \rho - O(\eps)$
that is $(1 - \rho - \eps, \ell, L)$-list-recoverable in time $\poly(n, L)$, returning a list that is contained in a subspace over $\F_q$ of dimension at most
\[ \inparen{\frac{\ell}{\eps}}^{2^{\log^*(mk)}}. \]
\end{theorem}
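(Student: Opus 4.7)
The plan is to follow the three-step outline already sketched in the text, using Lemma \ref{lem:almostthere}, Lemma \ref{lem:canonicalsubspace}, and Theorem \ref{thm:gk} as black boxes. First I would apply Lemma \ref{lem:almostthere} with agreement parameter $\alpha \geq \rho + \eps + 3/\sqrt{q}$ (where the $3/\sqrt{q}$ slack is absorbed into an additive $\eps$ using the hypothesis $q \geq \eps^{-2}$), inner list size $\ell$, and dimension $k = \rho n$. Choosing $s = O(\ell/\eps)$ inside that lemma yields an $\F_{q^m}$-linear AG code in which the messages consistent with the input lists form, viewed as elements of $\F_q^{mk}$, a single $(s-1, m)$-ultra-periodic $\F_q$-affine subspace. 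The main remaining task is then to cut this affine subspace down to a small set by passing to an $\F_q$-subcode whose messages lie in a suitably chosen canonical subspace.

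To do that, I would build a cascaded subspace design with parameters $r_0 = s-1$, $m_0 = m$, $\zeta_i = \eps/2^i$, $r_i = r_{i-1}^2/\zeta_i$, and $m_i$ chosen so that the $i$-th level is a $(r_{i-1}, r_i)$-strong subspace design in $\F_q^{m_{i-1}}$ of cardinality $m_i/m_{i-1}$, obtained from Theorem \ref{thm:gk}. The growth rate is iteratedly exponential, roughly $m_i \approx m_{i-1} \cdot q^{\sqrt{m_{i-1}}}$, so $T = O(\log^*(mk))$ levels suffice to reach $m_T \approx mk$; the mild mismatch between $m_T$ and $mk$ would be handled by shrinking the last one or two $m_i$'s and padding with zeros. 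Lemma \ref{lem:canonicalsubspace} then bounds the dimension of the intersection of the ultra-periodic affine space with the canonical subspace by $r_T = (s/\eps)^{2^T} = (\ell/\eps)^{2^{O(\log^*(mk))}}$, matching the list size claim.

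The step requiring the most care is verifying the two hypotheses of Theorem \ref{thm:gk}, namely $r_i \leq \zeta_i m_i / 4$ and $2r_i/\zeta_i \leq q^{\zeta_i m_i / (2 r_i)}$, at every level $i$. Since $m_i$ grows iteratedly exponentially in $m_{i-1}$ while $r_i$ and $\zeta_i^{-1}$ grow only iteratedly polynomially, the tight constraints are at $i = 0$, where both inequalities reduce up to constants to $m \gtrsim (\ell/\eps^2) \log_q(\ell/\eps)$, exactly the $m_{\min}$ threshold in the theorem; this is what forces the extra $\log_q(\ell/\eps)$ factor over the naive $\ell/\eps^2$. Once this check passes, the remaining conclusions are immediate: the canonical subspace has $\F_q$-codimension at most $\sum_i \zeta_i \, m_T \leq \eps m_T$, so the final code has $\F_q$-dimension at least $(1-\eps)mk$, which reparameterizes to rate $\rho$ after adjusting $\rho$ by a $(1-\eps)$ factor; the relative distance is inherited from the ambient AG code and is $1 - \rho - O(1/\sqrt{q} + g/n) = 1 - \rho - O(\eps)$ using $g \leq n/(\sqrt{q}-1)$; and the $\poly(n, L)$ list-recovery algorithm first runs the interpolation procedure underlying Lemma \ref{lem:almostthere} to produce an explicit description of the ultra-periodic affine space, then intersects it with the canonical subspace, whose explicit description was built in polynomial-time preprocessing via Theorem \ref{thm:gk}.
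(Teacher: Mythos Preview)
Your proposal is correct and follows essentially the same approach as the paper: the same invocation of Lemma~\ref{lem:almostthere}, the same cascaded subspace design with parameters $\zeta_i = \eps/2^i$, $r_i = r_{i-1}^2/\zeta_i$, $m_i \approx m_{i-1}\cdot q^{\sqrt{m_{i-1}}}$, the same reduction of the Theorem~\ref{thm:gk} hypotheses to the $i=0$ case (yielding the $m_{\min}$ bound), and the same rate, distance, and list-size accounting. The only cosmetic difference is that Lemma~\ref{lem:almostthere} technically asserts the list lies in one of $q^{O(mn)}$ possible ultra-periodic subspaces rather than a single one, but since the interpolation step pins down a specific such subspace in polynomial time this does not affect the argument.
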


We note that the distance of the code comes from the fact that it is a subcode of $C(m; k + 3g_e - 1)$, which has distance at least $n - (k + 2g - 1) = n - 2g - k + 1.$  In the above parameter regime, the genus $g_e$ satisfies $g_e \leq n/(r-1) = n/(\sqrt{q} -1 ) = O(\eps n)$.
Thus, the relative distance of the final code is at least $(n - 2g_e - k + 1)/n \geq 1 - O(\eps) - \rho.$

\paragraph{Step 3. Concatenating to obtain $\F_q$-linear codes over $\F_q$.}
Theorem \ref{thm:fqlinear} gives codes over $\F_{q^m}$ which are $\F_q$-linear.  For our purposes, to prove Theorem~\ref{thm:ag}, we require codes over $\F_q$ which are $\F_q$-linear.
Thus, we will concatenate these codes with random $\F_q$-linear codes from Corollary~\ref{cor:list-rec-random} and apply Lemma~\ref{lem:concat-list-rec} about the concatenation of list-recoverable codes.  In more detail, we choose parameters as follows.

Let $\eps > 0$ and let $\epsilon' = \epsilon/2$, and choose any integer $\ell$ and any block length $N$.  Fix a constant $c$ and parameters $m$ and $e$ which will be determined below.
Choose an even prime power $q$ so that
\[ q \geq \max \inset{ \ell^{c/\epsilon}, \epsilon^{-c} }.\]
Let $\cC_{in}$ be a random $q$-ary linear code of rate $\rho_{in} = 1 - \eps'$ of length $m/\rho_{in}$.
By Corollary~\ref{cor:list-rec-random}, there exists an $\F_q$ linear code $\cC_{in}$ with rate $\rho_{in} = 1 - \epsilon'$ and block length $m/\rho_{in}$ which is $(\alpha_{in}, \ell_{in}, L_{in})$-list-recoverable, for $\alpha_{in} = \epsilon'/2$, $\ell_{in} = \ell$, and $L_{in} = q^{2c\ell/\epsilon'}$.
We note that we can choose $c$ large
 enough to ensure that the hypothesis of Corollary~\ref{cor:list-rec-random} hold.

Let $\cC_{out}$ be the codes from Theorem~\ref{thm:fqlinear}, instantiated with rate $\rho_{in} = 1 - \epsilon'$, $\eps \gets \eps'/2$ and $\ell \gets L_{in}$.
With these parameters, we will get a code over $\F_{q^m}$ of length $n = q^{e/2}(\sqrt{q} - 1)$ which is $(\alpha_{out}, L_{in}, L_{out})$-list-recoverable,
where
\begin{align*}
L_{out} &= \exp_q\inparen{ (L_{in}/\epsilon')^{2^{\log^*(mk)}}} \\
&= \exp_q \inparen{ \inparen{ \frac{ q^{2c\ell/\eps'}}{\eps'} }^{2^{\log^*(mk)  }}}
\end{align*}
and where
\[ \alpha_{out}= 1 - \rho_{in} - \eps' =  \eps'/2. \]

Let $m_{\min}$ be as in Theorem~\ref{thm:fqlinear}, so that
\[ m_{\min} = O( L_{in} \log_q( L_{in}/\epsilon') / (\epsilon')^2)  = O\inparen{ \frac{q^{c\ell/\epsilon'} c \ell} {(\epsilon')^3}}.\]
We will choose $m$ so that
\begin{equation}
\label{eq:choosem}
m_{\min} \leq m \leq q \cdot m_{\min}.
\end{equation}
  Notice that, given the definition of $m_{\min} = O( q^{c\ell/\epsilon'} c\ell / (\epsilon')^3 )$, choosing $m$ slightly larger than $m_{\min}$---as large as $q \cdot m_{\min}$---amounts to replacing the constant $c$ with $c+1$.
  Thus, the choices of $m$ and $c$ (subject to \eqref{eq:choosem}) will not affect the list-recoverability of $\cC_{out}$, but they will affect the block length of the concatenated code.

  Formally, Lemma~\ref{lem:concat-list-rec} implies that the concatenated code has rate $\rho_{in} \cdot \rho_{out} = (1 - \eps')^2 \geq 1 - \eps$, and is $(\alpha_{in}\alpha_{out}, \ell, L_{out} )$-list-recoverable.  Here, we have
  \[ \alpha_{in}\alpha_{out}= (\eps')^2/4 = \Omega(\eps^2), \]
  which is what is claimed in Theorem~\ref{thm:ag}.
  The output list size claimed in Theorem~\ref{thm:ag} follows from the choice of $m$ and our guarantee on $L_{out}$.
  We note that the concatenated code will have message length $K = mk$, and so we write $\log^*(mk) = \log^*(K)$.

  Finally, we choose $m$ and $e$.  At this point, the choice of these parameters (subject to \eqref{eq:choosem}) will not affect that list-recoverability of the concatenated code, but they do control the block length of the code and the running time of the decoding algorithm.  The block length  is
  \[ \frac{m}{\rho_{in}} \cdot q^{e/2} (\sqrt{q} - 1). \]
    In order to prove that we can come up with such codes for all sufficiently large block lengths $N$, as required in the statement of Theorem~\ref{thm:ag}, we must show that for all sufficiently large $N$, we can choose $m$ satisfying \eqref{eq:choosem} and $e$ so that
\[ N = \frac{m}{\rho_{in}} \cdot q^{e/2} (\sqrt{q} - 1 ). \]
That is, we want to find an integer $e$ so that
\[ \frac{N \cdot (1 - \epsilon/2) }{q^{e/2}(\sqrt{q} - 1)} \in [ m_{\min}, q\cdot m_{\min}]. \]
However, we have chosen this window for $m$ to be large enough so that such an $e$ exists as long as $N$ is sufficiently large (in terms of $q, \ell, \epsilon$).  More precisely, for some large enough constant $C$, we require
\[ N \geq q^{C \ell / \epsilon },\]
which is our choice of $N_0$ in Theorem~\ref{thm:ag}.

Now we verify the running time of the list-recovery algorithm.  The outer code $C_{out}$ can be list-recovered in time $\poly(n,L_{out})$ by Theorem~\ref{thm:fqlinear}.  The inner code can be list-recovered by brute force in time $q^{O(m)} = \exp_q\inparen{O\inparen{q^{2(c+1)\ell / \eps} \ell/\eps^3 }} = \poly(L_{out})$.  Lemma~\ref{lem:concat-list-rec} implies that the final running time is $\poly(N, L)$, where $L = L_{out}$ is the final list size and $N$ is the block length of the concatenated code.


\end{document}